\theoremstyle{plain}
\newtheorem{proposition}{Proposition}[section]
\theoremstyle{remark}
\tikzset{boximg/.style={remember picture,red,thick,draw,inner sep=0pt,outer sep=0pt}}
\pgfplotsset{compat=1.14}
\DeclareMathAlphabet{\mathcal}{OMS}{cmsy}{m}{n}
\SetMathAlphabet{\mathcal}{bold}{OMS}{cmsy}{b}{n}
\providecommand{\abs}[1]{\lvert#1\rvert}
\long\def\comment#1{}
\newcommand{\blue}[1]{\textcolor{blue}{#1}}
\newcommand{\rh}[1]{{\bf{{\blue{{RH --- #1}}}}}}
\edef\RestoreEndlinechar{%
    \endlinechar=\the\endlinechar\relax
}
\tikzset{bubble node/.append style={
        draw=none, opacity=0.5
    }
}
\newcolumntype{L}[1]{>{\raggedright\let\newline\\\arraybackslash\hspace{0pt}}p{#1}}
\newcommand{\thickhline}{%
    \noalign {\ifnum 0=`}\fi \hrule height 1.5pt
    \futurelet \reserved@a \@xhline
}
\tikzset{pics/fake box/.style args={
#1 with dimensions #2 and #3 and #4}{
code={
\draw[ultra thin,fill=#1]  (0,0,0) coordinate(-front-bottom-left) to
++ (0,#3,0) coordinate(-front-top-right) --++
(#2,0,0) coordinate(-front-top-right) --++ (0,-#3,0) 
coordinate(-front-bottom-right) -- cycle;
\draw[ultra thin,fill=#1] (0,#3,0)  --++ 
 (0,0,#4) coordinate(-back-top-left) --++ (#2,0,0) 
 coordinate(-back-top-right) --++ (0,0,-#4)  -- cycle;
\draw[ultra thin,fill=#1!80!black] (#2,0,0) --++ (0,0,#4) coordinate(-back-bottom-right)
--++ (0,#3,0) --++ (0,0,-#4) -- cycle;
}
}}
\tikzset{pics/empty fake box/.style args={
#1 with dimensions #2 and #3 and #4}{
code={
\draw[ultra thin,fill=#1]  (0,0,0) coordinate(-front-bottom-left) to
++ (0,#3,0) coordinate(-front-top-right) --++
(#2,0,0) coordinate(-front-top-right) --++ (0,-#3,0) 
coordinate(-front-bottom-right) -- cycle;
\draw[ultra thin,fill=#1] (0,#3,0)  --++ 
 (0,0,#4) coordinate(-back-top-left) --++ (#2,0,0) 
 coordinate(-back-top-right) --++ (0,0,-#4)  -- cycle;
\draw[ultra thin,fill=#1!80!black] (#2,0,0) --++ (0,0,#4) coordinate(-back-bottom-right)
--++ (0,#3,0) --++ (0,0,-#4) -- cycle;
}
}}
\definecolor{electriclavender}{rgb}{0.96, 0.73, 1.0}
\definecolor{grannysmithapple}{rgb}{0.66, 0.89, 0.63}
\definecolor{aliceblue}{rgb}{0.94, 0.97, 1.0}
\definecolor{steelblue}{rgb}{0.27, 0.51, 0.71}
\definecolor{amber}{rgb}{1.0, 0.49, 0.0}
\definecolor{caribbeangreen}{rgb}{0.0, 0.8, 0.6}
\definecolor{ash}{rgb}{0.7, 0.75, 0.71}
\definecolor{brightpink}{rgb}{1.0, 0.0, 0.5}
\newcommand{\sblue}[1]{\textcolor{steelblue!75!black}{#1}}
\newcommand{\mzdcm}[1]{{\bf{{\sblue{{MZD --- #1}}}}}}
\newcommand\vtheta{{\bm \theta}}
\newcommand\vvartheta{{\bm \vartheta}}
\newcommand\vbeta{{\bm \beta}}
\title{\textbf{Test-Time Training Can Close the Natural Distribution Shift Performance Gap in Deep Learning Based Compressed Sensing}}
\author{}
\date{}
\begin{document}

\begin{center}

{\bf{\LARGE{
Test-Time Training Can Close the Natural Distribution Shift Performance Gap in Deep Learning Based Compressed Sensing
}}}

\vspace*{.2in}

{\large{
\begin{tabular}{cccc}
Mohammad Zalbagi Darestani$^{\ast}$,
Jiayu Liu$^{\dagger}$, 
and 
Reinhard Heckel$^{\dagger,\ast}$
\end{tabular}
}}

\vspace*{.05in}

\begin{tabular}{c}
$^\ast$Dept. of Electrical and Computer Engineering, Rice University\\
$^\dagger$Dept. of Electrical and Computer Engineering, Technical University of Munich
\end{tabular}

\vspace*{.1in}

\today

\vspace*{.1in}

\end{center}

\begin{abstract}
    Deep learning based image reconstruction methods outperform traditional methods. 
    However, neural networks suffer from a performance drop when applied to images from a different distribution than the training images. 
    For example, a model trained for reconstructing knees in accelerated magnetic resonance imaging (MRI) does not  reconstruct brains well, even though the same network trained on brains reconstructs brains perfectly well. 
    Thus there is a distribution shift performance gap for a given neural network, defined as the difference in performance when training on a distribution $P$ and training on another distribution $Q$, and evaluating both models on $Q$. 
    In this work, we propose a domain adaptation method for deep learning based compressive sensing that relies on self-supervision during training paired with test-time training at inference. We show that for four natural distribution shifts, this method essentially closes the distribution shift performance gap for state-of-the-art architectures for accelerated MRI. 
\end{abstract}

\section{Introduction}\label{sec:intro}
Deep learning methods enable fast and accurate image reconstruction and outperform traditional methods on a variety of imaging tasks~\citep{dong2014learning,jin2017deep,zhang2017beyond,sriram2020end,rivenson2018phase,jalal2021robust}. 
Performance is typically measured as in-distribution performance: 
A dataset is split into test and training sets, and a method trained on the training set is evaluated on the test set.

In practice, however, the  train and test distributions are usually different: For example, we train a network on data from one hospital, and apply the network to data from a different hospital. Or we train on data acquired with one scanner type and acquisition mode, and apply it to a different scanner type or acquisition mode.

Deep learning imaging methods perform significantly worse under such distribution shifts. For accelerated MRI, a medical imaging technique, deep learning methods incur a significant accuracy drop when shifting from one distribution to another, as shown for three natural distribution shifts by~\citet{darestani2021measuring} and for SNR changes by~\citet{knoll2019assessment}. 

The part of this accuracy drop that can be overcome in principle can be measured by the \emph{distribution shift performance gap}:
Suppose $P$ and $Q$ denote the train and test distributions. We define the distribution shift performance gap as the reconstruction accuracy (measured by a standard metric, e.g., the SSIM score) of training on $Q$ and testing on $Q$ minus the reconstruction accuracy of training on $P$ and testing on $Q$..  


In this paper, we propose a novel domain adaptation method for deep learning based compressive sensing, and show that it overcomes the gap caused by four natural distribution shifts in accelerated MRI. Our approach consists of two parts: (1) including self-supervision during the supervised training stage of deep learning models, and (2) performing self-supervised test-time training for each new test sample at inference.

    We show that our method works with two well-known network architectures: the baseline U-Net~\citep{ronneberger2015u} and the state-of-the-art end-to-end variational network~\citep{sriram2020end}. 
    We evaluate robustness under four natural distribution shifts, illustrated in Figure~\ref{fig:shifts}:
    (1) anatomy shift, 
    (2) dataset shift where the training set comes from a different hospital than the test set, 
    (3) modality shift where the acquisition mode changes, and 
    (4) acceleration shift where the acceleration factor changes. 
    For the U-Net, our method closes the distribution shift performance gap by 98.6, 87.4, 96.3\%, 97.4\% (see Table~\ref{tab:gap}) which in each case yields a significant increase in image quality. 
    

\begin{table}[t]
\setlength{\tabcolsep}{15pt}
\centering
\begin{tabular}{ccccc}
  & anatomy shift & dataset shift & modality shift & \hspace{-17pt} acceleration shift 
  \\[5pt]
  & \footnotesize knee & \footnotesize fastMRI & \footnotesize AXT2 & \hspace{-30pt} \footnotesize 4x \\
  \rule{0pt}{3ex}
  $P$ &
  \scalebox{1}[-1]{\includegraphics[height=0.15\textwidth,valign=m]{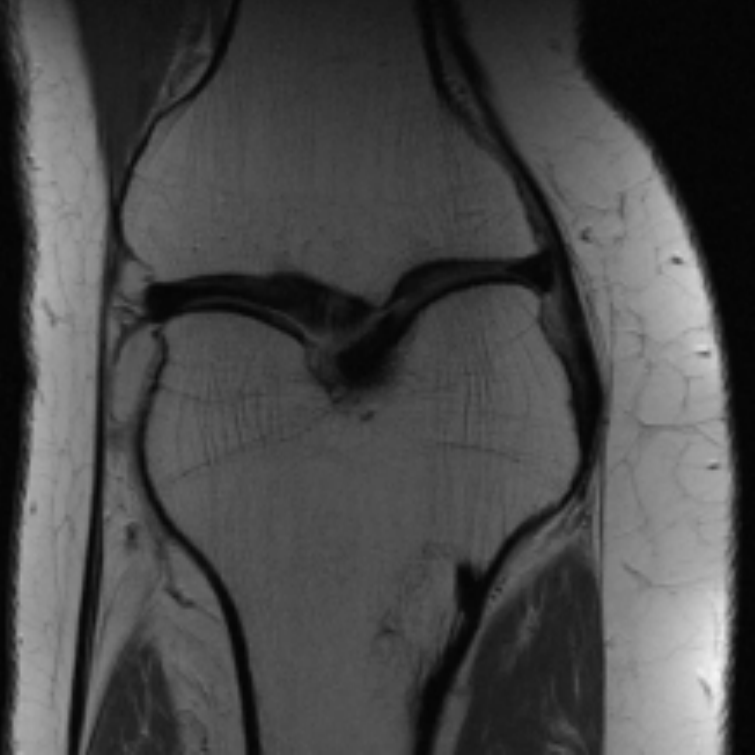}} &
  \scalebox{1}[-1]{\includegraphics[height=0.15\textwidth,valign=m]{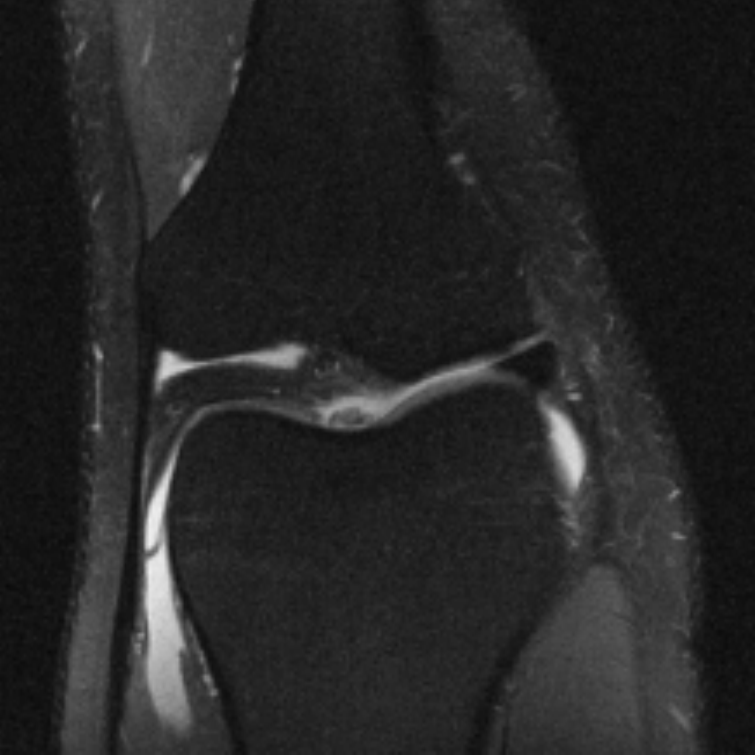}} &
  \scalebox{1}[-1]{\includegraphics[height=0.15\textwidth,valign=m]{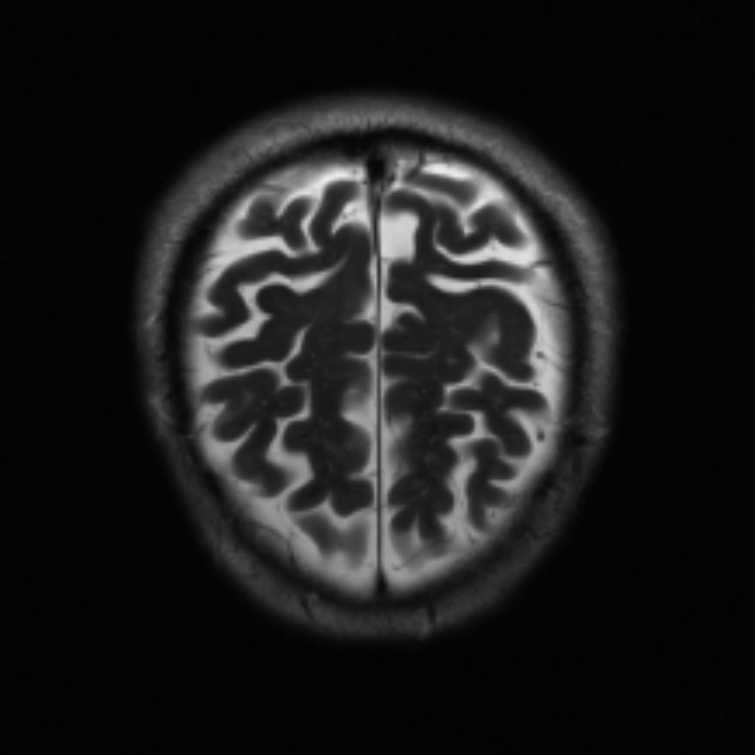}} & \hspace{-30pt}
  \scalebox{1}[-1]{\includegraphics[height=0.15\textwidth,valign=m]{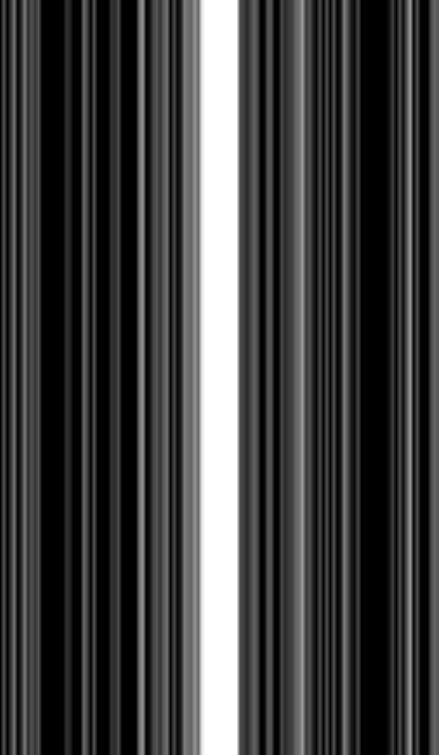}}\\
  \rule{0pt}{3ex}
  & \footnotesize brain & \footnotesize Stanford & \footnotesize AXT1PRE & \hspace{-30pt} \footnotesize 2x \\
  \rule{0pt}{3ex}
  $Q$ &
  \scalebox{1}[-1]{\includegraphics[height=0.15\textwidth,valign=m]{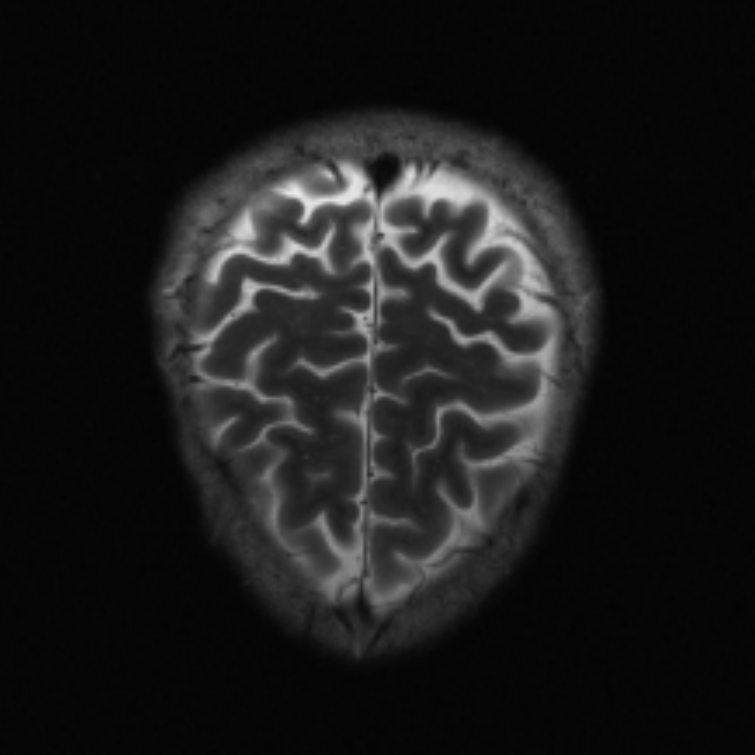}} &
  \scalebox{1}[-1]{\includegraphics[height=0.15\textwidth,valign=m]{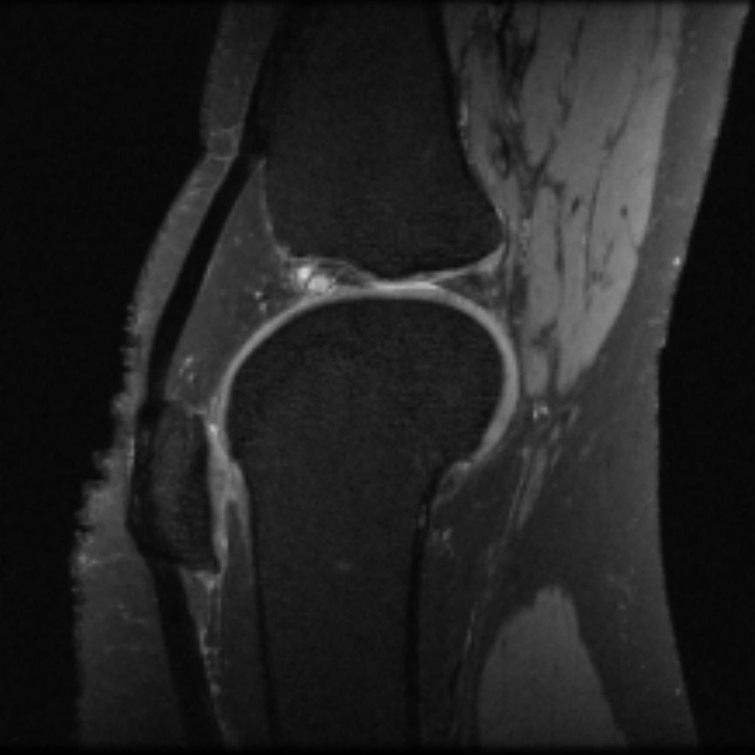}} &
  \scalebox{1}[-1]{\includegraphics[height=0.15\textwidth,valign=m]{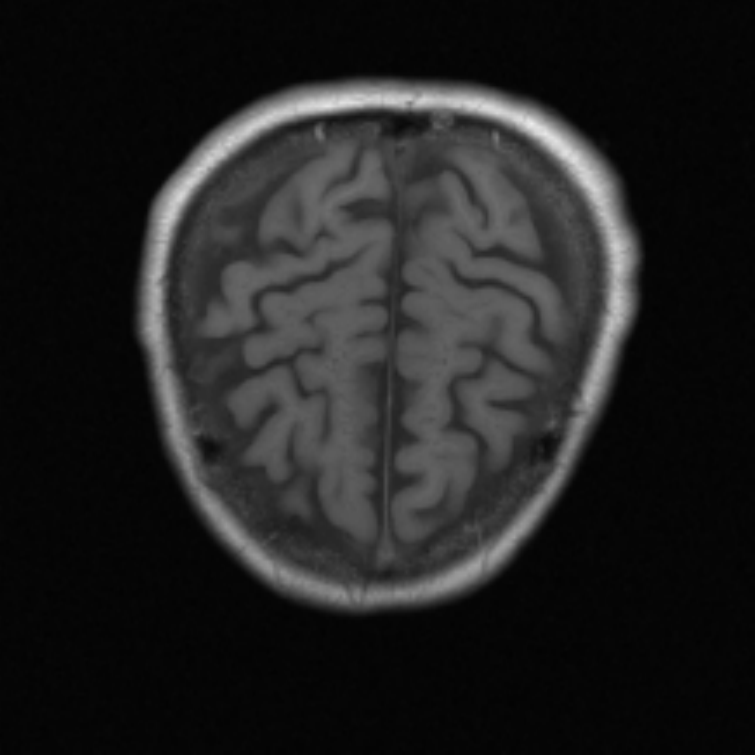}} & \hspace{-30pt}
  \scalebox{1}[-1]{\includegraphics[height=0.15\textwidth,valign=m]{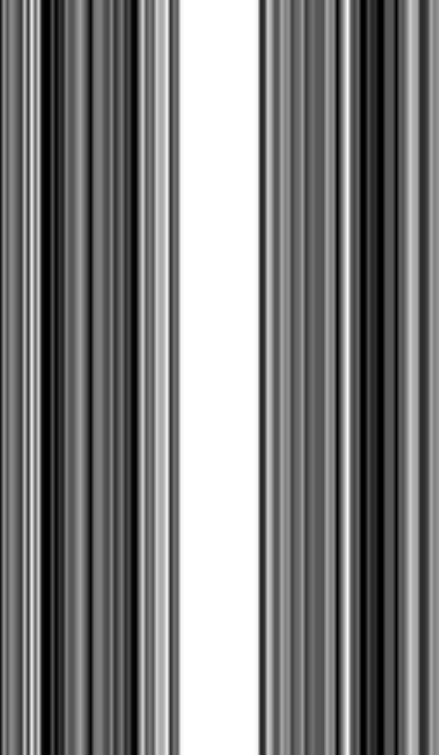}}\\
\end{tabular}
\captionof{figure}{
We study our domain adaptation method under four natural distribution shifts: anatomy, dataset, modality, and acceleration shifts. $P$ and $Q$ are the train and test domains.
}
\vspace{-10pt}
\label{fig:shifts}
\end{table}

\begin{table*}[t]
\centering
\begin{adjustbox}{width=0.95\textwidth}
\begin{tabular}{l|c|c|c|c|c}
\toprule 
\multicolumn{1}{c}{setup} & \multicolumn{1}{c}{P: knee}  & \multicolumn{1}{c}{P: fastMRI}  & \multicolumn{1}{c}{P: AXT2} & \multicolumn{1}{c}{P: 4x} & \multicolumn{1}{c}{P: fastMRI}  \\
\multicolumn{1}{c}{} & \multicolumn{1}{c}{Q: brain} & \multicolumn{1}{c}{Q: Stanford} & \multicolumn{1}{c}{Q: AXT1PRE} & \multicolumn{1}{c}{P: 2x} & \multicolumn{1}{c}{Q: adv-filt fastMRI} \\
\hline
    train on Q test on Q                & 0.9187 & 0.7164 & 0.9026 & 0.9004 & 0.6865 \\ 
    train on P test on Q                & 0.8521 & 0.6830 & 0.8506 & 0.8385 & 0.6861 \\ \hline 
    distribution shift performance gap  & 0.0666 & 0.0334 & 0.0520 & 0.0619 & 0.0004 \\ 
\thickhline
    train on Q test on Q + TTT          & 0.9234 & 0.7268 & 0.9086 & 0.9192 & 0.6827  \\
    train on P test on Q + TTT          & 0.9225 & 0.7226 & 0.9067 & 0.9176 & 0.6806  \\ \hline 
    distribution shift performance gap  & 0.0009 & 0.0042 & 0.0019 & 0.0016 & 0.0021  \\
    \thickhline
    fraction of gap closed by TTT & 98.6\% & 87.4\% & 96.3\% & 97.4\% & -- \\
\bottomrule
\end{tabular}
\end{adjustbox}
\caption{\textbf{Using self-supervision with test-time training (TTT) closes 99\%, 87\%, 96\%, and 97\% of the distribution shift performance gap for anatomy, dataset, modality, and acceleration distribution shifts.} The first three rows are SSIM scores for U-Net when trained in a supervised manner. The second three rows are SSIM scores for U-Net when self-supervision is included during training and then TTT is applied at the inference. 
The adversarially-filtered shift is an example where the distribution shift performance gap is close to zero (more than an order of magnitude smaller than for the other shifts), and thus TTT is not having an impact here (and other methods are also not expected to have an impact here). 
}
\vspace{-10pt}
\label{tab:gap}
\end{table*}

\subsection{Prior work}
A series of recent influential works in image classification has shown that image classifiers often incur a significant performance drop under natural distribution shifts~\citep{recht2019imagenet,taori2020measuring,hendrycks2020many,koh2021wilds,miller2021accuracy}.
For image reconstruction, \citet{darestani2021measuring} demonstrated that reconstruction methods for accelerated MRI (even un-trained methods such as $\ell_1$-minimization tuned on the train distribution) also suffer a significant accuracy drop when shifting from one distribution to another. 
\citep{knoll2019assessment,johnson2021evaluation} also observed a performance drop under distribution shifts in MRI. 
Consequently, several works, mainly in image classification, made efforts to overcome the distribution shift performance gap:

\paragraph{Robust optimization.} Distributionally-robust optimization learns a model by minimizing a loss with a robustness notion 
\citep{duchi2019variance,duchi2020distributionally,duchi2021learning}. 
Robust optimization methods yield a robustness gain on synthetic distribution shifts, and can yield a small gain on some natural shifts~\citet{koh2021wilds}, 
but it is unclear whether they yield significant gains on natural shifts. 

\paragraph{Data-driven interventions.} Training on larger datasets~\citep{mahajan2018exploring,yalniz2019billion} and data augmentations~\citep{devries2017improved,geirhos2018imagenet,zhang2018mixup,engstrom2019exploring,hendrycks2019augmix,yun2019cutmix,xie2020adversarial} are popular data-driven robustness interventions. 
For classification, \citet{taori2020measuring} found training with more data to marginally improve model robustness to natural distribution shifts. \citet{fabian_data_2021} found a small improvement by using data augmentations, and \citet{desai2021vortex} recently also proposed a data augmentation scheme that yields robustness gains for accelerated MRI. 

\paragraph{Domain adaptation.} 
Fine-tuning-based domain adaptation pre-trains classifiers on an auxiliary distribution $R$, then fine-tunes on a train distribution $P$, and evaluates on a test distribution $Q$ to measure the out-of-distribution (OOD) generalization performance~\citep{sharif2014cnn,donahue2014decaf,kornblith2019better}. 
Zero-shot learning methods, pre-train classifiers on $R$ and perform zero-shot inference on $Q$~\citep{radford2021learning}.
A third group of domain adaptation methods, most closely related to ours, train classifiers on $P$ 
and perform per-instance test-time training (TTT) at inference~\citep{sun2020test,liu2021ttt++,wang2021tent}.  
However, \citet{miller2021accuracy} has shown that for several natural distribution shifts, zero-shot methods offer only marginal robustness improvements and the other domain adaptation methods do not offer any improvement.

Domain adaptation methods have also been proposed for improving robustness in imaging. 
For image denoising, \citet{mohan2021adaptive} proposed GainTuning, which performs TTT at inference only on a few scaling factors of a neural network. We studied a version of this method tailored to MRI, and found no improvement in robustness for our problems.
For accelerated MRI reconstruction, \citet{liu2021universal} proposed a method 
that assumes access to examples from the target domain, which we do not have here.

Finally, \citet{yaman2021zero}  proposed a zero-shot learning method (ZS-SSL) for accelerated MRI and demonstrated that it can be used as a TTT method as well. Specifically, \citet{yaman2021zero} applied ZS-SSL to a pre-trained (on one anatomy in a fully supervised manner) model and observed that it improves model robustness under anatomy shift. Our TTT approach differs from ZS-SSL in that ZS-SSL relies on creating a synthesized dataset by repeatedly splitting the given under-sampled measurement into training and validation measurements, whereas our approach is to include a self-supervised loss in pre-training and then performing TTT with respect to that self-supervised loss. 
See the supplement for further comparison.

\section{Problem setup}
\label{sec:prob_statement}

We consider the problem of reconstructing an image from undersampled measurements. We focus on accelerated multi-coil magnetic resonance imaging (MRI), but our method also applies to other compressive sensing image reconstruction problems, for example to computed tomography. 
For such imaging problems, deep learning methods perform best. 
In our setup, the network is trained on one distribution (e.g., knees) and is tested on another distribution (e.g., brains).

\subsection{Compressive sensing}

Our goal is to reconstruct an image $\vx^* \in \mathbb{C}^N$ from undersampled measurements
\begin{align}
\label{eq:fwmap}
\vy = \mA \vx^\ast + \text{noise} \in \mathbb{C}^M,
\end{align}
where the number of measurements, $M$, is typically lower than the dimension of the image, $N$. We are given the measurement matrix $\mA$. 
We focus on accelerated MRI, in which the measurements, often called $k$-space measurements, are obtained as
\[
\vy_i = \mM \mF \mS_i \vx^* + \text{noise} \in \mathbb{C}^{M_c}, 
\quad i=1,\ldots, n_c.
\]
Here, $n_c$ denotes the number of radiofrequency coils, $\mS_i$ is a complex-valued position-dependent coil sensitivity map, that is applied through element-wise multiplication to the image $\vx^*$, $\mF$ is the 2D discrete Fourier transform, and $\mM$ is a mask (a diagonal matrix with ones and zeros on its diagonal) that implements under-sampling of $k$-space data.
The measurements $\vy_i$ and matrices can be organized so that the measurement model has the form~\eqref{eq:fwmap}. 


The MRI datasets we work with (see Section~\ref{sec:exp}) consist of
pairs of measurements and corresponding reference image $\{(\vx_j,\vy_j)\}$. 
The datasets are constructed from fully-sampled MRI data (i.e., taken with an identity mask $\mM = \mI$). 
The reference images are obtained by reconstructing the coil images from each full coil measurement as $\vx_i = \inv{\mF} \vy_i$ and then combining them via the root-sum-of-squares (RSS) algorithm to a single image:
$
    \vx = \sqrt{\sum_{i=1}^{n_c} \abs{\vx_i}^2}.
$
Here, $\abs{\cdot}$ and $\sqrt{\cdot}$ denote element-wise absolute value and squared root operations. 
The under-sampled $k$-space measurements (for acceleration) 
are obtained by applying a standard 1D random mask (random vertical lines in the frequency domain), which is the default in the fastMRI challenge. We consider 4x acceleration throughout the paper, the acceleration factor considered in the fastMRI challenge~\citep{knoll2020advancing,muckley2020state}. 

\subsection{Image reconstruction with neural networks}\label{sec:methods}
We study our domain adaptation method for two neural networks, a standard baseline method (U-net) and the state-of-the-art reconstruction method (VarNet). 

U-Net~\citep{ronneberger2015u} is a convolutional network which for MRI is trained end-to-end to map a least-squares reconstruction obtained from a measurement $\vy$ to a clean image $\vx$ by minimizing the loss
$
\mathcal{L}(\vy,\vx,\vtheta) = \norm[1]{ \vx- f_\vtheta(\pinv{\mA}\vy)}
$
over a training dataset $\{(\vx_1,\vy_1),\ldots,(\vx_n,\vy_n)\}$ \cite{jin2017deep}. 
Here, $f_\vtheta$ is a U-Net with parameters $\vtheta$.  

VarNet is a variational network that gives state-of-the-art performance~\citep{sriram2020end}. 
Similar to U-Net, VarNet is trained to map an under-sampled measurement to a clean image, but contrary to U-net, it contains data consistency blocks. 


\subsection{Problem statement: Overcoming the distribution shift performance gap}

Neural networks for MRI are often trained on the fastMRI training set~\citep{zbontar2018fastmri} and evaluated on the fastMRI test set. This measures in-distribution performance since the sets are constructed by collecting data and splitting it in train and test sets. 

This evaluation mode is not reflective of performance in practice, where we typically train on data acquired in one setup (anatomy, scanner type, acquisition mode, etc.) and apply the network in a different setup 
on data from another anatomy, scanner, or acquisition mode. All those changes introduce distribution shifts. 

Under anatomy shifts (training on knee and testing on brain) and dataset shifts (training on NYU data~\citep{zbontar2018fastmri} and testing on Stanford data~\citep{epperson2013creation}), different methods lose a similar and significant amount in image quality~\citet{darestani2021measuring}. 
This loss comprises two parts: one is due to variations in difficulty of the datasets; e.g., images with finer details are harder to reconstruct and result in lower scores. This part cannot be overcome by better algorithms or even by having access to the test distribution. The second part is due to a missfit of algorithm and test distribution, this can in principle be overcome if we had access to the test distribution. We call this second gap the \textbf{distribution shift performance gap}. 

The goal of this work is to close the distribution shift performance gap without having access to the test distribution. We assume that we can train a network on a training distribution $P$, but at inference, we are only given a measurement $\vy$ from a test distribution $Q$, without any other information about the test distribution.


\section{The distribution shift performance gap for four natural distribution shifts}

In this section, we introduce the four natural distribution shifts we consider and measure the corresponding distribution shift performance gap. 
We also include a fifth distribution shift for which the distribution shift performance gap is close to zero, 
even though we observe a performance drop. 

We then show that the models we consider have the ability to close the performance gap, by training a network on data from both distributions. This show that the networks we consider can perform well on both distributions simultaneously. Of course, in practice we cannot train on the test distribution, since we do not have access to examples from the test distribution.


\subsection{Natural distribution shifts considered}
We consider four natural distribution shifts illustrated in Figure~\ref{fig:shifts}, and one artificial distribution shift.

\paragraph{Anatomy shift.} 
We consider an anatomy shift from training on fastMRI knee images to testing on fastMRI brain images. 

\paragraph{Dataset shift.} 
We consider a dataset shift from training on fastMRI knee images (collected at NYU) to testing on Stanford knee images~\citep{epperson2013creation}. The main differences are: (1) The Stanford data is constructed via volumetric 3D recording (fastMRI scans are 2D), (2) The Stanford set contains samples with a lower frequency resolution than fastMRI, and (3) the slice thickness is 5 times smaller in the Stanford set.

\paragraph{Modality shift.} 
A modality shift occurs when the acquisition mode of training images is different than the one for test images. A modality shift is subtle, in that it occurs within an anatomy (say brain) and only the contrast of the images changes. 
We consider a modality shift from AXT2 to AXT1PRE images (see Figure~\ref{fig:shifts} for an example, and the supplement for the change in pixel intensity distribution).

\paragraph{Acceleration shift.} 
We consider an acceleration shift from training on 4x accelerated measurements to testing on 2x accelerated measurements. 


For each of these natural distribution shifts, we compute the distribution shift performance gap for a given model as the gap in SSIM (an image comparison metric) when the training domain changes. Table~\ref{tab:gap} shows the distribution shift performance gap for U-Net given each of the distribution shifts explained above. 

We note that there are distribution shifts for which the distribution shift performance gap is essentially zero, and thus no robustness intervention can be helpful. 
An example is an {\bf adversarially-filtered shift} which occurs when the target distribution contains hard-to-reconstruct samples from the original domain. Specifically, models trained on the fastMRI dataset achieve a significantly lower score on fastMRI-A, which contains hard-to-reconstruct samples~\citep{darestani2021measuring}. 
However, models achieve a low score on fastMRI-A simply because those samples contain finer details.  
This can be seen in Table~\ref{tab:gap} where the best performance obtainable by models is 0.6865 and when models are trained on fastMRI instead of fastMRI-A, they still achieve that performance on fastMRI-A. This example shows that a performance degradation on the target domain does not necessarily translate into a distribution shift performance gap as defined above.


\subsection{Networks can close the distribution shift performance gap}


Before describing our domain adaptation method, we investigate whether models are capable of achieving close-to-zero performance gap. To this end, we assume access to data from distribution $Q$ and train a U-Net on the mixture distribution of $P$ and $Q$, for each distribution shift introduced in the previous section, and evaluate the models on the two distributions. 
The results are shown in Figure~\ref{fig:trained_on_mixture}, which depicts performance as a function of the mixture coefficient, which indicates the proportion of data from distribution $Q$. 

The distribution shift performance gap is captured by the difference in the vertical direction between points with mixture coefficient $1.0$ and $0.0$. For the natural distribution shifts we study,
there is a significant distribution shift performance gap, while for adversarially-filtered shift, the gap is relatively small.

We observe an approximately vertical section intersected with an approximately horizontal section at a relatively sharp angle. This shows that, when more data from the target distribution $Q$ is added to the training set, performance on $Q$ increases while performance on $P$ does not degrade. 
As a consequence, the performance gap is roughly closed when the model is trained on the mixture distribution with the mixture coefficient at the intersection point. 

\begin{table}[t]
\setlength{\tabcolsep}{-2pt}
\centering
\begin{adjustbox}{width=1\textwidth}
\begin{tabular}{ccccc}
  \footnotesize Anatomy shift & \footnotesize Dataset shift & \footnotesize Modality shift & \footnotesize Acceleration shift & \footnotesize  Adversarially-filtered shift \\
  \begin{tikzpicture}
\begin{axis}[
    xmin = 0.72, xmax = 0.885,
    ymin = 0.84, ymax = 0.96,
    xtick distance = 0.05,
    ytick distance = 0.025,
    yticklabel style={xshift=1.5pt},
    every tick label/.append style={font=\tiny},
    grid = both,
    minor tick num = 1,
    major grid style = {lightgray!25},
    minor grid style = {lightgray!25},
    width = 0.27\textwidth, 
    height = 0.2\textwidth, 
    xlabel = {SSIM on $P$}, 
    ylabel = {SSIM on $Q$}, 
    label style={font=\scriptsize},
    legend cell align = {left},
    legend pos = south west,
]
\addplot+[
    mark size=1.5pt, only marks, steelblue,mark options={solid,steelblue,fill=steelblue}, 
    point meta = explicit symbolic,
    nodes near coords,
    visualization depends on = {\thisrow{theta} \as \theta},
    visualization depends on = {\thisrow{r} \as \r},
    every node near coord/.append style = {
        xshift = {\r * cos(\theta)},
        yshift = {\r * sin(\theta)},
        font = \scriptsize
    },
    error bars/.cd, x dir=both, x explicit, y dir=both, y explicit,] 
    table [x = {data_p_mean}, 
        y = {data_q_mean}, 
        x error expr = {\thisrow{data_p_std}*2}, 
        y error expr = {\thisrow{data_q_std}*2},
        meta = {meta}]{
        data_p_mean data_q_mean data_p_std data_q_std meta theta r
0.84303 0.85891 0.00233 0.00660 0.0 -40 10
0.84374 0.90119 0.00186 0.00268 0.05 -50 13
0.84350 0.90614 0.00287 0.00322 0.1 -40 10
0.84016 0.91096 0.00368 0.00284 0.25 -10 10
0.83255 0.91251 0.00599 0.00429 0.5 -100 12
0.83036 0.91863 0.00101 0.00050 0.75 5 7
0.81650 0.91860 0.00211 0.00089 0.9 60 1
0.80233 0.91822 0.00443 0.00097 0.95 180 5
0.73329 0.91971 0.00293 0.00051 1.0 90 1      
    }; 
\addplot +[mark=none,steelblue] table[x=x,y=y]{./files/curve_anatomy_shift.txt};
\end{axis}
\end{tikzpicture} &
\begin{tikzpicture}
\begin{axis}[
    xmin = 0.70, xmax = 0.735,
    ymin = 0.68, ymax = 0.745,
    xtick distance = 0.01,
    ytick distance = 0.01,
    yticklabel style={xshift=1.5pt},
    every tick label/.append style={font=\tiny},
    grid = both,
    minor tick num = 1,
    major grid style = {lightgray!25},
    minor grid style = {lightgray!25},
    width = 0.27\textwidth, 
    height = 0.2\textwidth, 
    xlabel = {SSIM on $P$}, 
    ylabel = {}, 
    label style={font=\scriptsize},
    legend cell align = {left},
    legend pos = south west,
]
\addplot+[
    mark size=1.5pt, only marks, steelblue, mark options={solid,steelblue,fill=steelblue}, point meta = explicit symbolic,
    nodes near coords,
    visualization depends on = {\thisrow{theta} \as \theta},
    visualization depends on = {\thisrow{r} \as \r},
    every node near coord/.append style = {
        xshift = {\r * cos(\theta)},
        yshift = {\r * sin(\theta)},
        font = \scriptsize
    },error bars/.cd, x dir=both, x explicit, y dir=both, y explicit,] 
    table [x = {data_p_mean}, 
        y = {data_q_mean}, 
        x error expr = {\thisrow{data_p_std}*2}, 
        y error expr = {\thisrow{data_q_std}*2},
        meta = {meta}]{
        data_p_mean data_q_mean data_p_std data_q_std meta theta r
        0.72878 0.69247 0.00194 0.00396 0.0 -10 9
        0.72651 0.72122 0.00084 0.00212 0.05 -50 19
        0.72766 0.72399 0.00212 0.00292 0.1 -45 15
        0.72657 0.72592 0.00103 0.00305 0.25 0 10
        0.72266 0.72402 0.00335 0.00221 0.5 30 7
        0.72245 0.72560 0.00134 0.00089 0.75 180 6
        0.71331 0.72244 0.00260 0.00134 0.9 20 3
        0.71093 0.72093 0.00314 0.00111 0.95 180 4
        0.70518 0.72307 0.00179 0.00137 1.0 190 5
    };
\addplot +[mark=none,steelblue] table[x=x,y=y]{./files/curve_dataset_shift.txt};
\end{axis}
\end{tikzpicture} &
\begin{tikzpicture}
\begin{axis}[
    xmin = 0.86, xmax = 0.94,
    ymin = 0.83, ymax = 0.93,
    xtick distance = 0.02,
    ytick distance = 0.02,
    yticklabel style={xshift=1.5pt},
    every tick label/.append style={font=\tiny},
    grid = both,
    minor tick num = 1,
    major grid style = {lightgray!25},
    minor grid style = {lightgray!25},
    width = 0.27\textwidth, 
    height = 0.2\textwidth, 
    xlabel = {SSIM on $P$}, 
    label style={font=\scriptsize},
    legend cell align = {left},
    legend pos = south west,
    point meta = explicit symbolic,
    nodes near coords,
    visualization depends on = {\thisrow{theta} \as \theta},
    visualization depends on = {\thisrow{r} \as \r},
    every node near coord/.append style = {
        xshift = {\r * cos(\theta)},
        yshift = {\r * sin(\theta)},
        font = \scriptsize
    },
]
\addplot+[
    mark size=1.5pt, steelblue, mark options={solid,steelblue,fill=steelblue}, error bars/.cd, x dir=both, x explicit, y dir=both, y explicit,] 
    table [x = {data_p_mean}, 
        y = {data_q_mean}, 
        x error expr = {\thisrow{data_p_std}*2}, 
        y error expr = {\thisrow{data_q_std}*2},
        meta = {meta}]{
        data_p_mean data_q_mean data_p_std data_q_std meta theta r
0.91881 0.84967 0.00080 0.00231 0.0 -50 12
0.91780 0.87514 0.00090 0.00260 0.05 -45 14
0.91813 0.88528 0.00126 0.00265 0.1 -45 12
0.91631 0.89400 0.00083 0.00113 0.25 -30 14
0.91199 0.89701 0.00042 0.00182 0.5 -15 12
0.90543 0.90089 0.00164 0.00112 0.75 0 9
0.89638 0.90141 0.00046 0.00044 0.9 10 6
0.89109 0.90284 0.00044 0.00092 0.95 180 2
0.87380 0.90211 0.00312 0.00129 1.0 90 1
    };
\end{axis}
\end{tikzpicture} 
&
\begin{tikzpicture}
\begin{axis}[
    xmin = 0.75, xmax = 0.87,
    ymin = 0.82, ymax = 0.93,
    xtick distance = 0.05,
    ytick distance = 0.02,
    yticklabel style={xshift=1.5pt},
    every tick label/.append style={font=\tiny},
    grid = both,
    minor tick num = 1,
    major grid style = {lightgray!25},
    minor grid style = {lightgray!25},
    width = 0.27\textwidth, 
    height = 0.2\textwidth, 
    xlabel = {SSIM on $P$}, 
    label style={font=\scriptsize},
    legend cell align = {left},
    legend pos = south west,
]
\addplot+[
    mark size=1.5pt, only marks, steelblue, mark options={solid,steelblue,fill=steelblue}, point meta = explicit symbolic,
    nodes near coords,
    visualization depends on = {\thisrow{theta} \as \theta},
    visualization depends on = {\thisrow{r} \as \r},
    every node near coord/.append style = {
        xshift = {\r * cos(\theta)},
        yshift = {\r * sin(\theta)},
        font = \scriptsize
    },error bars/.cd, x dir=both, x explicit, y dir=both, y explicit,] 
    table [x = {data_p_mean}, 
        y = {data_q_mean}, 
        x error expr = {\thisrow{data_p_std}*2}, 
        y error expr = {\thisrow{data_q_std}*2},
        meta = {meta}]{
        data_p_mean data_q_mean data_p_std data_q_std meta theta r
        0.76956 0.90240 0.00036 0.00016  1.0 150 2
        0.80356 0.90280 0.00262 0.00065 0.95 180 10 
        0.81546 0.90367 0.00260 0.00086 0.9 180 4 
        0.82433 0.90296 0.00054 0.00016 0.75 0 5
        0.83086 0.90300 0.00202 0.00106 0.5 -20 13
        0.82980 0.89713 0.00157 0.00091 0.25 -30 15
        0.83390 0.89413 0.00045 0.00134 0.1 -50 15 
        0.83257 0.88240 0.00066 0.00139 0.05 -50 17
        0.83366 0.83740 0.00057 0.00396 0.0 -20 10 
    };
\addplot +[mark=none,steelblue] table[x=x,y=y]{./files/curveacc.txt};
\end{axis}
\end{tikzpicture} & 
\begin{tikzpicture}
\begin{axis}[
    xmin = 0.77,
    xmax = 0.815,
    ymin = 0.63,
    ymax = 0.725,
    yticklabel style={xshift=1.5pt},
    every tick label/.append style={font=\tiny},
    yticklabel style={/pgf/number format/fixed,/pgf/number format/precision=3},
    xticklabel style={/pgf/number format/fixed,/pgf/number format/precision=3},
    grid = both,
    minor tick num = 1,
    major grid style = {lightgray!25},
    minor grid style = {lightgray!25},
    width = 0.27\textwidth, 
    height = 0.2\textwidth, 
    xlabel = {SSIM on $P$}, 
    ylabel = {}, 
    label style={font=\scriptsize},
    legend cell align = {left},
    legend pos = south west,
    point meta = explicit symbolic,
    nodes near coords,
    visualization depends on = {\thisrow{theta} \as \theta},
    visualization depends on = {\thisrow{r} \as \r},
    every node near coord/.append style = {
        xshift = {\r * cos(\theta)},
        yshift = {\r * sin(\theta)},
        font = \scriptsize
    },
]
\addplot+[
    mark size=1.5pt, only marks, steelblue, mark options={solid,steelblue,fill=steelblue}, error bars/.cd, x dir=both, x explicit, y dir=both, y explicit,
    ] 
    table [
    x = {data_p_mean}, 
        y = {data_q_mean}, 
        x error expr = {\thisrow{data_p_std}*2}, 
        y error expr = {\thisrow{data_q_std}*2},
        meta = {meta}]{
        data_p_mean data_q_mean data_p_std data_q_std meta theta r
        0.7956 0.6859 0.00194 0.00094  1.0 205 15
        0.7975 0.6866 0.00108 0.00134 0.95 180 9 
        0.7979 0.6869 0.00135 0.00107 0.9 130 8 
        0.7996 0.6876 0.00200 0.00184 0.75 50 8
        0.7973 0.6853 0.00080 0.00070 0.5 -120 13
        0.7981 0.6854 0.00149 0.00105 0.25 -90 19
        0.7999 0.6863 0.00196 0.00166 0.1 -10 12 
        0.7997 0.6860 0.00045 0.00077 0.05 -30 18
        0.7993 0.6854 0.00144 0.00118 0.0 -70 14  
    };
\end{axis}
\end{tikzpicture}
\end{tabular}
\end{adjustbox}

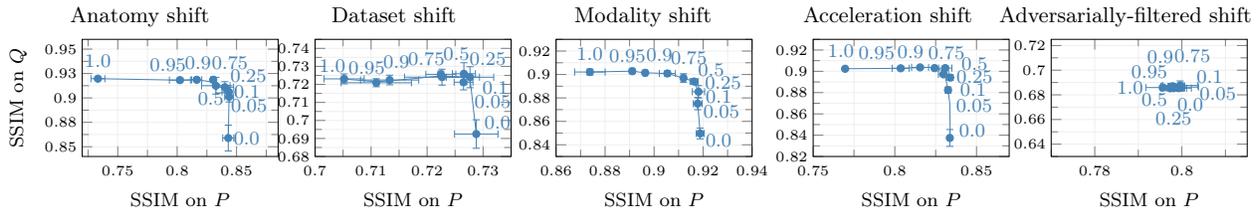
\captionof{figure}{Performance of U-Net trained on the mixture distributions of source and target domains: 
Numbers are the mixture coefficients; for example 0.25 means that $25\%$ of the training data comes from distribution $Q$.
The performance gap is significant for the four natural distribution shifts, but essentially non-existent for the adversarially-filtered shift.
The model can perform well on both distributions $P$ and $Q$ simultaneously irrespective of the distribution shifts. Error bars denote $95\%$ confidence intervals. 
}
\label{fig:trained_on_mixture}
\end{table}

\section{Method: Incorporating self-supervised training and then performing test-time training at inference}\label{sec:methodology}

In this section, we describe our domain adaptation method for compressive sensing which incorporates a self-supervised loss into the training of a deep learning model, and performs test-time training (TTT) during inference.
Let $f_\vtheta$ be a neural network mapping a coarse reconstruction from a measurement $\vy$ to a clean image (e.g., a U-Net or VarNet). The training and inference stages are as follows.

\paragraph{Training stage:}
Given a training set consisting of (ground-truth-image, measurement) pairs $\{(\vx_1,\vy_1),\ldots, (\vx_n,\vy_n)\}$, we learn a model by minimizing the loss function
\begin{align}
    \mc L(\vtheta) &=
    \frac{1}{n}
    \sum_{i=1}^n
    \Bigg(
    \underbrace{\frac{\norm[1]{\vx_i-f_\vtheta(\mA^\dagger \vy_i)}}{\norm[1]{\vx_i}}}_{\mathcal{L}_{\text{sup}}} 
    +
    \underbrace{\frac{\norm[1]{\vy_i - 
    \mA 
    f_\vtheta(\mA^\dagger \vy_i)}}{\norm[1]{\vy_i}}}_{\mathcal{L}_{\text{self}}}
    \Bigg).
    \label{eq:joint-loss}
\end{align}
Here, $\mathcal{L}_{\text{sup}}$ is the supervised part of the loss function which ensures that the output of the network is close to the ground truth image. 
For our setup, the $\ell_1$-norm works well as a loss but other choices, such as the SSIM loss, also work. 

The un-supervised loss we propose is simply enforcing data-consistency, and might seem like an odd choice, given that existing TTT schemes from the classification literature typically choose an auxiliary task, and not a fitting term as a self-supervised loss. We also experimented with auxiliary reconstruction tasks, such as a self-supervised denoising task (see appendix), without success. We further discuss the motivation for the un-supervised loss below. 

\paragraph{Inference stage:}
At inference, we are given a (typically out-of-distribution) measurement $\vy$, and we estimate an image as follows. 
We optimize the network parameters $\vtheta$ with respect to the loss $\mathcal{L}_{\text{self}}$ for the given under-sampled test measurement. We refer to this step as TTT. To prevent TTT from overfitting to the measurement, we early-stop the optimization based on a self-validation loss computed over a fraction of the under-sampled measurement. 
Specifically, let $\vy$ be the under-sampled test measurement with $M$ frequency-domain measurements. 
We split $\vy$ into a $\vy_{\text{train}}$ and $\vy_{\text{val}}$, which contains a random fraction $q = M/k$ of all measurements in $\vy$. 
We then perform TTT on $\vy_{\text{train}}$ and monitor the error between $\vy_{\text{val}}$ and the predicted frequencies by the network for self-validation to stop TTT early when the self-validation error starts to rise. 

\paragraph{Motivation for the self-supervised loss:}
Our self-supervised loss is low if the network generates an image that is consistent with the measurement. 
This loss might seem an unusual choice, given that TTT methods from the literature usually use a loss based on an auxiliary task. 
However, we observe footprints of this loss in works on cycle consistency losses in image translation 
~\citep{zhu2017unpaired} and image classification~\citep{hoffman2018cycada}. 
For image-to-image translation, \citet{zhu2017unpaired} proposed cycle consistency to learn a consistent mapping from one image domain $X$ to another image domain $Y$. 
\citet{oh2020unpaired} used a cycle consistency loss to train a GAN for image reconstruction. 

Our self-supervised loss uses the network as an image model to enforce consistency between the output and the under-sampled measurement. 
Therefore, we expect such self-supervision to work well for architectures that are \emph{good} image models. Both U-Net and VarNet are Convolutional Neural Networks (CNNs), and CNNs are good image priors even without any training~\citet{ulyanov2018deep}.
Un-trained CNNs are such good image priors so that they can perform image reconstruction without any training~\citep{ulyanov2018deep,heckel2019deep,arora2020untrained,heckel2020compressive,wang2020phase,bostan2020deep,darestani2020can}. 


\section{Experiments}\label{sec:exp}

In this section, we show that self-supervised training with test-time training (TTT) closes the distribution shift performance gap for the four natural distribution shifts considered.  

For each experiment, we have an original distribution $P$ and a target distribution $Q$. Performance is measured in terms of the structural similarity index measure (SSIM). 
The distribution shift performance gap when training is fully supervised is defined as 
\begin{align*}
\text{gap}_\text{before}=\text{SSIM(train on $Q$ test on $Q$)} - \text{SSIM(train on $P$ test on $Q$)}, 
\end{align*}
and is defined as 
\begin{align*}
\text{gap}_\text{after}= 
\text{SSIM(train on $Q$ test on $Q$ + TTT)} - \text{SSIM(train on $P$ test on $Q$ + TTT)}
\end{align*}
when we include self-supervision during training and then apply TTT.

If the robustness intervention is successful, then $\text{gap}_\text{after}$ is significantly smaller than $\text{gap}_\text{before}$. 

Throughout the experiments, we work with a U-Net with 8 layers and 64 channels as the width factor and a VarNet with 12 cascades and 18 channels as the width factor. For training, we run the Adam optimizer~\citep{kingma2014adam} with learning rate set to 1e-5 for U-Net (and 1e-4 for VarNet). 

The train and test datasets we used for each experiment are described below for each distribution shift. 
We observed that a training set of 300-400 slices gives a good performance on the validation set. For larger training sets, we observed no significant performance improvement (less than 0.5\% SSIM score).

\begin{table*}[th]
\centering
\begin{adjustbox}{width=1\textwidth}
\begin{tabular}{l|l|c|c|c|c|c|c|c|c}
\toprule 
\multicolumn{1}{c|}{} & \multicolumn{1}{c|}{} & \multicolumn{2}{c|}{\bf anatomy shift} & \multicolumn{2}{c|}{\bf dataset shift} & \multicolumn{2}{c|}{\bf modality shift} & \multicolumn{2}{c}{\bf acceleration shift} \\
\multicolumn{1}{c|}{} & \multicolumn{1}{c|}{} & \multicolumn{2}{c|}{\bf P:knee \; Q:brain} & \multicolumn{2}{c|}{\bf P:fastMRI \; Q:Stanford} & \multicolumn{2}{c|}{\bf P:AXT2 \; Q:AXT1PRE} & \multicolumn{2}{c}{\bf P:4x \; Q:2x} \\\cline{3-10}
\multicolumn{1}{c|}{training scheme} & \multicolumn{1}{c|}{setup} & \multicolumn{1}{c|}{U-Net} & \multicolumn{1}{c|}{VarNet} & \multicolumn{1}{c|}{U-Net} & \multicolumn{1}{c|}{VarNet}  & \multicolumn{1}{c|}{U-Net} & \multicolumn{1}{c|}{VarNet} & \multicolumn{1}{c|}{U-Net} & \multicolumn{1}{c}{VarNet}\\
\multicolumn{1}{c|}{} & \multicolumn{1}{c|}{} & \multicolumn{1}{c|}{SSIM} & \multicolumn{1}{c|}{SSIM} & \multicolumn{1}{c|}{SSIM} & \multicolumn{1}{c|}{SSIM} & \multicolumn{1}{c|}{SSIM} & \multicolumn{1}{c|}{SSIM} & \multicolumn{1}{c|}{SSIM} & \multicolumn{1}{c}{SSIM}  \\
\hline
    \multirow{2}{*}{\begin{tabular}{@{}c@{}} self-supervision \\ included \end{tabular}} & train on P test on Q + TTT  & 0.9225 & 0.9315 & 0.7226 & 0.7247 & 0.9067 & 0.9084 & 0.9176 & 0.9203 \\ 
    & train on Q test on Q + TTT    & \textbf{0.9234} & 0.9322 &  \textbf{0.7268} & 0.7295 & \textbf{0.9086} & \textbf{0.9110} & \textbf{0.9192} & 0.9207 \\ 
    \thickhline
    \multirow{3}{*}{supervised} & train on Q test on Q      & 0.9187  & \textbf{0.9344}  & 0.7164  &  \textbf{0.7442} & 0.9026 & 0.9105  & 0.9004  & \textbf{0.9224}\\
    & train on P test on Q                                  & 0.8521 & 0.8717 & 0.6830 & 0.7062 & 0.8506 & 0.8744 & 0.8385 & 0.8744 \\
    & train on P test on Q + TTT                             & 0.8648 & 0.8533 & 0.6130 & 0.6026 & 0.8695 & 0.8102 & 0.8090 & 0.8324 \\
\thickhline
    \multicolumn{2}{l|}{fraction of gap closed by TTT} & \multicolumn{1}{c|}{98.6\%} & \multicolumn{1}{c|}{98.8\%} & \multicolumn{1}{c|}{87.4\%} & \multicolumn{1}{c|}{87.3\%} & \multicolumn{1}{c|}{96.3\%} & \multicolumn{1}{c|}{92.7\%} & \multicolumn{1}{c|}{97.4\%} & \multicolumn{1}{c}{99.1\%} \\
\bottomrule
\end{tabular}
\end{adjustbox}
\caption{\textbf{Including self-supervision while training deep learning models combined with TTT improves model robustness to natural anatomy, dataset, modality, and acceleration shifts.} SSIM scores are averaged over 100 slices from the test set of $Q$. Note that TTT is only effective when self-supervision is included during training, and as shown in the second to the last row, it offers marginal improvement when applied to a model that is trained in a supervised fashion.}
\label{tab:results}
\end{table*}

\begin{table*}[th]
\centering
\begin{adjustbox}{width=1\textwidth}
\begin{tabular}{l|l|c|c|c|c|c|c|c|c}
\toprule 
\multicolumn{1}{c|}{} & \multicolumn{1}{c|}{} & \multicolumn{2}{c|}{\bf anatomy shift} & \multicolumn{2}{c|}{\bf dataset shift} & \multicolumn{2}{c|}{\bf modality shift} & \multicolumn{2}{c}{\bf acceleration shift} \\
\multicolumn{1}{c|}{} & \multicolumn{1}{c|}{} & \multicolumn{2}{c|}{\bf P:knee \; Q:brain} & \multicolumn{2}{c|}{\bf P:fastMRI \; Q:Stanford} & \multicolumn{2}{c|}{\bf P:AXT2 \; Q:AXT1PRE} & \multicolumn{2}{c}{\bf P:4x \; Q:2x} \\\cline{3-10}
\multicolumn{1}{c|}{} & \multicolumn{1}{c|}{} & \multicolumn{1}{c|}{U-Net} & \multicolumn{1}{c|}{VarNet} & \multicolumn{1}{c|}{U-Net} & \multicolumn{1}{c|}{VarNet}  & \multicolumn{1}{c|}{U-Net} & \multicolumn{1}{c|}{VarNet} & \multicolumn{1}{c|}{U-Net} & \multicolumn{1}{c}{VarNet}\\
\multicolumn{1}{c|}{training scheme} & \multicolumn{1}{c|}{setup} & \multicolumn{1}{c|}{runtime} &  \multicolumn{1}{c|}{runtime} &  \multicolumn{1}{c|}{runtime} &  \multicolumn{1}{c|}{runtime} &  \multicolumn{1}{c|}{runtime} &  \multicolumn{1}{c|}{runtime} & \multicolumn{1}{c|}{runtime} & \multicolumn{1}{c}{runtime} \\
\multicolumn{1}{c|}{} & \multicolumn{1}{c|}{} & \multicolumn{1}{c|}{(mins/slice)} & \multicolumn{1}{c|}{(mins/slice)} & \multicolumn{1}{c|}{(mins/slice)} & \multicolumn{1}{c|}{(mins/slice)} & \multicolumn{1}{c|}{(mins/slice)} & \multicolumn{1}{c|}{(mins/slice)} & \multicolumn{1}{c|}{(mins/slice)} & \multicolumn{1}{c}{(mins/slice)} \\
\hline
    \multirow{2}{*}{\begin{tabular}{@{}c@{}} self-supervision \\ included \end{tabular}} & train on P test on Q + TTT & 10.1 & 12.9 &  1.2 & 1.5 & 1.6 & 6.5 &  3.6 & 5.2\\ 
    & train on Q test on Q + TTT    & 3.3 &  4.2 & 0.4 & 0.7 & 0.7 & 2.6 & 2.6 & 5.1 \\ 
\bottomrule
\end{tabular}
\end{adjustbox}
\caption{\textbf{Reducing the robustness gap comes at a noticeable computational cost for TTT.} 
Runtimes are averaged over 100 slices from the test set of $Q$ and are reported for a single GPU.}
\label{tab:runtimes}
\end{table*}

\begin{table}[t!]
\setlength{\tabcolsep}{1pt}
\centering
\begin{adjustbox}{width=0.85\textwidth}
\begin{tabular}{ccccc}
  & \begin{tabular}{@{}c@{}} \large train on P\\  \large test on Q \\[5pt]  SSIM: 0.8562 \end{tabular}  & \begin{tabular}{@{}c@{}} \large after our domain \\ \large adaptation method \\[5pt] SSIM: 0.9215 \end{tabular} & \begin{tabular}{@{}c@{}} \large train on Q \\ \large test on Q \\[5pt] SSIM: 0.9185 \end{tabular} & \begin{tabular}{@{}c@{}} \large ground truth \\[25pt] \end{tabular} \\
  \rule{0pt}{3ex}
  \begin{tabular}{@{}c@{}} \large anatomy \\ \large shift \end{tabular} &
  \begin{adjustbox}{valign=m}
    \begin{tikzpicture}[boximg]
      \node[anchor=south] (img) {\includegraphics[width=0.25\textwidth]{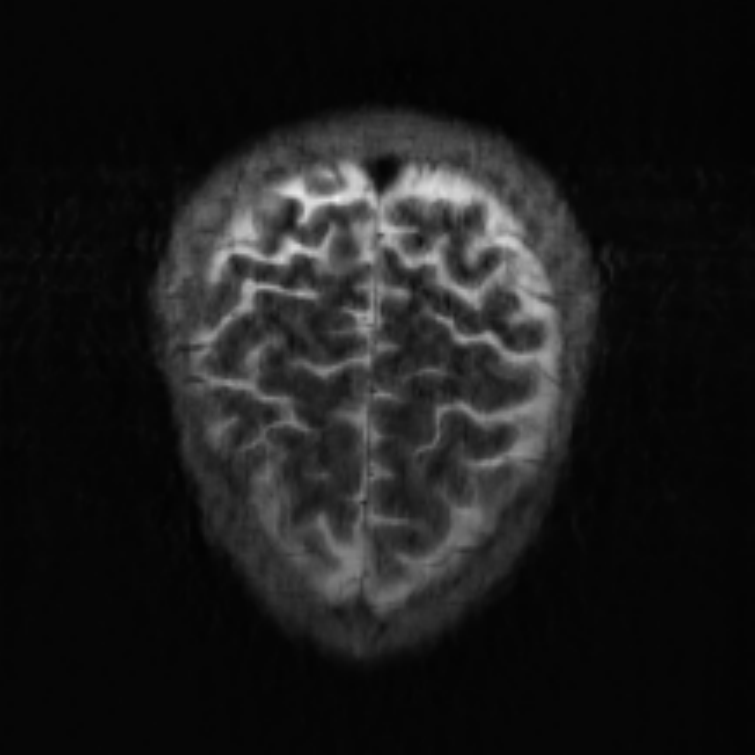}}; 
      \begin{scope}[x=10,y=10]
        \node[draw,minimum height=0.55cm,minimum width=0.65cm] (B1) at (-1.2,6.6) {}; 
        \node (img1) at (-3.95,1.6) {\includegraphics[width=0.08\textwidth,trim={2.3cm 3.8cm 4cm 2.7cm},  clip]{files/anatomy_shift/knee_sup_on_brain.pdf}}; 
        \draw (img1.south west) rectangle (img1.north east);
        \draw (B1) -- (img1);
      \end{scope}
    \end{tikzpicture} \end{adjustbox} &
    \begin{adjustbox}{valign=m}
    \begin{tikzpicture}[boximg]
      \node[anchor=south] (img) {\includegraphics[width=0.25\textwidth]{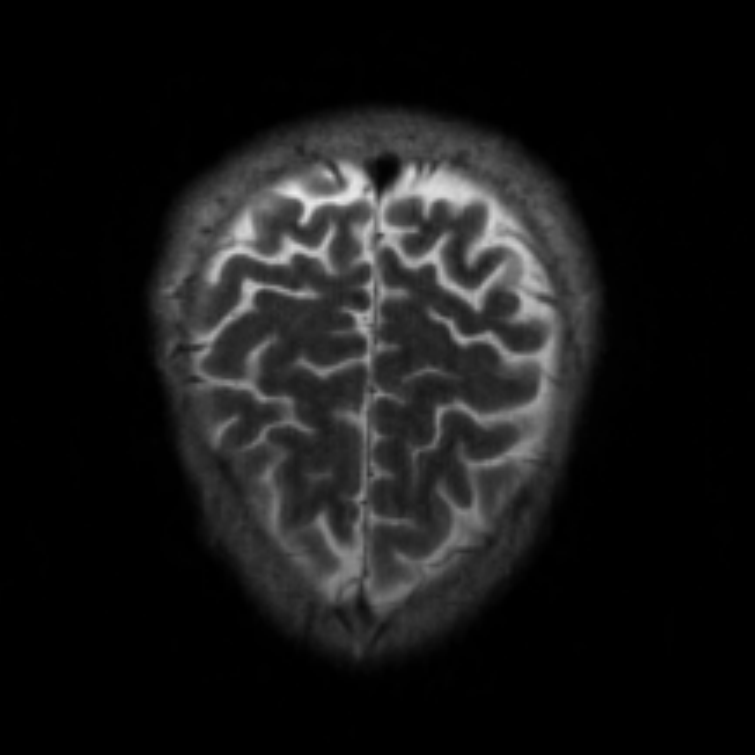}};
      \begin{scope}[x=10,y=10]
        \node[draw,minimum height=0.55cm,minimum width=0.65cm] (B1) at (-1.2,6.6) {};
        \node (img1) at (-3.95,1.6) {\includegraphics[width=0.08\textwidth,trim={2.3cm 3.8cm 4cm 2.7cm}, clip]{files/anatomy_shift/ttt_on_brain.pdf}};
        \draw (img1.south west) rectangle (img1.north east);
        \draw (B1) -- (img1);
      \end{scope}
    \end{tikzpicture} \end{adjustbox} &
    \begin{adjustbox}{valign=m}
    \begin{tikzpicture}[boximg]
      \node[anchor=south] (img) {\includegraphics[width=0.25\textwidth]{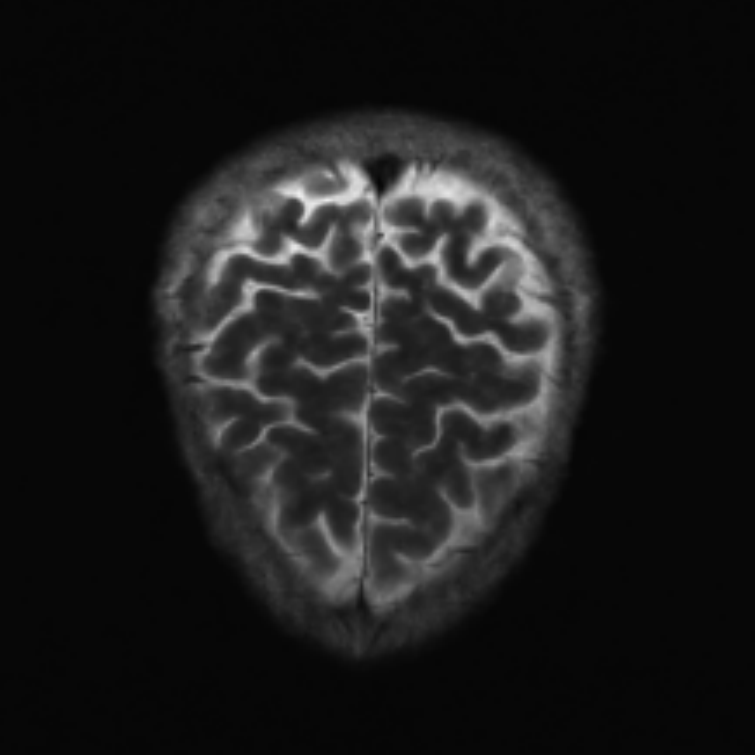}};
      \begin{scope}[x=10,y=10]
        \node[draw,minimum height=0.55cm,minimum width=0.65cm] (B1) at (-1.2,6.6) {};
        \node (img1) at (-3.95,1.6) {\includegraphics[width=0.08\textwidth,trim={2.3cm 3.8cm 4cm 2.7cm}, clip]{files/anatomy_shift/brain_sup_on_brain.pdf}};
        \draw (img1.south west) rectangle (img1.north east);
        \draw (B1) -- (img1);
      \end{scope}
    \end{tikzpicture} \end{adjustbox} &
    \begin{adjustbox}{valign=m}
    \begin{tikzpicture}[boximg]
      \node[anchor=south] (img) {\includegraphics[width=0.25\textwidth]{files/anatomy_shift/orig.pdf}};
      \begin{scope}[x=10,y=10]
        \node[draw,minimum height=0.55cm,minimum width=0.65cm] (B1) at (-1.2,6.6) {};
        \node (img1) at (-3.95,1.6) {\includegraphics[width=0.08\textwidth,trim={2.3cm 3.8cm 4cm 2.7cm}, clip]{files/anatomy_shift/orig.pdf}};
        \draw (img1.south west) rectangle (img1.north east);
        \draw (B1) -- (img1);
      \end{scope}
    \end{tikzpicture} \end{adjustbox}
  \\ \rule{0pt}{3ex}
  & \begin{tabular}{@{}c@{}} SSIM: 0.6709 \end{tabular}  & \begin{tabular}{@{}c@{}} SSIM: 0.6902 \end{tabular} & \begin{tabular}{@{}c@{}}  SSIM: 0.6918 \end{tabular}  \\
  \rule{0pt}{3ex}
  \begin{tabular}{@{}c@{}} \large dataset \\ \large shift \end{tabular} &
  \begin{adjustbox}{valign=m}
    \begin{tikzpicture}[boximg]
      \node[anchor=south] (img) {\scalebox{1}[-1]{\includegraphics[width=0.25\textwidth]{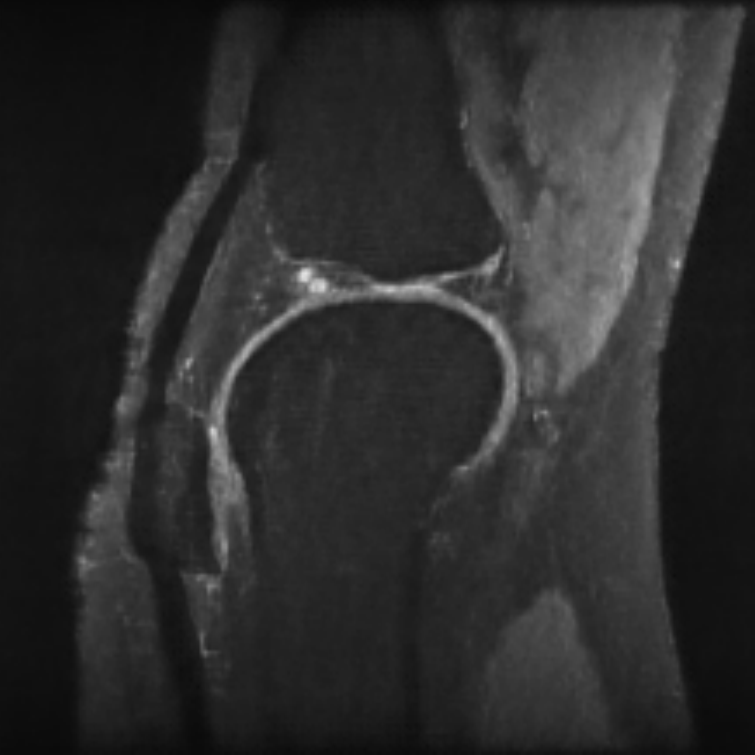}}};
      \begin{scope}[x=10,y=10]
        \node[draw,minimum height=0.55cm,minimum width=0.65cm] (B1) at (2.7,3.5) {}; 
        \node (img1) at (-3.95,1.6) {\scalebox{1}[-1]{\includegraphics[width=0.08\textwidth,trim={4.8cm 4.7cm 1.5cm 1.8cm}, clip]{files/dataset_shift/fs_sup_on_stan.pdf}}}; 
        \draw (img1.south west) rectangle (img1.north east);
        \draw (B1) -- (img1);
      \end{scope}
  \end{tikzpicture} \end{adjustbox} &
  \begin{adjustbox}{valign=m}
    \begin{tikzpicture}[boximg]
      \node[anchor=south] (img) {\scalebox{1}[-1]{\includegraphics[width=0.25\textwidth]{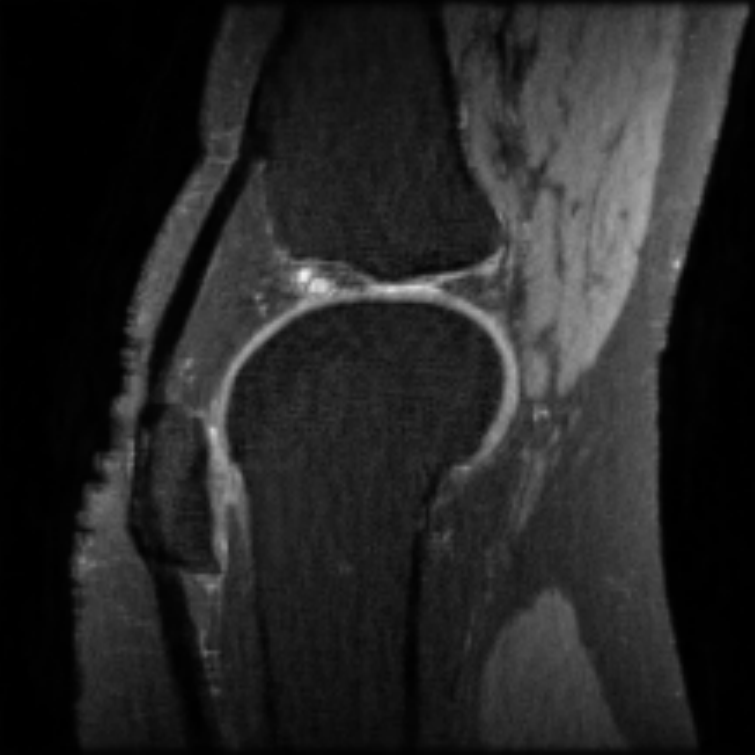}}};
      \begin{scope}[x=10,y=10]
        \node[draw,minimum height=0.55cm,minimum width=0.65cm] (B1) at (2.7,3.5) {};
        \node (img1) at (-3.95,1.6) {\scalebox{1}[-1]{\includegraphics[width=0.08\textwidth,trim={4.8cm 4.7cm 1.5cm 1.8cm}, clip]{files/dataset_shift/ttt_on_stan.pdf}}};
        \draw (img1.south west) rectangle (img1.north east);
        \draw (B1) -- (img1);
      \end{scope}
    \end{tikzpicture} \end{adjustbox} &
  \begin{adjustbox}{valign=m}
    \begin{tikzpicture}[boximg]
      \node[anchor=south] (img) {\scalebox{1}[-1]{\includegraphics[width=0.25\textwidth]{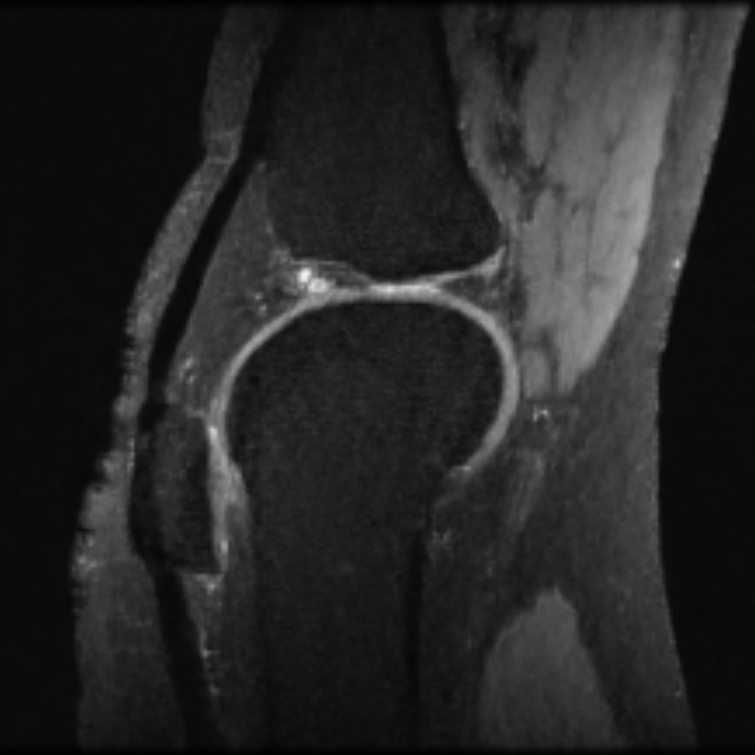}}};
      \begin{scope}[x=10,y=10]
        \node[draw,minimum height=0.55cm,minimum width=0.65cm] (B1) at (2.7,3.5) {};
        \node (img1) at (-3.95,1.6) {\scalebox{1}[-1]{\includegraphics[width=0.08\textwidth,trim={4.8cm 4.7cm 1.5cm 1.8cm}, clip]{files/dataset_shift/stan_sup_on_stan.pdf}}};
        \draw (img1.south west) rectangle (img1.north east);
        \draw (B1) -- (img1);
      \end{scope}
    \end{tikzpicture} \end{adjustbox} &
  \begin{adjustbox}{valign=m}
    \begin{tikzpicture}[boximg]
      \node[anchor=south] (img) {\scalebox{1}[-1]{\includegraphics[width=0.25\textwidth]{files/dataset_shift/orig.pdf}}};
      \begin{scope}[x=10,y=10]
        \node[draw,minimum height=0.55cm,minimum width=0.65cm] (B1) at (2.7,3.5) {};
        \node (img1) at (-3.95,1.6) {\scalebox{1}[-1]{\includegraphics[width=0.08\textwidth,trim={4.8cm 4.7cm 1.5cm 1.8cm}, clip]{files/dataset_shift/orig.pdf}}};
        \draw (img1.south west) rectangle (img1.north east);
        \draw (B1) -- (img1);
      \end{scope}
    \end{tikzpicture} \end{adjustbox}
  \\ \rule{0pt}{3ex}
  & \begin{tabular}{@{}c@{}}  SSIM: 0.8705 \end{tabular}  & \begin{tabular}{@{}c@{}}   SSIM: 0.9317 \end{tabular} & \begin{tabular}{@{}c@{}}  SSIM: 0.9248 \end{tabular}  \\
  \rule{0pt}{3ex}
  \begin{tabular}{@{}c@{}} \large modality \\ \large shift \end{tabular} &
  \begin{adjustbox}{valign=m}
    \begin{tikzpicture}[boximg]
      \node[anchor=south] (img) {\includegraphics[width=0.25\textwidth]{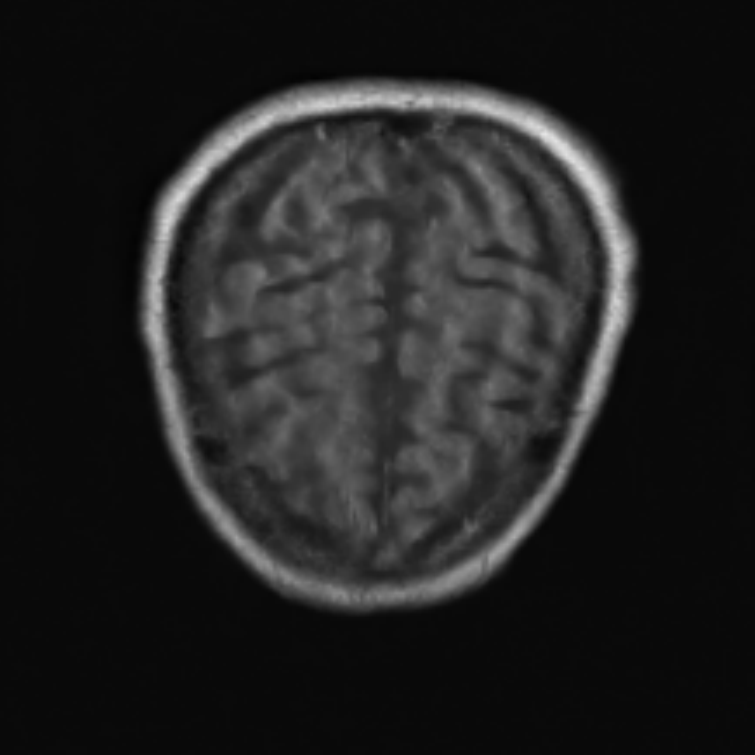}};
      \begin{scope}[x=10,y=10] 
        \node[draw,minimum height=0.55cm,minimum width=0.65cm] (B1) at (-1.2,6.8) {}; 
        \node (img1) at (-3.95,1.6) {\includegraphics[width=0.08\textwidth,trim={2.3cm 3.8cm 4cm 2.7cm}, clip]{files/acquisition_shift/t2_sup_on_t1.pdf}}; 
        \draw (img1.south west) rectangle (img1.north east);
        \draw (B1) -- (img1);
      \end{scope}
    \end{tikzpicture} \end{adjustbox} &
  \begin{adjustbox}{valign=m}
    \begin{tikzpicture}[boximg]
      \node[anchor=south] (img) {\includegraphics[width=0.25\textwidth]{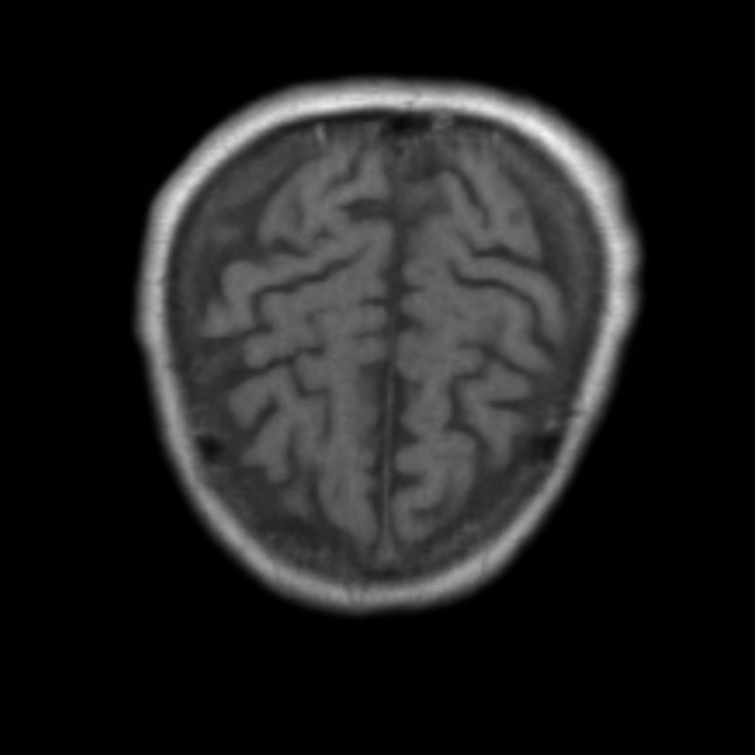}};
      \begin{scope}[x=10,y=10]
        \node[draw,minimum height=0.55cm,minimum width=0.65cm] (B1) at (-1.2,6.8) {};
        \node (img1) at (-3.95,1.6) {\includegraphics[width=0.08\textwidth,trim={2.3cm 3.8cm 4cm 2.7cm}, clip]{files/acquisition_shift/ttt_on_t1.pdf}};
        \draw (img1.south west) rectangle (img1.north east);
        \draw (B1) -- (img1);
      \end{scope}
    \end{tikzpicture} \end{adjustbox} &
  \begin{adjustbox}{valign=m}
    \begin{tikzpicture}[boximg]
      \node[anchor=south] (img) {\includegraphics[width=0.25\textwidth]{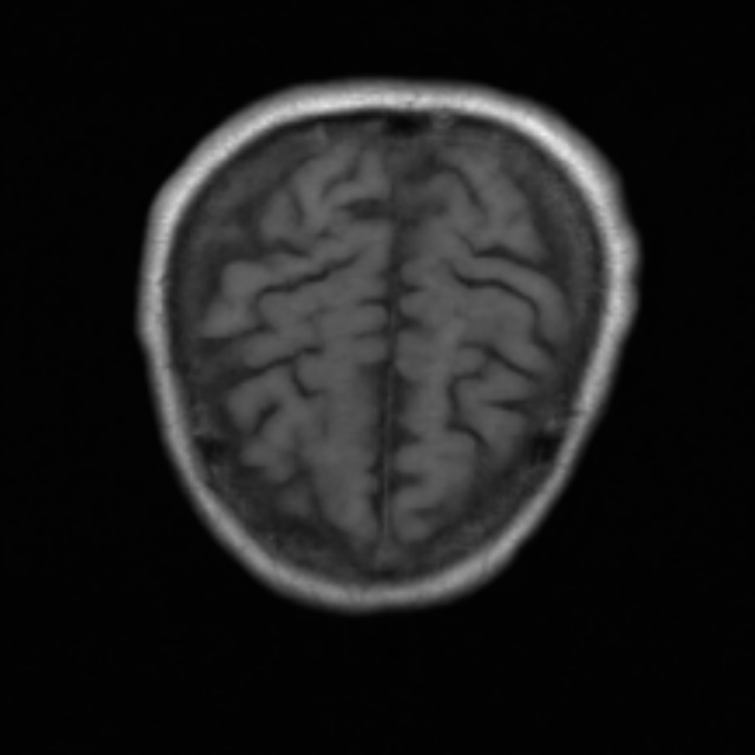}};
      \begin{scope}[x=10,y=10]
        \node[draw,minimum height=0.55cm,minimum width=0.65cm] (B1) at (-1.2,6.8) {};
        \node (img1) at (-3.95,1.6) {\includegraphics[width=0.08\textwidth,trim={2.3cm 3.8cm 4cm 2.7cm}, clip]{files/acquisition_shift/t1_sup_on_t1.pdf}};
        \draw (img1.south west) rectangle (img1.north east);
        \draw (B1) -- (img1);
      \end{scope}
    \end{tikzpicture} \end{adjustbox} &
  \begin{adjustbox}{valign=m}
    \begin{tikzpicture}[boximg]
      \node[anchor=south] (img) {\includegraphics[width=0.25\textwidth]{files/acquisition_shift/orig.pdf}};
      \begin{scope}[x=10,y=10]
        \node[draw,minimum height=0.55cm,minimum width=0.65cm] (B1) at (-1.2,6.8) {};
        \node (img1) at (-3.95,1.6) {\includegraphics[width=0.08\textwidth,trim={2.3cm 3.8cm 4cm 2.7cm}, clip]{files/acquisition_shift/orig.pdf}};
        \draw (img1.south west) rectangle (img1.north east);
        \draw (B1) -- (img1);
      \end{scope}
    \end{tikzpicture} \end{adjustbox} 
  \\ \rule{0pt}{3ex}
  & \begin{tabular}{@{}c@{}}  SSIM: 0.8572 \end{tabular}  & \begin{tabular}{@{}c@{}}   SSIM: 0.9107 \end{tabular} & \begin{tabular}{@{}c@{}}  SSIM: 0.9122 \end{tabular}  \\
  \rule{0pt}{3ex}
  \begin{tabular}{@{}c@{}} \large acceleration \\ \large shift \end{tabular} &
  \begin{adjustbox}{valign=m}
    \begin{tikzpicture}[boximg]
      \node[anchor=south] (img) {\scalebox{1}[-1]{\includegraphics[width=0.25\textwidth]{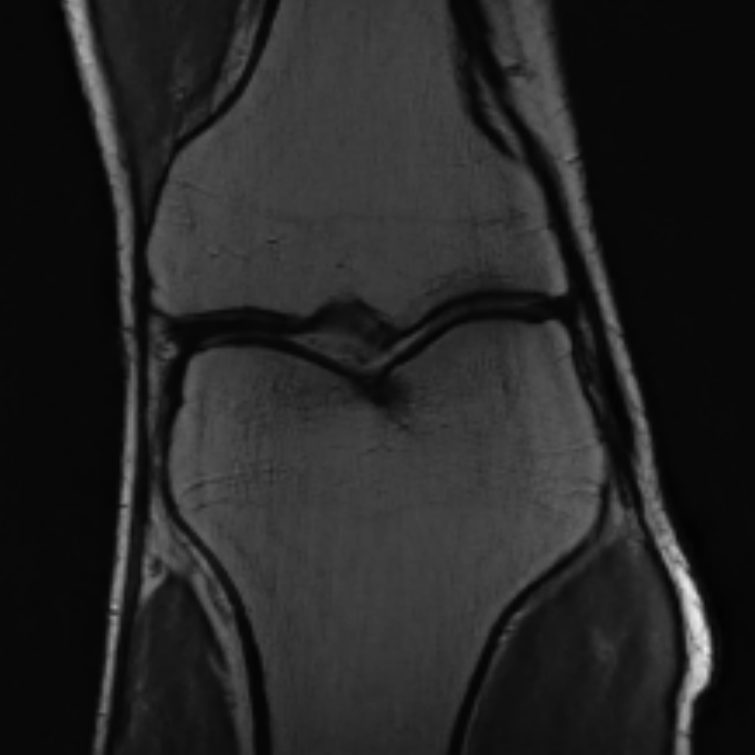}}};
      \begin{scope}[x=10,y=10]
        \node[draw,minimum height=0.55cm,minimum width=0.65cm] (B1) at (-2.4,7) {};
        \node (img1) at (-3.95,1.6) {\scalebox{1}[-1]{\includegraphics[width=0.08\textwidth,trim={1.5cm 2.6cm 4.8cm 3.9cm}, clip]{files/acceleration_shift/4x_sup_on_2x.pdf}}}; 
        \draw (img1.south west) rectangle (img1.north east);
        \draw (B1) -- (img1);
      \end{scope}
    \end{tikzpicture} \end{adjustbox} &
  \begin{adjustbox}{valign=m}
    \begin{tikzpicture}[boximg]
      \node[anchor=south] (img) {\scalebox{1}[-1]{\includegraphics[width=0.25\textwidth]{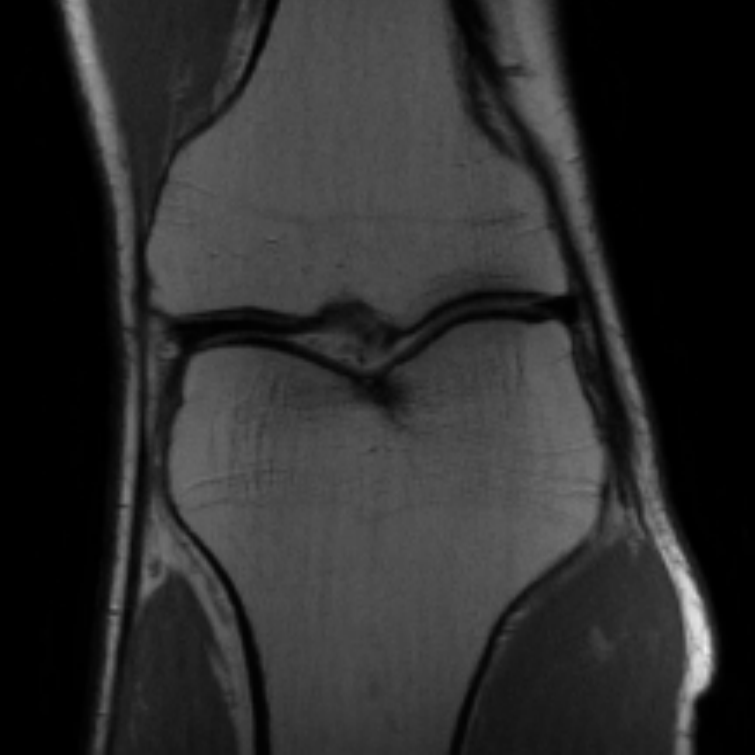}}};
      \begin{scope}[x=10,y=10]
        \node[draw,minimum height=0.55cm,minimum width=0.65cm] (B1) at (-2.4,7) {};
        \node (img1) at (-3.95,1.6) {\scalebox{1}[-1]{\includegraphics[width=0.08\textwidth,trim={1.5cm 2.6cm 4.8cm 3.9cm}, clip]{files/acceleration_shift/4x_self_on_2x_ttt.pdf}}};
        \draw (img1.south west) rectangle (img1.north east);
        \draw (B1) -- (img1);
      \end{scope}
    \end{tikzpicture} \end{adjustbox} &
  \begin{adjustbox}{valign=m}
    \begin{tikzpicture}[boximg]
      \node[anchor=south] (img) {\scalebox{1}[-1]{\includegraphics[width=0.25\textwidth]{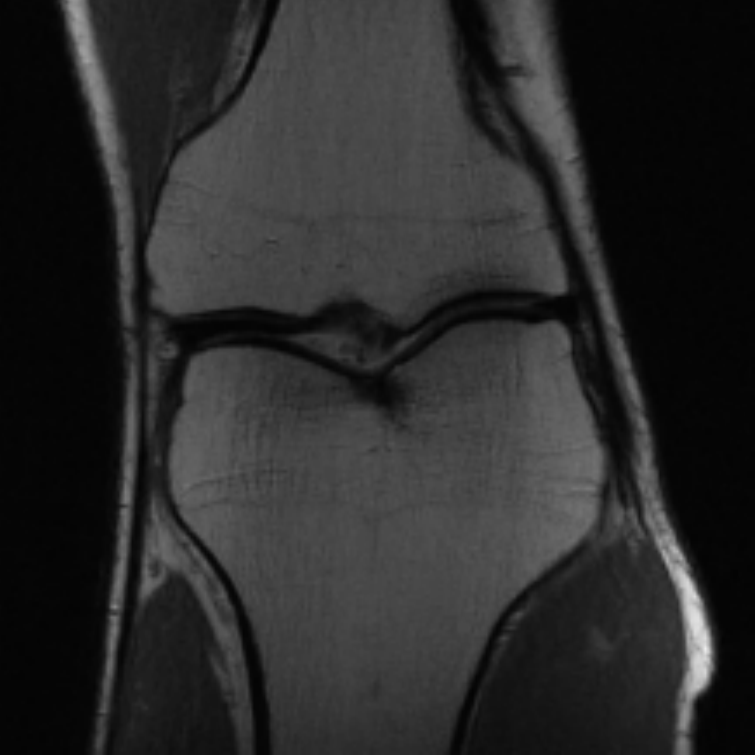}}};
      \begin{scope}[x=10,y=10]
        \node[draw,minimum height=0.55cm,minimum width=0.65cm] (B1) at (-2.4,7) {};
        \node (img1) at (-3.95,1.6) {\scalebox{1}[-1]{\includegraphics[width=0.08\textwidth,trim={1.5cm 2.6cm 4.8cm 3.9cm}, clip]{files/acceleration_shift/2x_sup_on_2x.pdf}}};
        \draw (img1.south west) rectangle (img1.north east);
        \draw (B1) -- (img1);
      \end{scope}
    \end{tikzpicture} \end{adjustbox} &
  \begin{adjustbox}{valign=m}
    \begin{tikzpicture}[boximg]
      \node[anchor=south] (img) {\scalebox{1}[-1]{\includegraphics[width=0.25\textwidth]{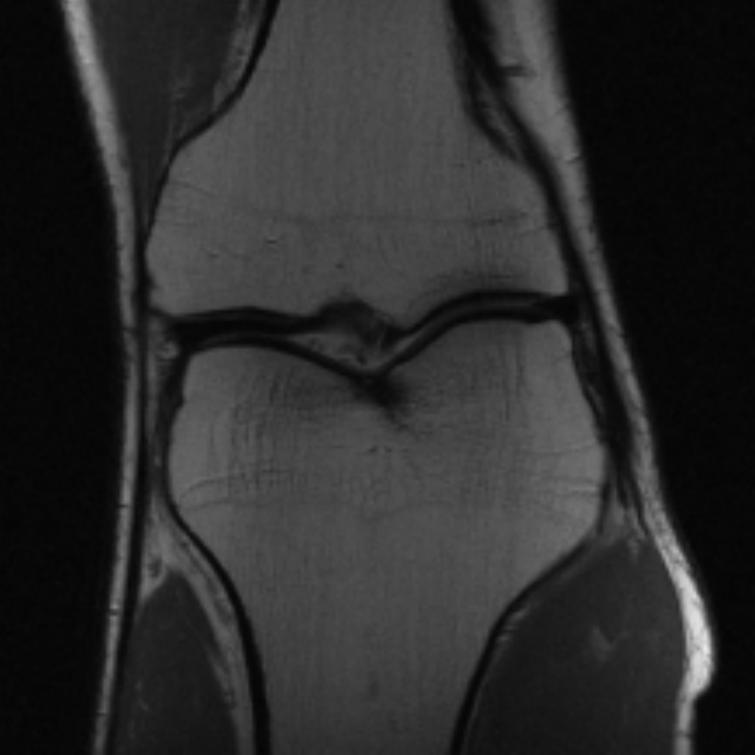}}};
      \begin{scope}[x=10,y=10]
        \node[draw,minimum height=0.55cm,minimum width=0.65cm] (B1) at (-2.4,7) {};
        \node (img1) at (-3.95,1.6) {\scalebox{1}[-1]{\includegraphics[width=0.08\textwidth,trim={1.5cm 2.6cm 4.8cm 3.9cm}, clip]{files/acceleration_shift/orig.pdf}}};
        \draw (img1.south west) rectangle (img1.north east);
        \draw (B1) -- (img1);
      \end{scope}
    \end{tikzpicture} \end{adjustbox} 
  \\
\end{tabular}
\end{adjustbox}
\captionsetup{skip=10pt}
\captionof{figure}{Our domain adaptation method yields a high perceptual quality by closing the distribution shift performance gap. All images are U-Net reconstructions from 4x under-sampled measurements (except the last row which is for 2x acceleration).
}
\vspace{-8pt}
\label{fig:gap}
\end{table}

\paragraph{Anatomy shift:} 
Here, the distribution $P$ are knee images, and $Q$ are brain images.  
The knee training set consists of 376 PD knee slices 
of the fastMRI training dataset and the brain training set consists of
310 AXT2 brain slices of the fastMRI training dataset. We train U-Net and VarNet on both training sets 
separately, and test them on a brain test set consisting of 100 AXT2 brain slices of the fastMRI validation dataset, this yields the numbers for supervised training in Table~\ref{tab:results}. Next, we performed the same experiment, but this time we included self-supervision into the training, and we applied test-time-training as described above. 

\paragraph{Dataset shift:} 
Here, the distribution $P$ is a subset of the fastMRI dataset, and $Q$ is the Stanford dataset. The version of the fastMRI training set consists of 366 PDFS knee slices, and the Stanford training set consists of 308 PDFS knee slices. We then repeat the same experiment explained above for anatomy shift.

\paragraph{Modality shift:} 
Here, the $P$ and $Q$ distributions are the AXT2 and AXT1PRE slices of the fastMRI brain dataset. 
The AXT2 training set consists of 310 slices, and the AXT1PRE consist of 320 slices. 
We then repeat the same experiment explained above. 

\paragraph{Acceleration shift:} 
Here, the distribution $P$ are 4x accelerated knee measurements, and $Q$ are 2x accelerated knee measurements. 
The 4x training set consists of 376 PD knee slices 
of the fastMRI training dataset and the 2x training set consists of the same 376 PD knee slices but accelerated 2 times instead of 4. 

Table~\ref{tab:results} and Table~\ref{tab:runtimes} contain the results of the experiments. Figure~\ref{fig:gap} contains example reconstructions, more examples are in the appendix in Figures~\ref{fig:anat-shift}, \ref{fig:data-shift}, \ref{fig:acq-shift}, and \ref{fig:acc-shift}.


\subsection{Discussion on the results} 
We draw the following conclusions from the results.

%
\paragraph{TTT essentially closes the distribution shift performance gap.}
    For the four considered natural distribution shifts, our method closes the distribution shift performance gap by 87-99\%. 
    Thus, our method closes the gap for practical purposes for those four distribution shifts.
    This is also reflected in a significant increase in perceptual quality (see Figure~\ref{fig:gap} and the appendix). 
    %
    
    \paragraph{TTT also slightly improves in-distribution performance,} which is not surprising as it tunes the parameters on each instance individually as opposed to finding the parameters that work best on average on all slices. That is why we measure the performance gap with and without TTT included.

\paragraph{Increased computation cost.}
    Our approach offers significant reduction in the distribution shift performance gap. However, TTT comes at the cost of more computations at the inference as shown in Table~\ref{tab:runtimes}. 
    %

\subsection{Ablation studies}

\paragraph{Including self-supervision in training is critical for performance.}    
    TTT on a model trained only on the supervised loss (without including the self-supervised loss during training) gives only minimal (for anatomy shift) to no (for dataset shift) performance improvement. 
    %

\begin{figure}[t!]
\centering
  \begin{center}
    \begin{tikzpicture}
    \begin{groupplot}[
    y tick label style={/pgf/number format/.cd,fixed,precision=4},scaled y ticks = false,
    legend style={at={(1,0.6)} , 
    /tikz/every even column/.append style={column sep=-0.1cm}
     },
             group
             style={group size= 1 by 3, xlabels at=edge bottom,
             horizontal sep=0.1cm, vertical sep=0.2cm,
             }, 
             width=0.33\textwidth,height=0.2\textwidth,
             ]
    \nextgroupplot[
        ylabel style={rotate=-90,xshift=-2ex},
        ylabel={\parbox{3.3cm}{\centering test-time-train error \\ $\mc L_{\text{self}}( \vy_{\text{train}}, f_{\vtheta_t}(\vy_{\text{train}}) )$}},
        title={chosen early-stopping point}, title style={xshift=6ex,}, xmode=log,yticklabel pos=left,xticklabels={,,}]
    \addplot +[mark=none,steelblue] table[x=x,y=err]{./files/es.txt};
    %
    \nextgroupplot[
    ylabel style={rotate=-90,xshift=-2ex},
    ylabel={ 
    \parbox{3.3cm}{\centering test-time-val. error \\
    $\norm[1]{ \vy_{\text{val}} - f_{\vtheta_t}(\vy_{\text{val}}) }$ }},xlabel={},title={},xmode=log,xticklabels={,,}]
    	\addplot +[mark=none,steelblue] table[x=x,y=verr]{./files/es.txt};
    %
    \nextgroupplot[
    ylabel style={rotate=-90,xshift=-2ex},
    ylabel={\parbox{3.3cm}{ \centering true accuracy \\ $\text{SSIM}(f_{\vtheta_t}(\vy),\vx)$}},
    xlabel={iteration $t$},title={},xmode=log,yticklabel pos= left,]
    	\addplot +[mark=none,steelblue] table[x=x,y=score]{./files/es.txt};
    \end{groupplot}
    \draw [dashed,help lines,black] (3.25,2)-|(3.25,-3.88);
    \end{tikzpicture}
  \end{center}
\vspace{-3pt}
\caption{
Test-time-training error, validation error, and error with respect to the original (unknown) image $\vx$. 
The given measurement $\vy$ is split into a part for test-time-training ($\vy_{\text{train}}$) and into a part for validation ($\vy_{\text{val}}$), which in turn is used to determine the early-stopping time. Comparing the validation error with the true accuracy shows that i) the validation error is a good proxy for the true accuracy, and ii) early stopping is critical for the performance of test-time training.
}
\vspace{-8pt}
\label{fig:es}
\end{figure}
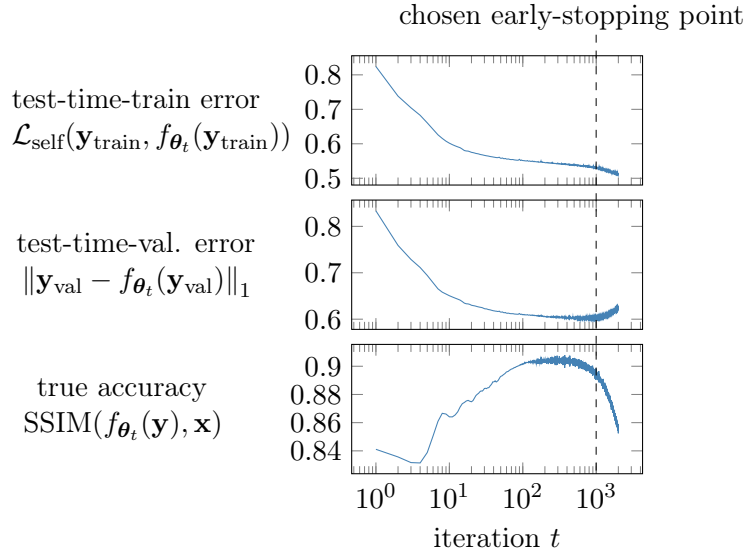

\paragraph{Early-stopping TTT is critical for performance.}
TTT is performed on the under-sampled test measurement and overfits to it without our early-stopping mechanism, which chooses the early-stopping time based on a hold-out set obtained from the under-sampled measurement.  
Figure~\ref{fig:es} illustrates that early stopping is critical for TTT to perform well. As depicted, the self-validation error in the middle panel (which is computed using a fraction of under-sampled test measurement) yields a good early-stopping time, in that the validation loss is inversely correlated with the true accuracy, as expected.  

\paragraph{Variants of TTT.} 
TTT in image classification is typically quite different than the version of TTT proposed here, in that a network with two heads is trained, one part on a supervised loss and one part at a self-supervised loss on an auxiliary task, such as predicting rotations of an image. At inference, TTT is performed on the self-supervised loss~\cite{sun2020test}.
We experimented with the analogous idea for image reconstruction: We took a U-Net with two decoders and a joint encoder, and trained the network on a self-supervised denoising problem and the supervised compressive sensing problem. At inference, we performed TTT on the denoising problem. This approach fails to improve model robustness (see Figure~\ref{fig:semi-ttt} in the supplement). 

We also experimented with another variant of TTT that also does not work well. We exploited the idea behind the CycleGAN~\citep{zhu2017unpaired} to build CycleU-Net. CylcleU-Net comprises two U-Nets in tandem and is trained based on two forward passes: (1) a supervised pass mapping the ground-truth image to itself, and (2) a self-supervised pass mapping the low-quality image to itself (by switching the places of the two U-Nets). This type of training enables TTT w.r.t. the self-supervised pass at inference. Table~\ref{tab:gap-cyc} in the supplement shows that only 29\% of the gap is closed for anatomy shift when this approach is employed.

\section{Test-time training can provably adapt to a distribution shift}

\label{sec:theory}

In this section, we discuss an example illustrating that test-time training with an appropriate loss can optimally adapt to a particular distribution shift.

Consider the problem of denoising a signal that lies in a subspace. The training distribution draws a signal from an unknown $d$-dimensional subspace and  corrupts it by Gaussian noise with noise variance $\sigma^2$: 
\begin{align*}
    P \colon \vy = \vx + \vz, \vx = \mU \vc, \vc \sim \mc N(0,\mI), \vz \sim \mc N(0, \mI \sigma^2).
\end{align*}
The test distribution draws a signal from the same subspace, but corrupts it with noise that has a different noise variance: \begin{align*}
    Q \colon \vy = \vx + \vz, \vx = \mU \vc, \vc \sim \mc N(0,\mI), \vz \sim \mc N(0, \mI \varsigma^2).
\end{align*}
Here, $\mU \in \reals^{n\times d}$ is an orthonormal basis for the signal subspace. 
We consider a linear denoiser of the form $f(\vy) = \alpha \mV \transp{\mV} \vy$, where $\mV \in \reals^{n\times d}$ is an orthonormal basis and $\alpha \in [0,1]$ is a scalar. 
We measure performance in terms of the population risk 
\[
R_Q(\alpha,\mV) = \EX[Q]{ \norm[2]{\vx -  \alpha \mV \transp{\mV}  \vy}^2 }.
\]
The population risk is minimized by the estimator $\alpha = \frac{1}{1+\varsigma^2}$ and $\mV = \mU$. So the optimal estimator for the unknown signal $\vx$ is $\hat \vx = \frac{1}{1+\varsigma^2} \mU \transp{\mU} \vy$ and it projects the observation onto the subspace and then shrinks with a coefficient that is dependent on the noise variance, the larger the noise variance, the more the estimator shrinks.

We train the method on the supervised loss
\begin{align}
    \mc L_P(\alpha,\mV)
    =
    R_P(\alpha,\mV).
\end{align}
For simplicity, we choose as the supervised loss the population loss, because we are not interested in finite-sample effects. This corresponds to a setup where we have abundant training data on the distribution $P$. 

A minimizer of the loss is $\alpha = \frac{1}{1+\sigma^2}, \mV = \mU$, thus minimizing on the loss gives an estimator that is optimal on the distribution $P$. However, the estimator is sub-optimal on the distribution $Q$. 

We next obtain an observation $\vy$ from the distribution $Q$, and our goal is to estimate the corresponding signal $\vx$. Towards this end, we first perform test-time-training on the distribution $Q$, i.e., we minimize the self-supervised loss
\begin{align*}
L_{SS}(\alpha,\mU,\vy) 
&\!=\! 
\norm[2]{\vy \!-\! \alpha \mU \transp{\mU}\vy }^2 \!+\!
\frac{2\alpha d}{n - d} \norm[2]{(\mI \!-\! \mU \transp{\mU}) \vy}^2
\end{align*}
over the scale parameter $\alpha$. Note that we already obtained $\mV = \mU$ from the training on the distribution $P$. For $d$ and $n$ large, the self-supervised loss $L_{SS}(\alpha,\mU,\vy)$ concentrates around the expectation, which can be shown to be $(1-\alpha)^2 d + \alpha^2 d \varsigma^2$. Minimizing this over $\alpha$ gives $\alpha = \frac{1}{1+\varsigma^2}$. See proof in Appendix~\ref{sec:proof-analytical-example}.
Thus, for this example, test-time-training actually yields the optimal estimator for the signal $\vx$ under the distribution $Q$! 


This example demonstrates that in theory, test-time-training 
can yield an optimal estimator under a distribution shift. 
It also illustrates that for TTT to work, the main task and the self-supervised learning task have to be related, and the choice of the self-supervised loss has to take this relation into account. Also note that this particular self-supervised loss will not work for a different distribution shift, it is tailored to the anticipated shift in noise variance. This highlights that different distribution shifts might require different self-supervised losses for test-time-training to work.

\section{Conclusion}
Distribution shifts are a key limiting factor in deep learning based imaging. 
In this paper, we proposed a novel domain adaptation method based on self-supervised training and test-time training (TTT) that 
reduces the distribution shift performance gap for four natural distribution shift by 87-99\%, and thus gives significantly better image quality at inference.

It is perhaps surprising that TTT works so well for natural distribution shifts, in particular since a variety of domain adaptation methods for classification only give a marginal improvement (if at all) for natural distribution shifts~\citet{miller2021accuracy}. 
However, image reconstruction problems are much more amenable for domain adaptation methods, since at test time, we are given an entire measurement of an image, which contains lots of information about the image and a potential distribution shift. We exploit this information through TTT and to perform early stopping during TTT.
Contrary, in an image classification setup, we are not given any information about the label and thus it might be much harder to adapt at test time.

Many important questions remain: 
Due to the TTT, our method has significantly higher computational cost at inference than using a plain neural network without TTT.  Reducing this computational cost is desirable. Moreover, our method is specific to compressive sensing problems, and thus developing TTT methods that are also applicable to inverse problems beyond compressive sensing is an important future direction. Finally, it is important to gain a better theoretical understanding of the mechanisms making TTT work.

\section*{Reproducibility}

Our repository at 
\href{https://github.com/MLI-lab/ttt_for_deep_learning_cs}
{https://github.com/MLI-lab/ttt\_for\_deep\_learning\_cs}
contains the code to reproduce all results in this paper.

\section*{Acknowledgment}\label{sec:ack}

M. Zalbagi Darestani and R. Heckel are (partially) supported by NSF award IIS-1816986, and R. Heckel acknowledges support by the Institute of Advanced Studies at the Technical University of Munich, and the Deutsche Forschungsgemeinschaft (DFG, German Research Foundation) - 456465471, 464123524. J. Liu is supported by the 6G Future Lab Bavaria.

\printbibliography

\newpage

\appendix

\section{Intensity distribution changes under modality shift}

One of the natural distribution shifts we study in our work is a modality shift, where the acquisition mode of train and test domains differ. This difference results in a shift in intensity distributions that is illustrated in Figure~\ref{fig:intensity} for the shift from T2 to T1PRE brain images that we consider.


\begin{table}[h!]
\setlength{\tabcolsep}{-1pt}
\centering
\begin{tabular}{cc}
  \footnotesize T2 intensity distribution & \footnotesize T1PRE intensity distribution \\
    \begin{tikzpicture}
    \pgfplotsset{every x tick scale label/.append style={yshift=1.6em,xshift=1.2em}}
    \begin{axis}[scaled x ticks = true, ymax=1,width=0.25\textwidth,height=0.2\textwidth]
    \addplot[ybar,ybar legend,bar width=10000sp,draw=none, fill=steelblue,opacity=1] table[x index=0,y index=1] {files/acquisition_shift/hist.csv};
    \end{axis}
    \end{tikzpicture} &
    \begin{tikzpicture}
    \pgfplotsset{every x tick scale label/.append style={yshift=1.6em,xshift=1.5em}}
    \begin{axis}[scaled x ticks = true,scaled y ticks = true ,ymax=1,width=0.25\textwidth,height=0.2\textwidth,yticklabels={,,}]
    \addplot[ybar,ybar legend,bar width=10000sp,draw=none, fill=steelblue,opacity=1] table[x index=2,y index=3] {files/acquisition_shift/hist.csv};
    \end{axis}
    \end{tikzpicture} \\
   \raisebox{.5\height}{\includegraphics[width=0.12\textwidth,valign=b]{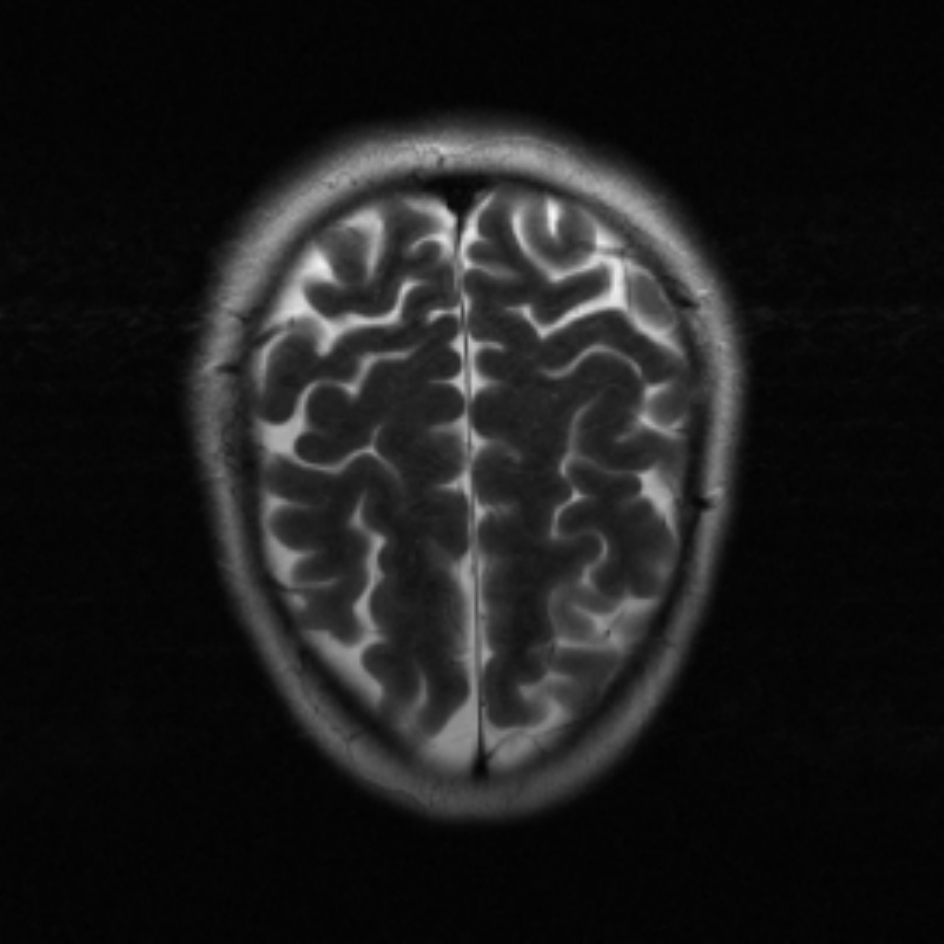}} & \raisebox{.5\height}{\includegraphics[width=0.12\textwidth,valign=b]{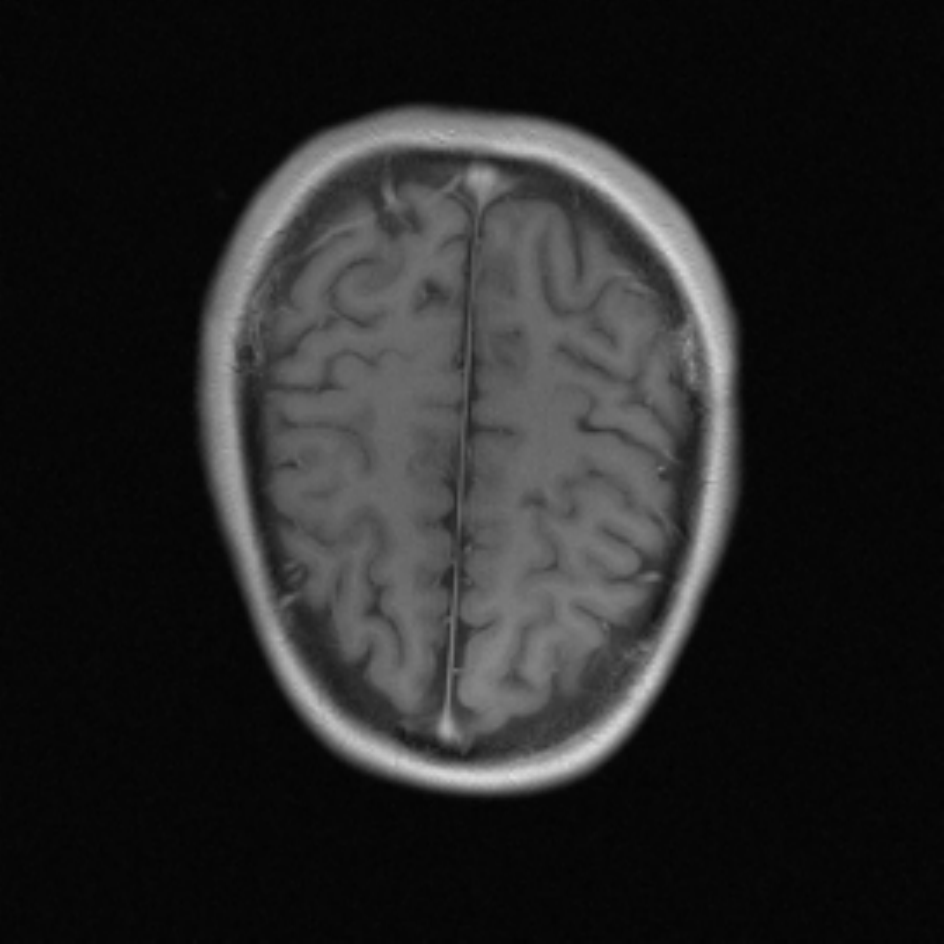}}
\end{tabular}
\vspace{-1cm}
\captionof{figure}{
Modality shifts occurs within an anatomy where the contrast changes via modality change, and thus the intensity distribution of the images changes.
}
\label{fig:intensity}
\end{table}

\section{Comparison to ZS-SSL}\label{sec:zs-ssl}
We mentioned in the introduction that an alternative test-time training (TTT) approach compared to our work is to use the method ZS-SSL~\citep{yaman2021zero}, which was originally introduced as a method for performing reconstruction for a single instance. Given a single under-sampled measurement $\vy$ of an image $\vx$, ZS-SSL creates a dataset $\{(\vy_1,\vy_1',\vy_1''),\ldots,(\vy_K,\vy_K',\vy_K'')\}$ by randomly sampling pixels from $\vy$. Each triplet is a random partition of the under-sampled $k$-space and the proportions are determined by two hyper-parameters $p$ and $p'$.
In each triplet, $\vy_i$ is fed to the network as input, $\vy_i'$ is used in the TTT loss based on the network output, and $\vy_i''$ is used for self-validation to determine when to stop training. Unlike our method that relies on no hyper-parameters, ZS-SSL has three hyper-parameters to tune which are $K$, the number of splits for the synthesized dataset, and $p$ and $p'$ which determine how to split the under-sampled measurement to three partitions

Our TTT approach incorporates a consistency-based self-supervised loss during the supervised pre-training stage, and then TTT is performed w.r.t. that self-supervised loss.
Thus, the main advantage of ZS-SSL over our method is that it does not impose any constraints on the pre-training scheme. Performance-wise, Table~\ref{tab:zs-ssl} compares our method to ZS-SSL for anatomy shift. Both methods are applied to unrolled networks (our approach is applied to VarNet and ZS-SSL is applied to a similar unrolled network explained in~\citep{yaman2021zero}). Table~\ref{tab:zs-ssl} shows that both achieve on-par performance in terms of closing the distribution shift performance gap. However, our approach is computationally cheaper than ZS-SSL which is expected since our method does not rely on dataset synthesis for TTT.

\begin{table*}[th]
\centering
\begin{adjustbox}{width=0.65\textwidth}
\begin{tabular}{l|c|c|c|c}
\toprule 
 \multicolumn{1}{c|}{} & \multicolumn{4}{c}{\bf anatomy shift}  \\
 \multicolumn{1}{c|}{} & \multicolumn{4}{c}{\bf P:knee \; Q:brain}  \\
 \multicolumn{1}{c|}{} & \multicolumn{2}{c}{Our approach} & \multicolumn{2}{c}{ZS-SSL}  \\
 \multicolumn{1}{c|}{setup} & \multicolumn{1}{c}{SSIM} & \multicolumn{1}{c}{TTT runtime} & \multicolumn{1}{c}{SSIM} & \multicolumn{1}{c}{TTT runtime}  \\
 \multicolumn{1}{c|}{} & \multicolumn{1}{c}{} & \multicolumn{1}{c}{(mins/slice)} & \multicolumn{1}{c}{} & \multicolumn{1}{c}{(mins/slice)}  \\
\hline
    train on P test on Q + TTT  & 0.9358 & 12.9 & 0.9343 & 59.4  \\ 
    train on Q test on Q + TTT    & 0.9375 & 4.2 & \textbf{0.9365} & 33.5  \\
    \thickhline
    train on Q test on Q      & \textbf{0.9396}  & - & 0.9331  & -  \\
    train on P test on Q      & 0.8802 & - & 0.9029 & -  \\
\thickhline
    \multicolumn{1}{l|}{fraction of gap closed by TTT} & \multicolumn{2}{c}{97.1\%} & \multicolumn{2}{c}{93.1\%} \\
\bottomrule
\end{tabular}
\end{adjustbox}
\caption{\textbf{Both ZS-SSL and our test-time training (TTT) approach are highly effective for overcoming natural anatomy shift}. SSIM scores are averaged over 30 validation slices of the fastMRI brain dataset. For both methods, TTT is applied with a similar learning rate as the one used during pre-training.}
\vspace{-8pt}
\label{tab:zs-ssl}
\end{table*}

\section{Variants of TTT}
\label{sec:semi-ttt}

As mentioned in the main body, we also experimented with two variants of TTT, which fail to improve robustness. Those variants are discussed here. 

\paragraph{Variant 1.} The first variant is very similar to how TTT is typically performed in image classification. 
Let $E_\vbeta$ be the encoder of the U-Net and let $D_\vtheta$ and $D_\vvartheta$ be two decoders. 
The encoder with the first decoder is trained to solve the main task which is the supervised compressed sensing task, and the encoder with the second decoder is trained to solve an auxiliary task which we take as a self-supervised denoising task. Specifically, we train the method on the loss:
\begin{align}
    \mc L(\vbeta,\vtheta,\vvartheta) &=
    \frac{1}{n}
    \sum_{i=1}^n
    \Bigg(
    \underbrace{\frac{\norm[1]{\vx_i- E_\vbeta ( D_\vtheta(\mA^\dagger \vy_i))}}{\norm[1]{\vx_i}}}_{\mathcal{L}_{\text{sup}}} +
    \underbrace{\frac{\norm[1]{\mA^\dagger \vy_i - 
    \mA 
    E_\vbeta ( D_\vvartheta (\mA^\dagger \vy_i +\vz_i)) }}{\norm[1]{\mA^\dagger \vy_i}}}_{\mathcal{L}_{\text{self}}}
    \Bigg),
    \label{eq:joint-loss-denois}
\end{align}
where $\vz \sim \mathcal{N}(0,\sigma^2)$ is random Gaussian noise that we generated for the self-supervised loss, 
and $\mA^\dagger \vy$ is the input of the U-Net (i.e., the least-squares reconstruction from measurement $\vy$). 
Similar to our domain adaptation method, at the inference, we optimize the self-supervised loss $\mathcal{L}_{\text{self}}(\vbeta,\vvartheta,\vy) $ with respect to the weights $\vbeta,\vvartheta$ for a given undersampled measurement $\vy$. The stopping iteration was set to $10$ heuristically (as opposed to our domain adaptation method where we used a self-validation-based automatic early stopping).

Figure~\ref{fig:semi-ttt} shows the results when training U-Net with loss function~\eqref{eq:joint-loss-denois} in comparison to a U-Net trained in a supervised manner. 
Because we observed no benefit from this type of TTT under a natural anatomical shift (i.e., $0\%$ of the gap is closed), in Figure~\ref{fig:semi-ttt}, we illustrate the result for artificial distribution shifts. We specifically evaluate this variant on the brain validation set of fastMRI under (1) no transformation (regular), (2) horizontal flipping (fliph), (3) vertical flipping (flipv), (4) 90-degree rotation (rot90), (5) Gaussian noise (noisy), (6) blur artifacts (blur), (7) elastic transformation (elastic), (8) spike artifacts (spike), and (9) ghosting artifacts (ghost). Interestingly, this variant of TTT is not even helpful with these artificial shifts according to SSIM scores reported in Figure~\ref{fig:semi-ttt}.

\begin{figure*}[t!]
\centering
  \begin{subfigure}[t]{1\textwidth}
  \begin{center}
    \begin{tikzpicture}
    \begin{groupplot}[
    y tick label style={/pgf/number format/.cd,fixed,precision=4},scaled y ticks = false,
    legend style={at={(1,1.03)} , draw=none, fill=none, nodes={scale=0.7},
    /tikz/every even column/.append style={column sep=-0.1cm}
     }, legend image code/.code={
        \draw [#1] (0cm,-0.1cm) rectangle (0.2cm,0.25cm); },
             group
             style={group size= 1 by 1, xlabels at=edge bottom,
             horizontal sep=1.5cm, vertical sep=0.4cm,
             }, 
             ybar,
             width=0.7\textwidth,height=0.23\textwidth,
             ymin=0.5,
             ]
    \nextgroupplot[ylabel=SSIM,xlabel={artificial distribution shift type},title={},xtick=data,xticklabels from table={files/semi_bar.txt}{x},ytick pos=left,xtick pos=bottom,]
    	\addplot +[mark=none,amber,fill=amber] table[x expr=\coordindex,y=y]{./files/semi_bar.txt};
    	\addlegendentry{U-Net w/o TTT}
    	\addplot +[mark=none,steelblue,fill=steelblue] table[x expr=\coordindex,y=y_semi]{./files/semi_bar.txt};
    	\addlegendentry{U-Net w TTT}
    \end{groupplot}
    \end{tikzpicture}
  \end{center}
  \end{subfigure}\par\medskip 
  %
  %
  %
  %
\caption{\textbf{A variant of test-time training (TTT) with a two-head U-Net inspired from image classification does not improve model robustness.} 
Including denoising as an auxiliary task during supervised training of U-Net even impairs model robustness for some artificial distribution shifts. 
Evaluation scores when each transformation on the $x$ axis is applied to the brain validation set.
}
\label{fig:semi-ttt}
\end{figure*}

\paragraph{Variant 2.} The second variant is a method we propose based on CycleGAN~\citep{zhu2017unpaired}, dubbed CycleU-Net, and works as follows.
Suppose we put two U-Nets $f_\vtheta$ and $g_\vbeta$ in tandem to form a larger model. We then train the resulting model by making two forward passes for each input pair $(\vy_i,\vx_i)$ at each epoch: (1) a self-supervised pass as $g_\vbeta(f_\theta(\mA^\dagger \vy_i))$, and (2) a supervised forward pass as $f_\vtheta(g_\vbeta(\vx_i))$. This is illustrated in Figure~\ref{fig:cycleunet}.

\begin{figure}[h!]

\centering
\begin{tikzpicture}
[node distance = 1cm, auto,font=\footnotesize,
every node/.style={node distance=2cm},
comment/.style={rectangle, inner sep= 3pt, text width=2cm, text badly centered,node distance=0.15cm, font=\footnotesize}, 
force1/.style={rectangle, draw=steelblue, fill=steelblue!10, inner sep=-5pt, text width=2cm, text badly centered, minimum height=1.2cm, font=\bfseries\footnotesize},
force2/.style={rectangle, draw=amber, fill=amber!10, inner sep=-5pt, text width=2cm, text badly centered, minimum height=1.2cm, font=\bfseries\footnotesize}] 

\node [force1] (unet1) {$f_\vtheta$};
\node [force2, right=1cm of unet1] (unet2) {$g_\vbeta$};
\node [draw=none, left=2cm of unet1] (ls1) at (0,0) {\includegraphics[scale=0.3,]{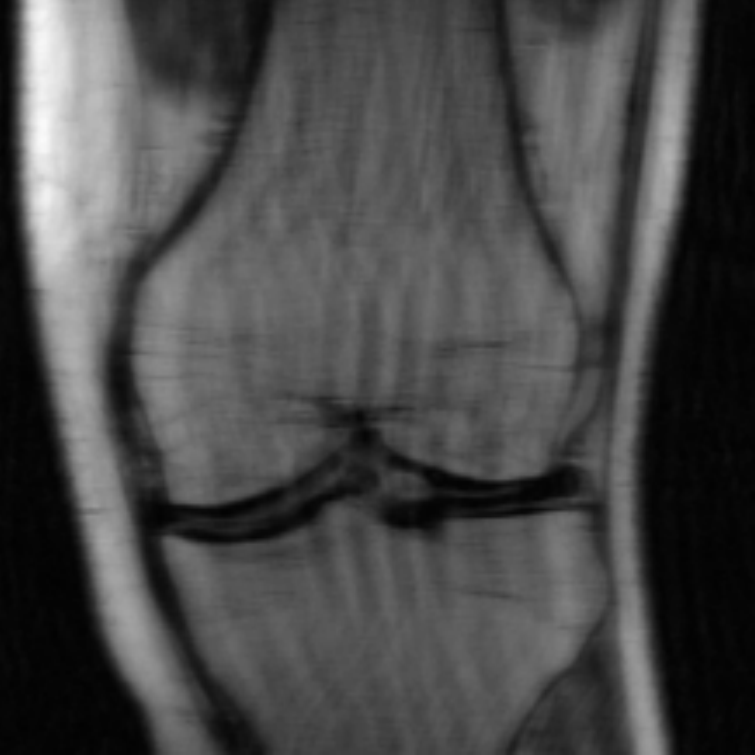}};
\node [draw=none, right=4.7cm of unet2] (ls2) at (0,0) {\includegraphics[scale=0.3,]{files/ls.pdf}};

\node [force2, below=2cm of unet1] (unet3) {$g_\vbeta$};
\node [force1, right=1cm of unet3] (unet4) {$f_\vtheta$};
\node [draw=none, left=2cm of unet3] (gt1) at (0,-3.2) {\includegraphics[scale=0.3,]{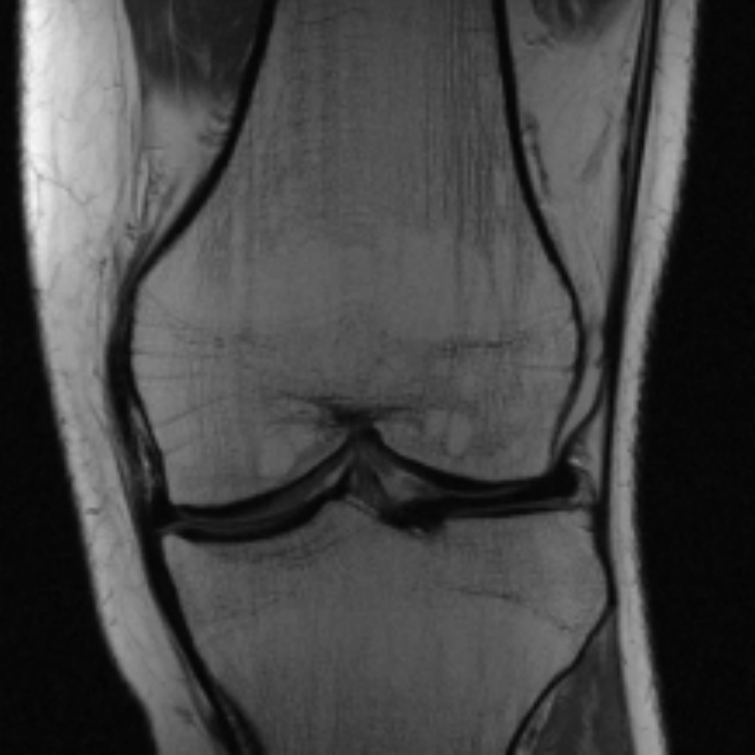}};
\node [draw=none, right=4.7cm of unet4] (gt2) at (0,-3.2) {\includegraphics[scale=0.3,]{files/gt.pdf}};

\node [comment, above=0.01 of ls1] (ls-lab1)
{zero-filled};
\node [comment, left=-0.8 of ls1] (col1)
{\large :};
\node [comment, left=0.3 of ls1] (self)
{self-supervised \\ forward pass};
\node [comment, above=0.01 of ls2] (ls-lab2)
{zero-filled};

\node [comment, above=0.01 of gt1] (gt-lab1)
{ground-truth};
\node [comment, left=-0.8 of gt1] (col2)
{\large :};
\node [comment, left=0.3 of gt1] (sup)
{supervised \\ forward pass};
\node [comment, above=0.01 of gt2] (gt-lab2)
{ground-truth};

\path[->,thick] 
(ls1) edge (unet1)
(unet1) edge (unet2)
(unet2) edge (ls2);
\path[->,thick] 
(gt1) edge (unet3)
(unet3) edge (unet4)
(unet4) edge (gt2);

\end{tikzpicture} 
\caption{\textbf{Forward passes in CycleU-Net.} $f_\vartheta$ and $g_\vbeta$ are two U-Nets with the same number of parameters. Training based on the two forward passes enables test-time training (TTT) w.r.t. the self-supervised pass. Only $f_\vtheta$ is needed for inference after TTT.}
\label{fig:cycleunet}
\end{figure}

By defining those two forward passes, we can build the training loss function as follows:
\begin{align*}
    \mc L(\vbeta,\vtheta) &=
    \frac{1}{n}
    \sum_{i=1}^n
    \Bigg(
    \norm[1]{\vx_i - f_\vtheta(g_\vbeta(\vx_i)))} +
    \norm[1]{\mA^\dagger \vy_i - g_\vbeta(f_\vtheta(\mA^\dagger \vy_i)))}
    \\&\hspace{2cm}+ 
    \norm[1]{\vx_i - f_\vtheta(\mA^\dagger \vy_i)} +
    \norm[1]{\mA^\dagger \vy_i - g_\vbeta(\vx_i)}
    \Bigg).
    \label{eq:joint-loss-denois}
\end{align*}
Here, the first two terms enforce input-output equality for each of the two forward passes. The third term ensures that $f_\vtheta$ learns a mapping from the under-sampled to the ground-truth domain (likewise, the fourth term ensures that $g_\vbeta$ learns a mapping from the ground-truth to the under-sampled domain). Note that without the last two terms, there is no guarantee that $f_\vtheta$ reconstructs the ground-truth image from the under-sampled measurement.

At inference, we perform TTT w.r.t $\norm[1]{\mA^\dagger \vy_i - g_\vbeta(f_\vtheta(\mA^\dagger \vy_i)))}$ (both $\vtheta$ and $\vbeta$ are optimized) which is fully self-supervised, then detach $f_\vtheta$ from the architecture and use it for reconstruction as $f_\vtheta(\mA^\dagger \vy)$.

Table~\ref{tab:gap-cyc} shows the performance of this approach for anatomy shift (the training and test data are the same as Section~\ref{sec:exp}). As shown, the fraction of gap closed by performing TTT on CycleU-Net is 29.4\% which demonstrates that this approach is not effective in closing the gap.

\begin{table}[t!]
\centering
\begin{adjustbox}{width=0.45\textwidth}
\begin{tabular}{l|c}
\toprule 
\multicolumn{1}{c}{setup} & \multicolumn{1}{c}{P: knee} \\
\multicolumn{1}{c}{} & \multicolumn{1}{c}{Q: brain} \\
\hline
    train on Q test on Q                & 0.9187 \\ 
    train on P test on Q                & 0.8521 \\ \hline 
    distribution shift performance gap  & 0.0666 \\ 
\thickhline
    train on Q test on Q + TTT          & 0.9212 \\
    train on P test on Q + TTT          & 0.8742 \\ \hline 
    distribution shift performance gap  & 0.0470 \\
    \thickhline
    fraction of gap closed by TTT       & 29.4\%        \\
\bottomrule
\end{tabular}
\end{adjustbox}
\caption{\textbf{A variant of test-time training (TTT) with a CycleU-Net inspired from CycleGAN improves model robustness only slightly.} The first three rows are SSIM scores for U-Net when trained in a supervised manner. The second three rows are SSIM scores for CycleU-Net when TTT is applied at the inference. This variant closes the gap by $29.4\%$ but is outperformed by our original method which closes the gap by $98.6\%$ which is discussed in the main body.
}
\vspace{-10pt}
\label{tab:gap-cyc}
\end{table}
\section{Relation to imaging with un-trained neural networks}
\label{sec:priors}

Our domain adaptation method consists of training a network with supervised and self-supervised loss and at inference, training again on the self-supervised loss with early stopping.


The inference step is very similar to how an un-trained neural network is used for image recovery. To see this, reconstruction of a signal from an observation with an un-trained network works as follows.
Let $f_\vtheta$ be a convolutional network that is initialized randomly, and optimized on the loss
\begin{align*}
\mc L(\vtheta) =  \norm[1]{\vy - \mA f_\vtheta(\vz)},
\end{align*}
with gradient descent and early-stopping the iterates for regularization. Here, $\vz$ is an input that is typically random, and has observed to be relatively irrelevant. 
\citet{ulyanov2018deep} demonstrated that this method works well for denoising and super-resolution, and \citet{heckel2019deep,arora2020untrained,darestani2020can} have shown that un-trained convolutional networks work well for denoising and accelerated MRI. 
Reconstruction with an un-trained network is very similar to our TTT step, with the difference that the un-trained networks start from a random initialization of the weights (which in this case, for the DIP-based model introduced in \citep{darestani2020can}, they achieve $0.9046$ SSIM on brain images for the anatomy shift setup we consider in Table~\ref{tab:results}).


We could also just initialize the weights of an un-trained network by pre-training the network on a dataset, and we might expect that this improves performance. 
Viewed from that angle, our method might look like an un-trained network in disguise. 

In this section, we argue that our method is not an un-trained network in disguise by demonstrating that TTT improves over the image prior learned by pre-training. 
We perform the following experiment. We consider the anatomy shift and train a U-Net $f_\text{sup}$ in a supervised manner and another U-net $f_\text{joint}$ using our joint loss function~\eqref{eq:joint-loss} that incorporates self-supervision. 
We then perform TTT by applying $f_\text{sup}$ and $f_\text{joint}$ to a brain test sample. 

Figure~\ref{fig:priors} depicts the results. The first row shows the self-validation error whose increase determines the early-stopping point. The second row depicts the SSIM with respect to the unknown ground truth image during TTT. 
Observe that when TTT is applied to $f_\text{joint}$, SSIM improves during TTT. Thus TTT improves over the learned prior from knees to achieve a good reconstruction accuracy on the brain test sample. 

Contrary, when TTT is applied to $f_\text{sup}$, SSIM first decreases dramatically and then starts to rise again. This suggests that TTT applied to a fully self-supervised loss ignores the learned prior and uses the model as an un-trained network.

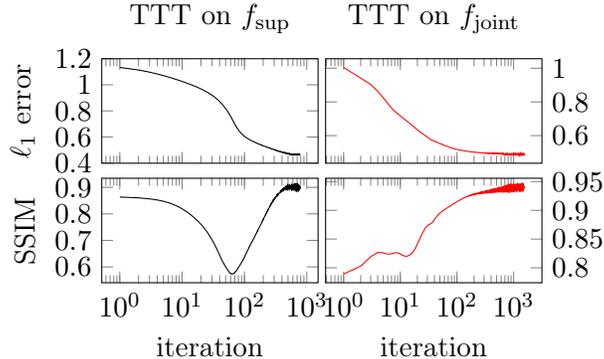
\begin{figure}[t!]
\centering
  \begin{center}
    \begin{tikzpicture}
    \begin{groupplot}[
    y tick label style={/pgf/number format/.cd,fixed,precision=4},scaled y ticks = false,
    legend style={at={(1,0.6)} , 
    /tikz/every even column/.append style={column sep=-0.1cm}
     },
             group
             style={group size= 2 by 2, xlabels at=edge bottom,
             xticklabels at=edge bottom,
             horizontal sep=0.1cm, vertical sep=0.2cm,
             }, 
             width=0.27\textwidth,height=0.18\textwidth,
             ]
    \nextgroupplot[ylabel={$\ell_1$ error},xlabel={},title={TTT on $f_\text{sup}$},xmode=log,]
    	\addplot +[mark=none,black] table[x=x,y=verr]{./files/ttt_on_priors/sup.txt};
    \nextgroupplot[ylabel={},xlabel={},title={TTT on $f_\text{joint}$},xmode=log,yticklabel pos=right,]
    	\addplot +[mark=none,red] table[x=x,y=verr]{./files/ttt_on_priors/self.txt};
    \nextgroupplot[ylabel={SSIM},xlabel={iteration},title={},xmode=log,]
    	\addplot +[mark=none,black] table[x=x,y=score]{./files/ttt_on_priors/sup.txt};
    \nextgroupplot[ylabel={},xlabel={iteration},title={},xmode=log,yticklabel pos= right,]
    	\addplot +[mark=none,red] table[x=x,y=score]{./files/ttt_on_priors/self.txt};
    \end{groupplot}
    \end{tikzpicture}
  \end{center}
\caption{Test-time training (TTT) improves the reconstruction accuracy of a pre-trained model when self-supervision is included during training (right column), but treats a pre-trained model as an un-trained network when pre-training is fully supervised (left column). The first row shows the self-validation error which is used to stop TTT early. The second row shows SSIM w.r.t. the ground truth image during TTT.}
\label{fig:priors}
\end{figure}

\section{Proof of claims in Section~\ref{sec:theory}: Test-time training can provably adapt to a distribution shift}
\label{sec:proof-analytical-example}

\begin{proposition}
The supervised loss 
\begin{align*}
    R_P(\alpha,\mV) = \EX[P]{ \norm[2]{\vx -  \alpha \mV \transp{\mV}  \vy}^2 }
\end{align*}
is minimized by $\alpha = 1/(1+\sigma^2)$ and $\mV = \mU$.
\end{proposition}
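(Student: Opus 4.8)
The plan is to expand the quadratic risk, evaluate the Gaussian moments term by term, reduce the $\mV$-dependence to a single scalar measuring the alignment between $\mV$ and $\mU$, and then carry out the joint optimization over $\mV$ and $\alpha$.

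First I would set $\mathbf{P} := \mV\transp{\mV}$, the orthogonal projection onto the $d$-dimensional column span of $\mV$, and $\mathbf{P}_\mU := \mU\transp{\mU}$. Expanding
\[
\norm[2]{\vx - \alpha\mathbf{P}\vy}^2 = \norm[2]{\vx}^2 - 2\alpha\transp{\vx}\mathbf{P}\vy + \alpha^2\norm[2]{\mathbf{P}\vy}^2
\]
and substituting $\vy = \vx + \vz$ with $\vx = \mU\vc$, $\vc\sim\mc N(0,\mI)$ and $\vz\sim\mc N(0,\mI\sigma^2)$ independent and mean zero, the $\vx$--$\vz$ cross terms vanish in expectation. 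Using $\EX{\vc\transp{\vc}} = \mI$ and $\EX{\vz\transp{\vz}} = \sigma^2\mI$ together with $\mathbf{P}^2 = \mathbf{P} = \transp{\mathbf{P}}$, I get $\EX[P]{\norm[2]{\vx}^2} = d$, $\EX[P]{\transp{\vx}\mathbf{P}\vy} = \operatorname{tr}(\mathbf{P}\mathbf{P}_\mU)$, and $\EX[P]{\norm[2]{\mathbf{P}\vy}^2} = \operatorname{tr}(\mathbf{P}\mathbf{P}_\mU) + \sigma^2 d$. Setting $s := \operatorname{tr}(\mathbf{P}\mathbf{P}_\mU)$ — equivalently, the squared Frobenius norm of $\transp{\mU}\mV$ — the risk collapses to
\[
R_P(\alpha,\mV) = d - \alpha(2-\alpha)\,s + \alpha^2\sigma^2 d.
\]

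Next I would invoke the standard fact that $s$, the trace of the product of two rank-$d$ orthogonal projections, equals the sum of squared cosines of the principal angles between the two subspaces, hence $0\le s\le d$, with $s = d$ precisely when $\mV$ and $\mU$ span the same subspace — in particular for $\mV=\mU$. Since $\alpha\in[0,1]$ forces $\alpha(2-\alpha)\ge 0$, for every fixed $\alpha$ the risk is non-increasing in $s$, so $\mV=\mU$ is optimal irrespective of $\alpha$. Substituting $s=d$ leaves $R_P = d\big[(1-\alpha)^2 + \alpha^2\sigma^2\big]$, a strictly convex function of $\alpha$ (second derivative $2d(1+\sigma^2)>0$) whose unique stationary point $\alpha = 1/(1+\sigma^2)$ lies in $[0,1]$ and is therefore the minimizer.

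The computation is essentially routine; the one place to be a little careful is the joint minimization over $\mV$ and $\alpha$, where one must observe that the coefficient of the alignment term $s$ keeps a fixed sign on the feasible range $\alpha\in[0,1]$, so the two variables decouple — optimize $\mV$ first, uniformly in $\alpha$, and a one-variable convex problem remains. The only nontrivial ingredients are the Gaussian second-moment identities and the principal-angle bound $\operatorname{tr}(\mathbf{P}\mathbf{P}_\mU)\le d$, both of which are standard.
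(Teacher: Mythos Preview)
Your argument is correct. The expansion of the risk, the reduction to the scalar $s=\operatorname{tr}(\mathbf{P}\mathbf{P}_\mU)$, the bound $0\le s\le d$, and the one-variable optimization in $\alpha$ are all sound, and you are careful to use the standing constraint $\alpha\in[0,1]$ to fix the sign of $\alpha(2-\alpha)$ so that the $\mV$-optimization decouples.

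The paper takes a different route: it \emph{relaxes} the problem to an unconstrained minimization over all matrices $\mW$ (dropping the structure $\mW=\alpha\mV\transp{\mV}$), sets the gradient to zero, and obtains the linear MMSE (Wiener) solution $\mW=\EX[P]{\vx\transp{\vy}}\big(\EX[P]{\vy\transp{\vy}}\big)^{-1}$. Evaluating the moments gives $\mW=\frac{1}{1+\sigma^2}\mU\transp{\mU}$, which happens to lie in the constrained set, so it also minimizes the original objective. The paper's approach is shorter and avoids the principal-angle bound entirely, at the cost of the relaxation trick and a small matrix-calculus computation; your approach is more elementary and self-contained, makes the role of the subspace alignment explicit, and verifies that the optimal $\alpha$ actually lands in $[0,1]$ --- a point the paper does not comment on.
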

\begin{proof}
Let $\mW = \alpha \mV \transp{\mV}$ and consider the following related convex optimization problem
\begin{align*}
    \min_{\mW} \EX[P]{ \norm[2]{\vx -  \mW \vy}^2 }.
\end{align*}
Note that here we optimizer over a larger space, as we do not constrain $\mW$ to be symmetric. We show that a solution to the optimization problem is $\mW = \frac{1}{1+\sigma^2} \mU \transp{\mU}$, therefore $(\alpha,\mU)$ is a minimizer of $R_P(\alpha,\mV)$, which proves the claim. 

The gradient of the objective function above is
\begin{align*}
    \nabla_{\mW} \EX[P]{ \norm[2]{\vx -  \mW  \vy}^2 } 
    &= \nabla_{\mW} \trace\left( \EX[P]{\vx \transp{\vx}} - 2 \transp{\mW} \EX[P]{\vx \transp{\vy}} + \alpha^2 \transp{\mW} \mW \EX[P]{\vy \transp{\vy}} \right) \\
    &= 2 \mW \EX[P]{\vy \transp{\vy}} - 2 \EX[P]{\vx \transp{\vy}}.
\end{align*}
Setting the gradient to zero, the minimizer satisfied
\begin{align*}
    \mW = \EX[P]{\vx \transp{\vy}} \left( \EX[P]{\vy \transp{\vy}} \right)^{-1}.
\end{align*}
Since $\vy = \vx + \vz$ and $\vx = \mU \vc$, where $\vc \sim \mc N(0,\mI)$ and $\vz \sim \mc N(0, \mI \sigma^2)$ are independent, it holds that
\begin{align*}
    \EX[P]{\vx \transp{\vy}} &= \EX[P]{\vx \transp{\vx}} = \mU \transp{\mU},
   \end{align*}
and
\begin{align*}
    \EX[P]{\vy \transp{\vy}} &= \EX[P]{\vx \transp{\vx} + \vz \transp{\vz}} = \mU \transp{\mU} + \sigma^2 \mI.
\end{align*}
Plugging in the two expressions into the expression for $\mW$  we obtain
\begin{align*}
    \mW = \frac{1}{1+\sigma^2} \mU \transp{\mU},
\end{align*}
as desired.
\end{proof}

\begin{proposition}
For fixed $\mU$, the expectation of the self-supervised loss 
\begin{align*}
    \EX[Q]{L_{SS}(\alpha,\mU,\vy)} = \EX[Q]{\norm[2]{\vy - \alpha \mU \transp{\mU}\vy }^2 + \frac{2\alpha d }{n-d} \norm[2]{(\mI - \mU \transp{\mU}) \vy}^2}
\end{align*}
is minimized by $\alpha = 1/(1+\varsigma^2)$.
\end{proposition}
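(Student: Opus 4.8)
The plan is to reduce $\EX[Q]{L_{SS}(\alpha,\mU,\vy)}$ to an explicit quadratic in $\alpha$ and then minimize it. The key observation is that $\mU \transp{\mU}$ is the orthogonal projector onto the $d$-dimensional signal subspace, so under $Q$ one can split $\vy = \vx + \vz$ into the orthogonal components $\mU\transp{\mU}\vy$ and $(\mI - \mU\transp{\mU})\vy$; since $\vx = \mU\vc$ lies in the subspace, $\mU\transp{\mU}\vy = \vx + \mU\transp{\mU}\vz$ while $(\mI-\mU\transp{\mU})\vy = (\mI-\mU\transp{\mU})\vz$. Writing $\vy - \alpha\mU\transp{\mU}\vy = (1-\alpha)\,\mU\transp{\mU}\vy + (\mI-\mU\transp{\mU})\vy$ and using the Pythagorean identity gives
\[
\norm[2]{\vy - \alpha\mU\transp{\mU}\vy}^2 = (1-\alpha)^2 \norm[2]{\mU\transp{\mU}\vy}^2 + \norm[2]{(\mI-\mU\transp{\mU})\vy}^2 .
\]

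Next I would evaluate the two second moments appearing above under $Q$. Using $\transp{\mU}\mU = \mI$, the independence of $\vc$ and $\vz$, and the fact that both are centered, all cross terms vanish and one obtains $\EX[Q]{\norm[2]{\mU\transp{\mU}\vy}^2} = \EX[Q]{\norm[2]{\vx}^2} + \EX[Q]{\norm[2]{\mU\transp{\mU}\vz}^2} = d + \varsigma^2\,\trace(\mU\transp{\mU}) = d(1+\varsigma^2)$ and $\EX[Q]{\norm[2]{(\mI-\mU\transp{\mU})\vy}^2} = \varsigma^2\,\trace(\mI-\mU\transp{\mU}) = \varsigma^2(n-d)$, where I use that $\mU\transp{\mU}$ and $\mI-\mU\transp{\mU}$ are idempotent with traces $d$ and $n-d$. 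Substituting these into the display above, and into the penalty term $\frac{2\alpha d}{n-d}\norm[2]{(\mI-\mU\transp{\mU})\vy}^2$, yields
\[
\EX[Q]{L_{SS}(\alpha,\mU,\vy)} = (1-\alpha)^2\, d(1+\varsigma^2) + \varsigma^2(n-d) + 2\alpha d \varsigma^2 .
\]

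Finally I would expand the right-hand side as a quadratic in $\alpha$: the linear $\varsigma^2$-contribution coming from $(1-\alpha)^2 d(1+\varsigma^2)$ cancels against the $2\alpha d\varsigma^2$ from the penalty, so the $\alpha$-dependent part collapses to $d(1+\varsigma^2)\alpha^2 - 2d\alpha$, i.e.\ $\EX[Q]{L_{SS}(\alpha,\mU,\vy)} = d(1+\varsigma^2)\alpha^2 - 2d\alpha + \text{const}$. Because the leading coefficient $d(1+\varsigma^2)$ is strictly positive, this quadratic is minimized at its unique stationary point $\alpha = 1/(1+\varsigma^2)$, which moreover lies in $[0,1]$; this proves the claim. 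As a consistency check with Section~\ref{sec:theory}, the $\alpha$-dependent part of $\EX[Q]{L_{SS}}$ agrees with that of the true population risk $R_Q(\alpha,\mU) = (1-\alpha)^2 d + \alpha^2 d\varsigma^2$, so minimizing the expected self-supervised loss is equivalent to minimizing the true risk. There is no real obstacle in this argument; the only care needed is in the orthogonal decomposition and in checking that the Gaussian cross terms vanish by independence and zero mean.
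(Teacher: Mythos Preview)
Your proof is correct and essentially equivalent to the paper's. The paper computes $\EX[Q]{\norm[2]{\vy-\alpha\mU\transp{\mU}\vy}^2}$ by expanding the trace with $\EX[Q]{\vy\transp{\vy}}=\mU\transp{\mU}+\varsigma^2\mI$, whereas you first split $\vy$ into its orthogonal components via the Pythagorean identity and then take expectations; both routes yield the same quadratic $\varsigma^2 n + (1-\alpha)^2 d + \alpha^2\varsigma^2 d$ (your expression expands to exactly this), and the minimization step is identical.
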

\begin{proof}
Note that
\begin{align*}
    \EX[Q]{\norm[2]{\vy - \alpha \mU \transp{\mU}\vy }^2}
    &= \trace\left( \EX[Q]{\vy \transp{\vy}} - 2\alpha \transp{\mU} \mU \EX[Q]{\vy \transp{\vy}} + \alpha^2 \transp{\mU} \mU \EX[Q]{\vy \transp{\vy}} \right) \\
    &= \trace\left( \left( \mI + (\alpha^2 - 2\alpha) \transp{\mU} \mU \right) \EX[Q]{\vy \transp{\vy}} \right) \\
    &= \trace\left( \left( \mI + (\alpha^2 - 2\alpha) \transp{\mU} \mU \right) \left( \mU \transp{\mU} + \varsigma^2 \mI \right) \right) \\
    &= \trace\left( \varsigma^2 \mI + ((1 - \alpha)^2 + (\alpha^2 - 2\alpha) \varsigma^2) \mU \transp{\mU} \right) \\
    &= \varsigma^2 n + (1 - \alpha)^2 d + (\alpha^2 - 2\alpha) \varsigma^2 d.
\end{align*}
It follows that, for $\alpha=1$, 
\begin{align*}
    \EX[Q]{\norm[2]{\vy - \mU \transp{\mU}\vy }^2}
    &= \varsigma^2 n  - \varsigma^2 d.
\end{align*}
Hence,
\begin{align*}
    \EX[Q]{L_{SS}(\alpha,\mU,\vy)} 
    &= \varsigma^2 n + (1 - \alpha)^2 d + (\alpha^2 - 2\alpha) \varsigma^2 d + \frac{2\alpha d }{n-d} (\varsigma^2 n - \varsigma^2 d) \\
    &= \varsigma^2 n + (1 - \alpha)^2 d + \alpha^2 \varsigma^2 d,
\end{align*}
whose minimum is achieved at $\alpha = 1/(1+\varsigma^2)$. To see this, take the derivative with respect to $\alpha$, set it to zero, and solve for $\alpha$. 
\end{proof}

\section{Test-time training for non-convolutional architectures}\label{sec:vit}

The two networks that we studied throughout are based on the U-Net, a convolutional neural network. 
There are, however, other non-convolutions neural network architectures that perform well for image reconstruction problems. 
In this section we explore how our test-time training (TTT) approach performs with non-convolutional networks.

We consider the Vision Transformer (ViT)~\citep{dosovitskiy2020image} that works very well for signal reconstruction problems. ViT has been tailored to accelerated MRI reconstruction as well~\citep{feng2021accelerated,feng2021task,korkmaz2021deep,korkmaz2022unsupervised}, and has been shown to be computationally faster than U-Net and also slightly more robust than U-Net against anatomy shift~\citep{lin2021vision}.

We repeated the same experiment we performed for U-Net under the anatomy shift. The results, reported in Table~\ref{tab:gap-vit} show that TTT for a ViT is effective, but not as effective as for the U-Net (specifically, the fraction of the gap closed by TTT is only $84.5\%$ for ViT, whereas the gap closed by TTT is $98.6\%$ for U-Net). 
This is also reflected in the visual quality of the images, as illustrated in Figure~\ref{fig:anat-shift-vit}. In Figure~\ref{fig:anat-shift-vit} we see that TTT for a ViT gives reconstruction artifacts. 

The self-supervised loss function used in our TTT approach works better for U-Net compared to ViT, and hence our TTT is more effective for U-Net. To see this, we perform the following experiment. We train U-Net and ViT on the knee training set of fastMRI in a fully self-supervised manner, i.e., we minimize the training loss
\[
\mc L(\vtheta) = \sum_{i=1}^n \frac{\norm[1]{\vy_i - \mA
    f_\vtheta(\mA^\dagger \vy_i)}}{\norm[1]{\vy_i}},
\]
on a set of training measurements $\vy_1,\ldots, \vy_n$, for both U-net and ViT. 
During training, we monitor true accuracy, by comparing the reconstructions generated by U-Net and ViT to the (unknown) ground-truth images $\vx_1,\ldots, \vx_n$ associated with the measurements $\vy_1,\ldots, \vy_n$. 
Figure~\ref{fig:indbias} shows that towards convergence, there is a constant gap between the true accuracy achieved by U-Net and ViT. This demonstrates that our self-supervised loss works better with U-Net than ViT in terms of the quality of the learned prior for the images.

\begin{table}[h!]
\centering
\begin{adjustbox}{width=0.45\textwidth}
\begin{tabular}{l|c}
\toprule 
\multicolumn{1}{c}{setup} & \multicolumn{1}{c}{P: knee} \\
\multicolumn{1}{c}{} & \multicolumn{1}{c}{Q: brain} \\
\hline
    train on Q test on Q                & 0.9041 \\ 
    train on P test on Q                & 0.8429 \\ \hline 
    distribution shift performance gap  & 0.0612 \\ 
\thickhline
    train on Q test on Q + TTT          & 0.8947 \\
    train on P test on Q + TTT          & 0.8859 \\ \hline 
    distribution shift performance gap  & 0.0095 \\
    \thickhline
    fraction of gap closed by TTT       & 84.5\%        \\
\bottomrule
\end{tabular}
\end{adjustbox}
\caption{\textbf{For ViT, using self-supervision with test-time training (TTT) closes 84\% of the distribution shift performance gap for anatomy shift.} The first three rows are SSIM scores for ViT when trained in a supervised manner. The second three rows are SSIM scores for ViT when self-supervision is included during training and then TTT is applied at the inference.
}
\vspace{-10pt}
\label{tab:gap-vit}
\end{table}

\begin{table*}[h!]
\setlength{\tabcolsep}{1pt}
\centering
\begin{tabular}{ccccc}
  & \begin{tabular}{@{}c@{}} trained on knee\\  (supervised) \\[5pt] \footnotesize SSIM: 0.8480 \end{tabular}  & \begin{tabular}{@{}c@{}}  trained on knee + TTT \\  (self-supervision included) \\[5pt] \footnotesize SSIM: 0.8855 \end{tabular} & \begin{tabular}{@{}c@{}} trained on brain \\ (supervised) \\[5pt] \footnotesize SSIM: 0.9040 \end{tabular} & \begin{tabular}{@{}c@{}}  ground truth \\[25pt] \end{tabular} \\
  \rule{0pt}{3ex}
   &
  \begin{adjustbox}{valign=m}
    \begin{tikzpicture}[boximg]
      \node[anchor=south] (img) {\includegraphics[width=0.23\textwidth]{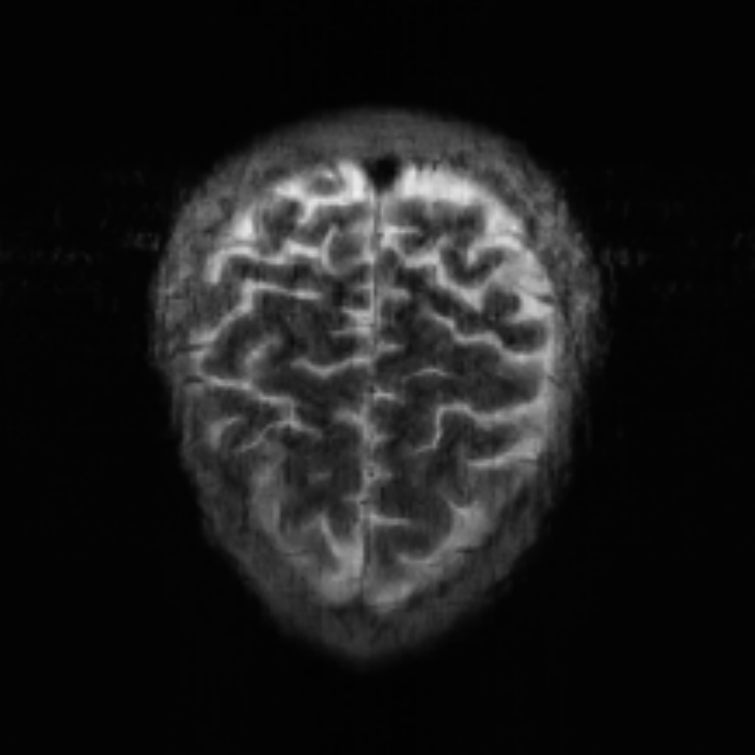}}; 
      \begin{scope}[x=10,y=10]
        \node[draw,minimum height=0.55cm,minimum width=0.65cm] (B1) at (-1.2,6.1) {};  
        \node (img1) at (-3.7,1.4) {\includegraphics[width=0.07\textwidth,trim={2.3cm 3.8cm 4cm 2.7cm},  clip]{files/anatomy_shift/vitknee_sup_on_brain.pdf}}; 
        \draw (img1.south west) rectangle (img1.north east);
        \draw (B1) -- (img1);
      \end{scope}
    \end{tikzpicture} \end{adjustbox} &
  \begin{adjustbox}{valign=m}
    \begin{tikzpicture}[boximg]
      \node[anchor=south] (img) {\includegraphics[width=0.23\textwidth]{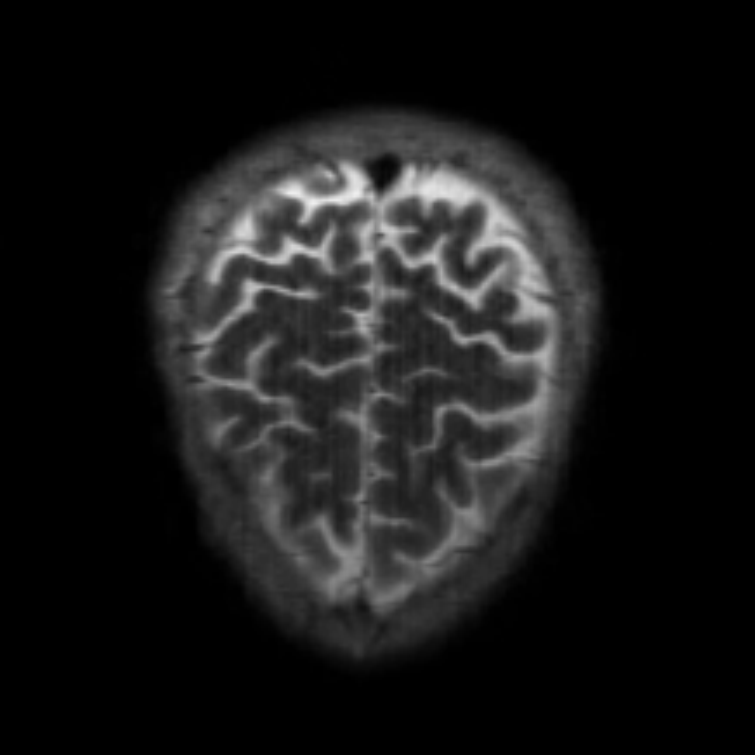}};
      \begin{scope}[x=10,y=10]
        \node[draw,minimum height=0.55cm,minimum width=0.65cm] (B1) at (-1.2,6.1) {};
        \node (img1) at (-3.7,1.4) {\includegraphics[width=0.07\textwidth,trim={2.3cm 3.8cm 4cm 2.7cm}, clip]{files/anatomy_shift/vitttt_on_brain.pdf}};
        \draw (img1.south west) rectangle (img1.north east);
        \draw (B1) -- (img1);
      \end{scope}
    \end{tikzpicture} \end{adjustbox} &
  \begin{adjustbox}{valign=m}
    \begin{tikzpicture}[boximg]
      \node[anchor=south] (img) {\includegraphics[width=0.23\textwidth]{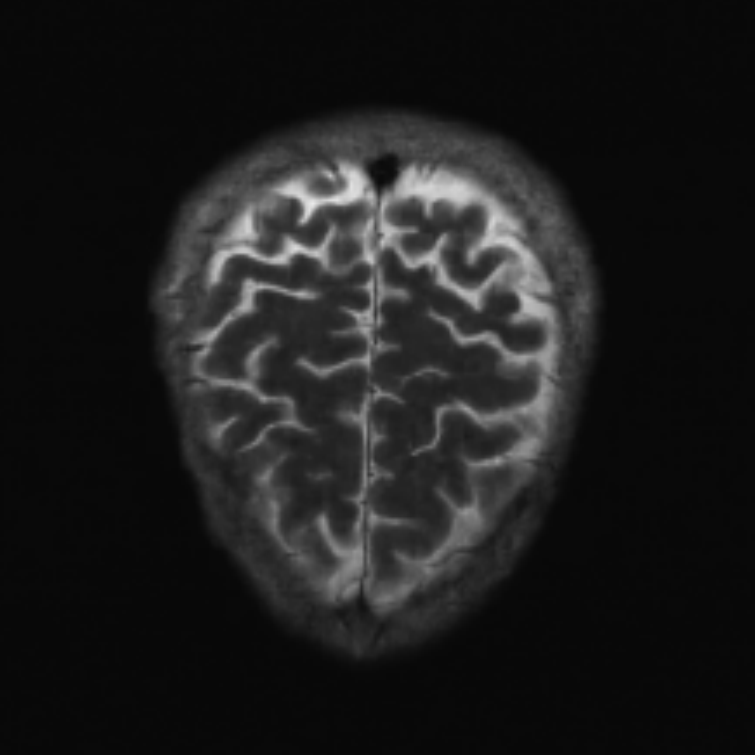}};
      \begin{scope}[x=10,y=10]
        \node[draw,minimum height=0.55cm,minimum width=0.65cm] (B1) at (-1.2,6.1) {};
        \node (img1) at (-3.7,1.4) {\includegraphics[width=0.07\textwidth,trim={2.3cm 3.8cm 4cm 2.7cm}, clip]{files/anatomy_shift/vitbrain_sup_on_brain.pdf}};
        \draw (img1.south west) rectangle (img1.north east);
        \draw (B1) -- (img1);
      \end{scope}
    \end{tikzpicture} \end{adjustbox} &
  \begin{adjustbox}{valign=m}
    \begin{tikzpicture}[boximg]
      \node[anchor=south] (img) {\includegraphics[width=0.23\textwidth]{files/anatomy_shift/orig.pdf}};
      \begin{scope}[x=10,y=10]
        \node[draw,minimum height=0.55cm,minimum width=0.65cm] (B1) at (-1.2,6.1) {};
        \node (img1) at (-3.7,1.4) {\includegraphics[width=0.07\textwidth,trim={2.3cm 3.8cm 4cm 2.7cm}, clip]{files/anatomy_shift/orig.pdf}};
        \draw (img1.south west) rectangle (img1.north east);
        \draw (B1) -- (img1);
      \end{scope}
    \end{tikzpicture} \end{adjustbox} \\
\end{tabular}
\captionsetup{skip=10pt}
\captionof{figure}{Including self-supervision while training DL models combined with TTT improves model robustness to natural anatomy shifts. The sample belongs to the fastMRI brain validation dataset.}
\label{fig:anat-shift-vit}
\end{table*}

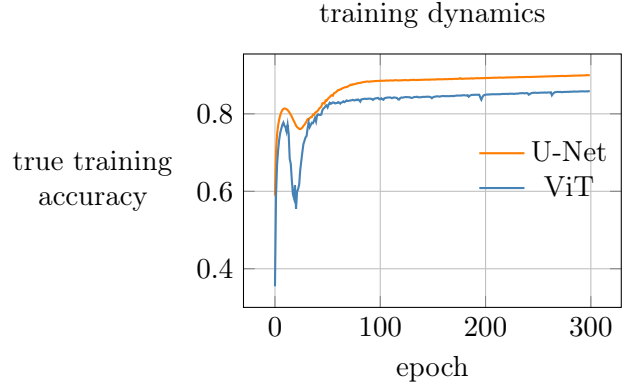
\begin{figure}[t!]
\centering
  \begin{center}
    \begin{tikzpicture}
    \begin{groupplot}[
    y tick label style={/pgf/number format/.cd,fixed,precision=4},scaled y ticks = false,
    legend style={at={(1,0.7)}, draw={none}, fill = none,text opacity=1, 
    /tikz/every even column/.append style={column sep=-0.1cm}
     },
             group
             style={group size= 1 by 1, xlabels at=edge bottom,
             xticklabels at=edge bottom,
             horizontal sep=0.1cm, vertical sep=0.2cm,
             }, 
             width=0.4\textwidth,height=0.3\textwidth,
             ]
    \nextgroupplot[ylabel={\parbox{2.9cm}{ \centering true training \\ accuracy}},ylabel style={rotate=-90,xshift=2ex,},xlabel={epoch},title={training dynamics},grid=both]
    	\addplot +[mark=none,amber,thick,] table[x=x,y=unet]{./files/anatomy_shift/indbias.txt};
    	\addplot +[mark=none,steelblue,thick] table[x=x,y=vit]{./files/anatomy_shift/indbias.txt};
    	\legend{U-Net,ViT}
    \end{groupplot}
    \end{tikzpicture}
  \end{center}
\caption{The inductive bias of self-supervised training of U-Net results in a higher true accuracy at convergence than ViT. True accuracy is monitored using the ground-truth data during the self-supervised training.}
\label{fig:indbias}
\end{figure}

\section{More reconstruction examples}

The results of Table~\ref{tab:results} demonstrate that our domain adaptation method closes the distribution shift performance gap for anatomy, dataset, modality, and acceleration shifts by about 90\%. Figure~\ref{fig:gap} in the main body shows example images reconstructed with U-Net to demonstrate that the perceptual quality is improved after applying our domain adaptation method.


In this section, we provide more detailed illustrations for both U-Net and VarNet under each distribution shift. Figure~\ref{fig:anat-shift}, Figure~\ref{fig:data-shift}, Figure~\ref{fig:acq-shift}, and Figure~\ref{fig:acc-shift} provide reconstruction examples for anatomy, dataset, modality, and acceleration shifts before and after our domain adaptation method. As shown in the figures, the perceptual quality of the reconstructions improve noticeably with test-time training.

\begin{table*}[t!]
\setlength{\tabcolsep}{1pt}
\centering
\begin{adjustbox}{width=0.85\textwidth}
\begin{tabular}{ccccc}
  & \begin{tabular}{@{}c@{}}  trained on knee\\  (supervised) \\[5pt] \footnotesize SSIM: 0.8562 \end{tabular}  & \begin{tabular}{@{}c@{}} \footnotesize trained on knee + TTT \\  (self-supervision included) \\[5pt] \footnotesize SSIM: 0.9215 \end{tabular} & \begin{tabular}{@{}c@{}}  trained on brain \\  (supervised) \\[5pt] \footnotesize SSIM: 0.9185 \end{tabular} & \begin{tabular}{@{}c@{}}  ground truth\\[25pt] \end{tabular} \\
  \rule{0pt}{3ex}
  U-Net &
  \begin{adjustbox}{valign=m}
    \begin{tikzpicture}[boximg]
      \node[anchor=south] (img) {\includegraphics[width=0.25\textwidth]{files/anatomy_shift/knee_sup_on_brain.pdf}};
      \begin{scope}[x=10,y=10]
        \node[draw,minimum height=0.55cm,minimum width=0.65cm] (B1) at (-1.2,6.6) {}; 
        \node (img1) at (-4.2,1.4) {\includegraphics[width=0.07\textwidth,trim={2.3cm 3.8cm 4cm 2.7cm}, clip]{files/anatomy_shift/knee_sup_on_brain.pdf}}; 
        \draw (img1.south west) rectangle (img1.north east);
        \draw (B1) -- (img1);
      \end{scope}
    \end{tikzpicture} \end{adjustbox} &
  \begin{adjustbox}{valign=m}
    \begin{tikzpicture}[boximg]
      \node[anchor=south] (img) {\includegraphics[width=0.25\textwidth]{files/anatomy_shift/ttt_on_brain.pdf}};
      \begin{scope}[x=10,y=10]
        \node[draw,minimum height=0.55cm,minimum width=0.65cm] (B1) at (-1.2,6.6) {}; 
        \node (img1) at (-4.2,1.4) {\includegraphics[width=0.07\textwidth,trim={2.3cm 3.8cm 4cm 2.7cm}, clip]{files/anatomy_shift/ttt_on_brain.pdf}};
        \draw (img1.south west) rectangle (img1.north east);
        \draw (B1) -- (img1);
      \end{scope}
    \end{tikzpicture} \end{adjustbox} &
  \begin{adjustbox}{valign=m}
    \begin{tikzpicture}[boximg]
      \node[anchor=south] (img) {\includegraphics[width=0.25\textwidth]{files/anatomy_shift/brain_sup_on_brain.pdf}};
      \begin{scope}[x=10,y=10]
        \node[draw,minimum height=0.55cm,minimum width=0.65cm] (B1) at (-1.2,6.6) {}; 
        \node (img1) at (-4.2,1.4) {\includegraphics[width=0.07\textwidth,trim={2.3cm 3.8cm 4cm 2.7cm}, clip]{files/anatomy_shift/brain_sup_on_brain.pdf}};
        \draw (img1.south west) rectangle (img1.north east);
        \draw (B1) -- (img1);
      \end{scope}
    \end{tikzpicture} \end{adjustbox} &
  \begin{adjustbox}{valign=m}
    \begin{tikzpicture}[boximg]
      \node[anchor=south] (img) {\includegraphics[width=0.25\textwidth]{files/anatomy_shift/orig.pdf}};
      \begin{scope}[x=10,y=10]
        \node[draw,minimum height=0.55cm,minimum width=0.65cm] (B1) at (-1.2,6.6) {}; 
        \node (img1) at (-4.2,1.4) {\includegraphics[width=0.07\textwidth,trim={2.3cm 3.8cm 4cm 2.7cm}, clip]{files/anatomy_shift/orig.pdf}};
        \draw (img1.south west) rectangle (img1.north east);
        \draw (B1) -- (img1);
      \end{scope}
    \end{tikzpicture} \end{adjustbox} \\
  \rule{0pt}{3ex}
  & \footnotesize SSIM: 0.8782 & \footnotesize SSIM: 0.9292 & \footnotesize SSIM: 0.9316 & \\
  \rule{0pt}{3ex}
  VarNet &
  \begin{adjustbox}{valign=m}
    \begin{tikzpicture}[boximg]
      \node[anchor=south] (img) {\includegraphics[width=0.25\textwidth]{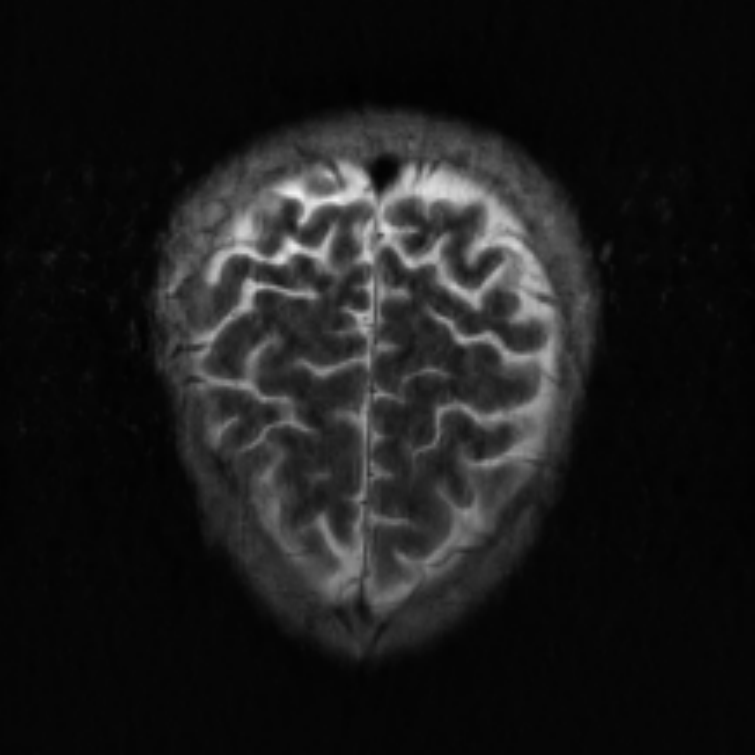}};
      \begin{scope}[x=10,y=10]
        \node[draw,minimum height=0.55cm,minimum width=0.65cm] (B1) at (-1.2,6.6) {};
        \node (img1) at (-4.2,1.4) {\includegraphics[width=0.07\textwidth,trim={2.3cm 3.8cm 4cm 2.7cm}, clip]{files/anatomy_shift/vknee_sup_on_brain.pdf}};
        \draw (img1.south west) rectangle (img1.north east);
        \draw (B1) -- (img1);
      \end{scope}
    \end{tikzpicture} \end{adjustbox} &
  \begin{adjustbox}{valign=m}
    \begin{tikzpicture}[boximg]
      \node[anchor=south] (img) {\includegraphics[width=0.25\textwidth]{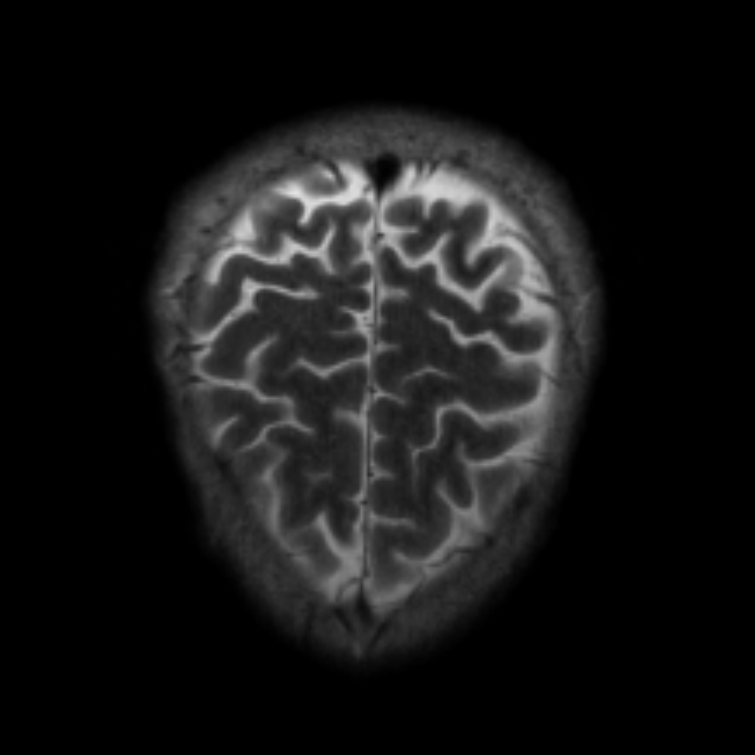}};
      \begin{scope}[x=10,y=10]
        \node[draw,minimum height=0.55cm,minimum width=0.65cm] (B1) at (-1.2,6.6) {};
        \node (img1) at (-4.2,1.4) {\includegraphics[width=0.07\textwidth,trim={2.3cm 3.8cm 4cm 2.7cm}, clip]{files/anatomy_shift/vttt_on_brain.pdf}};
        \draw (img1.south west) rectangle (img1.north east);
        \draw (B1) -- (img1);
      \end{scope}
    \end{tikzpicture} \end{adjustbox} &
  \begin{adjustbox}{valign=m}
    \begin{tikzpicture}[boximg]
      \node[anchor=south] (img) {\includegraphics[width=0.25\textwidth]{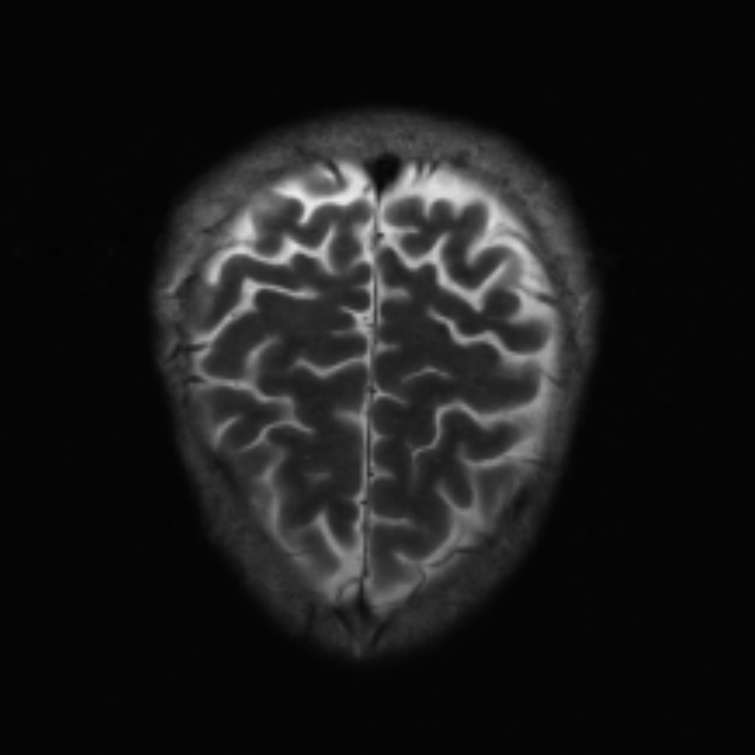}};
      \begin{scope}[x=10,y=10]
        \node[draw,minimum height=0.55cm,minimum width=0.65cm] (B1) at (-1.2,6.6) {};
        \node (img1) at (-4.2,1.4) {\includegraphics[width=0.07\textwidth,trim={2.3cm 3.8cm 4cm 2.7cm}, clip]{files/anatomy_shift/vbrain_sup_on_brain.pdf}};
        \draw (img1.south west) rectangle (img1.north east);
        \draw (B1) -- (img1);
      \end{scope}
    \end{tikzpicture} \end{adjustbox} &
  \begin{adjustbox}{valign=m}
    \begin{tikzpicture}[boximg]
      \node[anchor=south] (img) {\includegraphics[width=0.25\textwidth]{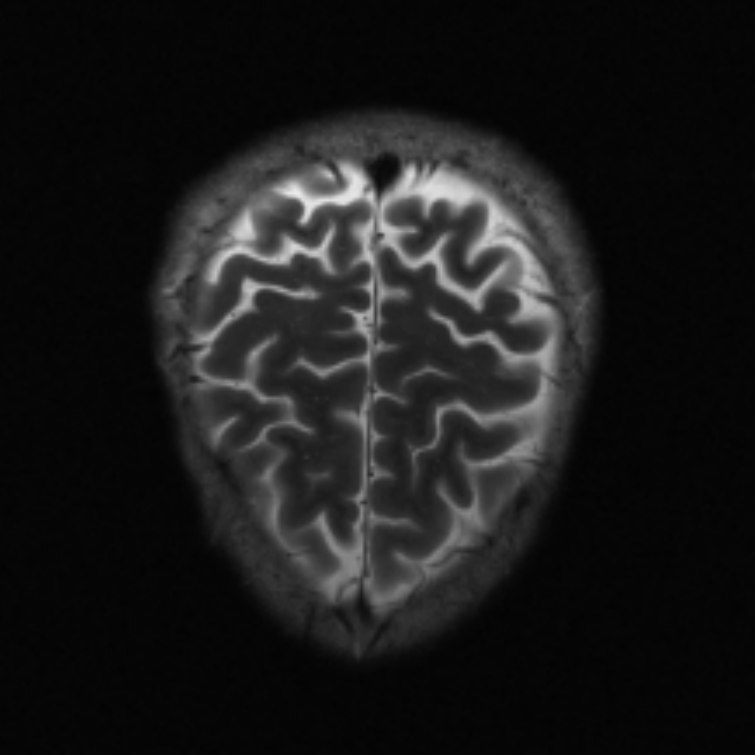}};
      \begin{scope}[x=10,y=10]
        \node[draw,minimum height=0.55cm,minimum width=0.65cm] (B1) at (-1.2,6.6) {};
        \node (img1) at (-4.2,1.4) {\includegraphics[width=0.07\textwidth,trim={2.3cm 3.8cm 4cm 2.7cm}, clip]{files/anatomy_shift/vorig.pdf}};
        \draw (img1.south west) rectangle (img1.north east);
        \draw (B1) -- (img1);
      \end{scope}
    \end{tikzpicture} \end{adjustbox}
\end{tabular}
\end{adjustbox}
\captionsetup{skip=10pt}
\captionof{figure}{Including self-supervision while training DL models combined with TTT improves model robustness to natural anatomy shifts. The sample belongs to the fastMRI brain validation dataset.}
\label{fig:anat-shift}
\end{table*}

\begin{table*}[h!]
\setlength{\tabcolsep}{1pt}
\centering
\begin{adjustbox}{width=0.85\textwidth}
\begin{tabular}{ccccc}
  & \begin{tabular}{@{}c@{}} trained on fastMRI\\  (supervised) \\[5pt] \footnotesize SSIM: 0.6709 \end{tabular}  & \begin{tabular}{@{}c@{}} trained on fastMRI + TTT \\  (self-supervision included) \\[5pt] \footnotesize SSIM: 0.6902 \end{tabular} & \begin{tabular}{@{}c@{}} trained on Stanford \\ (supervised) \\[5pt] \footnotesize SSIM: 0.6918 \end{tabular} & \begin{tabular}{@{}c@{}} ground truth\\[25pt] \end{tabular} \\
  \rule{0pt}{3ex}
  U-Net &
  \begin{adjustbox}{valign=m}
    \begin{tikzpicture}[boximg]
      \node[anchor=south] (img) {\scalebox{1}[-1]{\includegraphics[width=0.25\textwidth]{files/dataset_shift/fs_sup_on_stan.pdf}}};
      \begin{scope}[x=10,y=10]
        \node[draw,minimum height=0.55cm,minimum width=0.65cm] (B1) at (2.6,3.5) {};
        \node (img1) at (-4.2,1.4) {\scalebox{1}[-1]{\includegraphics[width=0.07\textwidth,trim={4.8cm 4.7cm 1.5cm 1.8cm}, clip]{files/dataset_shift/fs_sup_on_stan.pdf}}};
        \draw (img1.south west) rectangle (img1.north east);
        \draw (B1) -- (img1);
      \end{scope}
  \end{tikzpicture} \end{adjustbox} &
  \begin{adjustbox}{valign=m}
    \begin{tikzpicture}[boximg]
      \node[anchor=south] (img) {\scalebox{1}[-1]{\includegraphics[width=0.25\textwidth]{files/dataset_shift/ttt_on_stan.pdf}}};
      \begin{scope}[x=10,y=10]
        \node[draw,minimum height=0.55cm,minimum width=0.65cm] (B1) at (2.6,3.5) {};
        \node (img1) at (-4.2,1.4) {\scalebox{1}[-1]{\includegraphics[width=0.07\textwidth,trim={4.8cm 4.7cm 1.5cm 1.8cm}, clip]{files/dataset_shift/ttt_on_stan.pdf}}};
        \draw (img1.south west) rectangle (img1.north east);
        \draw (B1) -- (img1);
      \end{scope}
  \end{tikzpicture} \end{adjustbox} &
  \begin{adjustbox}{valign=m}
    \begin{tikzpicture}[boximg]
      \node[anchor=south] (img) {\scalebox{1}[-1]{\includegraphics[width=0.25\textwidth]{files/dataset_shift/stan_sup_on_stan.pdf}}};
      \begin{scope}[x=10,y=10]
        \node[draw,minimum height=0.55cm,minimum width=0.65cm] (B1) at (2.6,3.5) {};
        \node (img1) at (-4.2,1.4) {\scalebox{1}[-1]{\includegraphics[width=0.07\textwidth,trim={4.8cm 4.7cm 1.5cm 1.8cm}, clip]{files/dataset_shift/stan_sup_on_stan.pdf}}};
        \draw (img1.south west) rectangle (img1.north east);
        \draw (B1) -- (img1);
      \end{scope}
  \end{tikzpicture} \end{adjustbox} &
  \begin{adjustbox}{valign=m}
    \begin{tikzpicture}[boximg]
      \node[anchor=south] (img) {\scalebox{1}[-1]{\includegraphics[width=0.25\textwidth]{files/dataset_shift/orig.pdf}}};
      \begin{scope}[x=10,y=10]
        \node[draw,minimum height=0.55cm,minimum width=0.65cm] (B1) at (2.6,3.5) {};
        \node (img1) at (-4.2,1.4) {\scalebox{1}[-1]{\includegraphics[width=0.07\textwidth,trim={4.8cm 4.7cm 1.5cm 1.8cm}, clip]{files/dataset_shift/orig.pdf}}};
        \draw (img1.south west) rectangle (img1.north east);
        \draw (B1) -- (img1);
      \end{scope}
  \end{tikzpicture} \end{adjustbox} \\
  \rule{0pt}{3ex}
  & \footnotesize SSIM: 0.6863 & \footnotesize SSIM: 0.7050 & \footnotesize SSIM: 0.7236 & \\
  \rule{0pt}{3ex}
  VarNet &
  \begin{adjustbox}{valign=m}
    \begin{tikzpicture}[boximg]
      \node[anchor=south] (img) {\scalebox{1}[-1]{\includegraphics[width=0.25\textwidth]{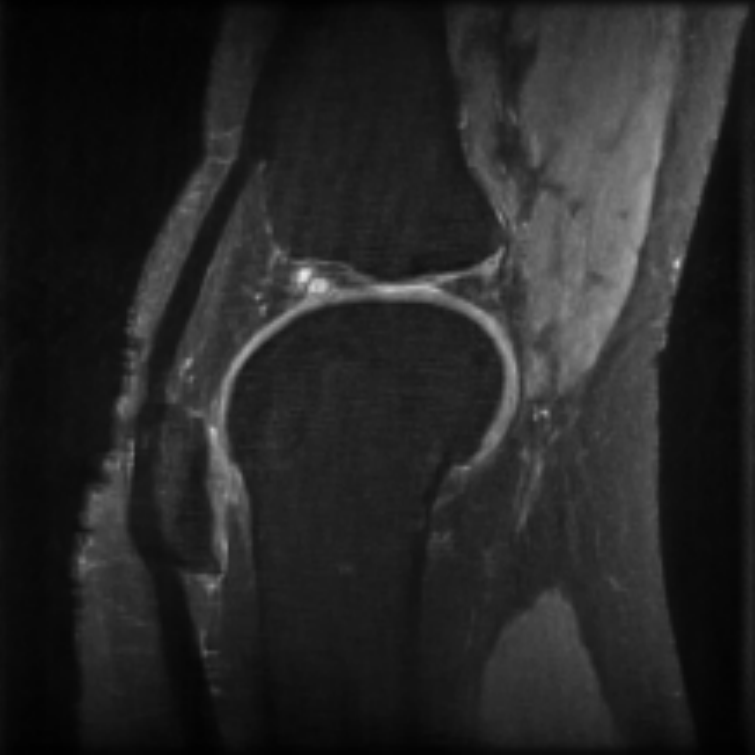}}};
      \begin{scope}[x=10,y=10]
        \node[draw,minimum height=0.55cm,minimum width=0.65cm] (B1) at (2.6,3.5) {};
        \node (img1) at (-4.2,1.4) {\scalebox{1}[-1]{\includegraphics[width=0.07\textwidth,trim={4.8cm 4.7cm 1.5cm 1.8cm}, clip]{files/dataset_shift/vfs_sup_on_stan.pdf}}};
        \draw (img1.south west) rectangle (img1.north east);
        \draw (B1) -- (img1);
      \end{scope}
  \end{tikzpicture} \end{adjustbox} &
  \begin{adjustbox}{valign=m}
    \begin{tikzpicture}[boximg]
      \node[anchor=south] (img) {\scalebox{1}[-1]{\includegraphics[width=0.25\textwidth]{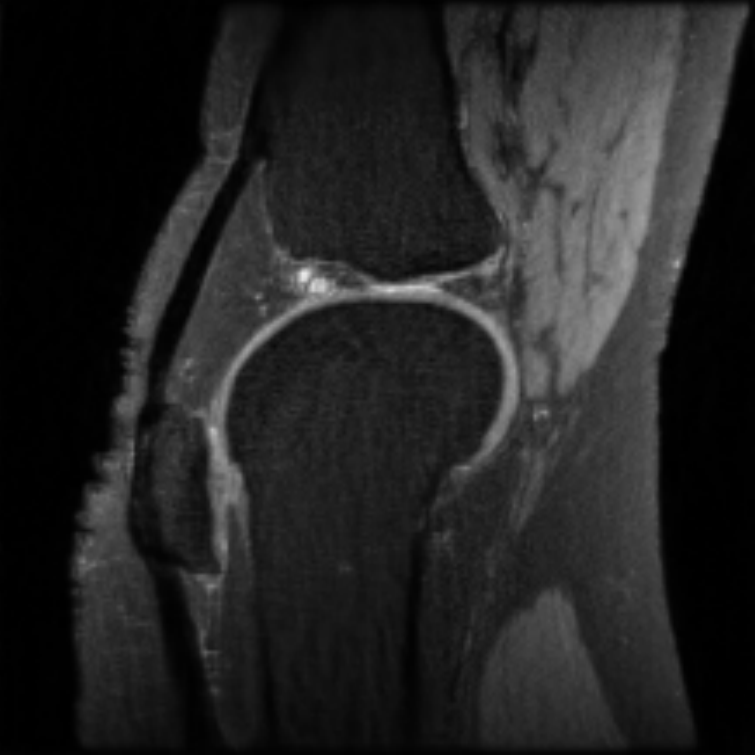}}};
      \begin{scope}[x=10,y=10]
        \node[draw,minimum height=0.55cm,minimum width=0.65cm] (B1) at (2.6,3.5) {};
        \node (img1) at (-4.2,1.4) {\scalebox{1}[-1]{\includegraphics[width=0.07\textwidth,trim={4.8cm 4.7cm 1.5cm 1.8cm}, clip]{files/dataset_shift/vttt_on_stan.pdf}}};
        \draw (img1.south west) rectangle (img1.north east);
        \draw (B1) -- (img1);
      \end{scope}
  \end{tikzpicture} \end{adjustbox} &
  \begin{adjustbox}{valign=m}
    \begin{tikzpicture}[boximg]
      \node[anchor=south] (img) {\scalebox{1}[-1]{\includegraphics[width=0.25\textwidth]{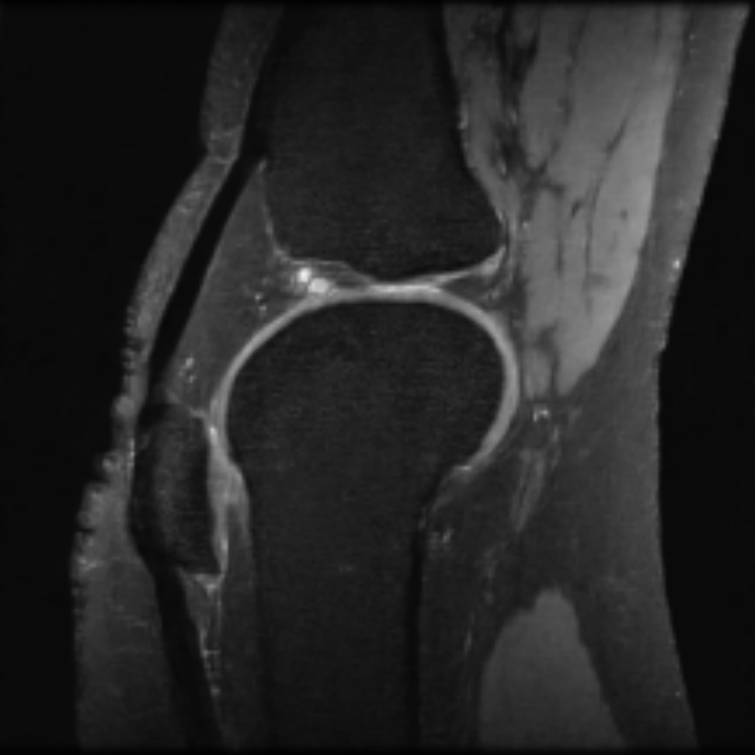}}};
      \begin{scope}[x=10,y=10]
        \node[draw,minimum height=0.55cm,minimum width=0.65cm] (B1) at (2.6,3.5) {};
        \node (img1) at (-4.2,1.4) {\scalebox{1}[-1]{\includegraphics[width=0.07\textwidth,trim={4.8cm 4.7cm 1.5cm 1.8cm}, clip]{files/dataset_shift/vstan_sup_on_stan.pdf}}};
        \draw (img1.south west) rectangle (img1.north east);
        \draw (B1) -- (img1);
      \end{scope}
  \end{tikzpicture} \end{adjustbox} &
  \begin{adjustbox}{valign=m}
    \begin{tikzpicture}[boximg]
      \node[anchor=south] (img) {\scalebox{1}[-1]{\includegraphics[width=0.25\textwidth]{files/dataset_shift/orig.pdf}}};
      \begin{scope}[x=10,y=10]
        \node[draw,minimum height=0.55cm,minimum width=0.65cm] (B1) at (2.6,3.5) {};
        \node (img1) at (-4.2,1.4) {\scalebox{1}[-1]{\includegraphics[width=0.07\textwidth,trim={4.8cm 4.7cm 1.5cm 1.8cm}, clip]{files/dataset_shift/orig.pdf}}};
        \draw (img1.south west) rectangle (img1.north east);
        \draw (B1) -- (img1);
      \end{scope}
  \end{tikzpicture} \end{adjustbox}
\end{tabular}
\end{adjustbox}
\captionsetup{skip=10pt}
\captionof{figure}{Including self-supervision while training DL models combined with TTT improves model robustness to natural dataset shifts. The sample belongs to the Stanford validation dataset and the pointed region reveals how each setup shines or fails at reconstruction.}
\label{fig:data-shift}
\end{table*}

\begin{table*}[t!]
\setlength{\tabcolsep}{1pt}
\centering
\begin{adjustbox}{width=0.85\textwidth}
\begin{tabular}{ccccc}
  & \begin{tabular}{@{}c@{}} trained on AXT2\\  (supervised) \\[5pt] \footnotesize SSIM: 0.8705 \end{tabular}  & \begin{tabular}{@{}c@{}}  trained on AXT2 + TTT \\ (self-supervision included) \\[5pt] \footnotesize SSIM: 0.9317 \end{tabular} & \begin{tabular}{@{}c@{}}   trained on AXT1PRE \\ (supervised) \\[5pt] \footnotesize SSIM: 0.9248 \end{tabular} & \begin{tabular}{@{}c@{}} ground truth\\[25pt] \end{tabular} \\
  \rule{0pt}{3ex}
  U-Net &
  \begin{adjustbox}{valign=m}
    \begin{tikzpicture}[boximg]
      \node[anchor=south] (img) {\includegraphics[width=0.25\textwidth]{files/acquisition_shift/t2_sup_on_t1.pdf}};
      \begin{scope}[x=10,y=10]
        \node[draw,minimum height=0.55cm,minimum width=0.65cm] (B1) at (-1.2,6.7) {}; 
        \node (img1) at (-4.2,1.4) {\includegraphics[width=0.07\textwidth,trim={2.3cm 3.8cm 4cm 2.7cm}, clip]{files/acquisition_shift/t2_sup_on_t1.pdf}};
        \draw (img1.south west) rectangle (img1.north east);
        \draw (B1) -- (img1);
      \end{scope}
    \end{tikzpicture} \end{adjustbox} &
  \begin{adjustbox}{valign=m}
    \begin{tikzpicture}[boximg]
      \node[anchor=south] (img) {\includegraphics[width=0.25\textwidth]{files/acquisition_shift/ttt_on_t1.pdf}};
      \begin{scope}[x=10,y=10]
        \node[draw,minimum height=0.55cm,minimum width=0.65cm] (B1) at (-1.2,6.7) {};
        \node (img1) at (-4.2,1.4) {\includegraphics[width=0.07\textwidth,trim={2.3cm 3.8cm 4cm 2.7cm}, clip]{files/acquisition_shift/ttt_on_t1.pdf}};
        \draw (img1.south west) rectangle (img1.north east);
        \draw (B1) -- (img1);
      \end{scope}
    \end{tikzpicture} \end{adjustbox} &
  \begin{adjustbox}{valign=m}
    \begin{tikzpicture}[boximg]
      \node[anchor=south] (img) {\includegraphics[width=0.25\textwidth]{files/acquisition_shift/t1_sup_on_t1.pdf}};
      \begin{scope}[x=10,y=10]
        \node[draw,minimum height=0.55cm,minimum width=0.65cm] (B1) at (-1.2,6.7) {};
        \node (img1) at (-4.2,1.4) {\includegraphics[width=0.07\textwidth,trim={2.3cm 3.8cm 4cm 2.7cm}, clip]{files/acquisition_shift/t1_sup_on_t1.pdf}};
        \draw (img1.south west) rectangle (img1.north east);
        \draw (B1) -- (img1);
      \end{scope}
    \end{tikzpicture} \end{adjustbox} &
  \begin{adjustbox}{valign=m}
    \begin{tikzpicture}[boximg]
      \node[anchor=south] (img) {\includegraphics[width=0.25\textwidth]{files/acquisition_shift/orig.pdf}};
      \begin{scope}[x=10,y=10]
        \node[draw,minimum height=0.55cm,minimum width=0.65cm] (B1) at (-1.2,6.7) {};
        \node (img1) at (-4.2,1.4) {\includegraphics[width=0.07\textwidth,trim={2.3cm 3.8cm 4cm 2.7cm}, clip]{files/acquisition_shift/orig.pdf}};
        \draw (img1.south west) rectangle (img1.north east);
        \draw (B1) -- (img1);
      \end{scope}
    \end{tikzpicture} \end{adjustbox} \\
  \rule{0pt}{3ex}
  & \footnotesize SSIM: 0.9089 & \footnotesize SSIM: 0.9325 & \footnotesize SSIM: 0.9299 & \\
  \rule{0pt}{3ex}
  VarNet &
  \begin{adjustbox}{valign=m}
    \begin{tikzpicture}[boximg]
      \node[anchor=south] (img) {\includegraphics[width=0.25\textwidth]{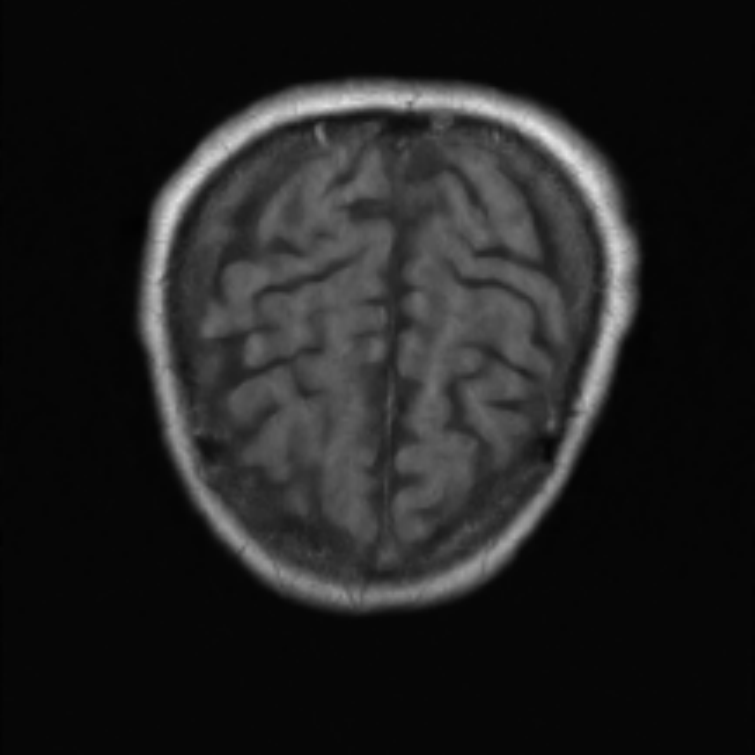}};
      \begin{scope}[x=10,y=10]
        \node[draw,minimum height=0.55cm,minimum width=0.65cm] (B1) at (-1.2,6.7) {};
        \node (img1) at (-4.2,1.4) {\includegraphics[width=0.07\textwidth,trim={2.3cm 3.8cm 4cm 2.7cm}, clip]{files/acquisition_shift/vt2_sup_on_t1.pdf}};
        \draw (img1.south west) rectangle (img1.north east);
        \draw (B1) -- (img1);
      \end{scope}
    \end{tikzpicture} \end{adjustbox} &
  \begin{adjustbox}{valign=m}
    \begin{tikzpicture}[boximg]
      \node[anchor=south] (img) {\includegraphics[width=0.25\textwidth]{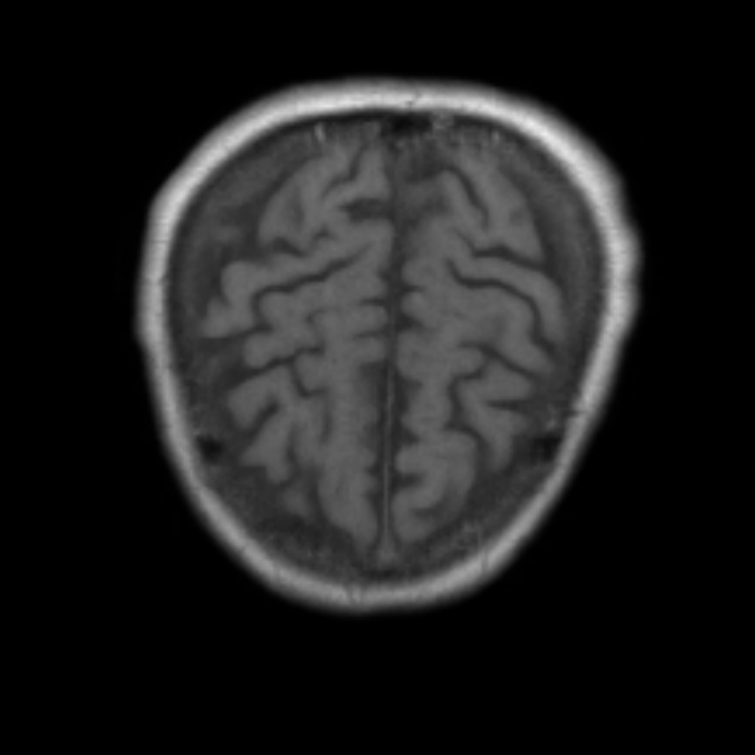}};
      \begin{scope}[x=10,y=10]
        \node[draw,minimum height=0.55cm,minimum width=0.65cm] (B1) at (-1.2,6.7) {};
        \node (img1) at (-4.2,1.4) {\includegraphics[width=0.07\textwidth,trim={2.3cm 3.8cm 4cm 2.7cm}, clip]{files/acquisition_shift/vttt_on_t1.pdf}};
        \draw (img1.south west) rectangle (img1.north east);
        \draw (B1) -- (img1);
      \end{scope}
    \end{tikzpicture} \end{adjustbox} &
  \begin{adjustbox}{valign=m}
    \begin{tikzpicture}[boximg]
      \node[anchor=south] (img) {\includegraphics[width=0.25\textwidth]{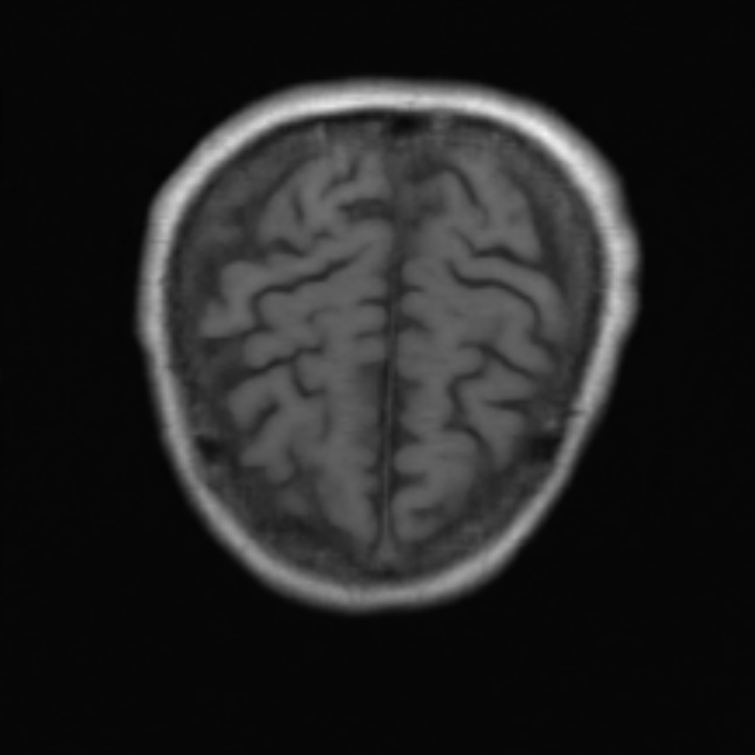}};
      \begin{scope}[x=10,y=10]
        \node[draw,minimum height=0.55cm,minimum width=0.65cm] (B1) at (-1.2,6.7) {};
        \node (img1) at (-4.2,1.4) {\includegraphics[width=0.07\textwidth,trim={2.3cm 3.8cm 4cm 2.7cm}, clip]{files/acquisition_shift/vt1_sup_on_t1.pdf}};
        \draw (img1.south west) rectangle (img1.north east);
        \draw (B1) -- (img1);
      \end{scope}
    \end{tikzpicture} \end{adjustbox} &
  \begin{adjustbox}{valign=m}
    \begin{tikzpicture}[boximg]
      \node[anchor=south] (img) {\includegraphics[width=0.25\textwidth]{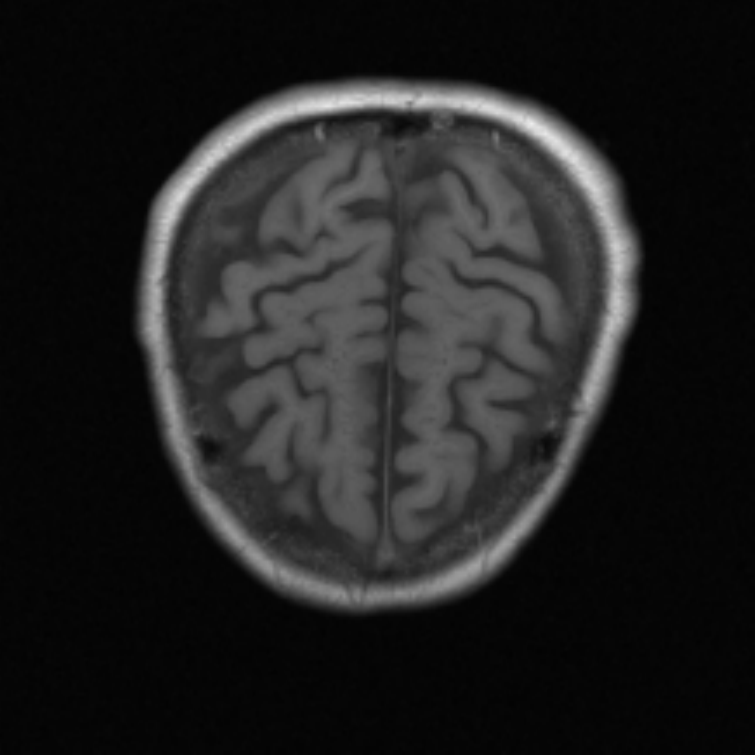}};
      \begin{scope}[x=10,y=10]
        \node[draw,minimum height=0.55cm,minimum width=0.65cm] (B1) at (-1.2,6.7) {};
        \node (img1) at (-4.2,1.4) {\includegraphics[width=0.07\textwidth,trim={2.3cm 3.8cm 4cm 2.7cm}, clip]{files/acquisition_shift/vorig.pdf}};
        \draw (img1.south west) rectangle (img1.north east);
        \draw (B1) -- (img1);
      \end{scope}
    \end{tikzpicture} \end{adjustbox}
\end{tabular}
\end{adjustbox}
\captionsetup{skip=10pt}
\captionof{figure}{Including self-supervision while training DL models combined with TTT improves model robustness to natural modality shifts. The AXT1PRE sample belongs to the fastMRI brain validation dataset.}
\label{fig:acq-shift}
\end{table*}

\begin{table*}[h!]
\setlength{\tabcolsep}{1pt}
\centering
\begin{adjustbox}{width=0.85\textwidth}
\begin{tabular}{ccccc}
  & \begin{tabular}{@{}c@{}} trained on 4x\\  (supervised) \\[5pt]  SSIM: 0.8572 \end{tabular}  & \begin{tabular}{@{}c@{}} trained on 4x + TTT \\  (self-supervision included) \\[5pt] \footnotesize SSIM: 0.9107 \end{tabular} & \begin{tabular}{@{}c@{}}  trained on 2x \\ (supervised) \\[5pt] \footnotesize SSIM: 0.9122 \end{tabular} & \begin{tabular}{@{}c@{}} ground truth\\[25pt] \end{tabular} \\
  \rule{0pt}{3ex}
  U-Net &
  \begin{adjustbox}{valign=m}
    \begin{tikzpicture}[boximg]
      \node[anchor=south] (img) {\scalebox{1}[-1]{\includegraphics[width=0.25\textwidth]{files/acceleration_shift/4x_sup_on_2x.pdf}}};
      \begin{scope}[x=10,y=10]
        \node[draw,minimum height=0.55cm,minimum width=0.65cm] (B1) at (-2.4,6.9) {};
        \node (img1) at (-4.2,1.4) {\scalebox{1}[-1]{\includegraphics[width=0.07\textwidth,trim={1.5cm 2.6cm 4.8cm 3.9cm}, clip]{files/acceleration_shift/4x_sup_on_2x.pdf}}};
        \draw (img1.south west) rectangle (img1.north east);
        \draw (B1) -- (img1);
      \end{scope}
  \end{tikzpicture} \end{adjustbox} &
  \begin{adjustbox}{valign=m}
    \begin{tikzpicture}[boximg]
      \node[anchor=south] (img) {\scalebox{1}[-1]{\includegraphics[width=0.25\textwidth]{files/acceleration_shift/4x_self_on_2x_ttt.pdf}}};
      \begin{scope}[x=10,y=10]
        \node[draw,minimum height=0.55cm,minimum width=0.65cm] (B1) at (-2.4,6.9) {};
        \node (img1) at (-4.2,1.4) {\scalebox{1}[-1]{\includegraphics[width=0.07\textwidth,trim={1.5cm 2.6cm 4.8cm 3.9cm}, clip]{files/acceleration_shift/4x_self_on_2x_ttt.pdf}}};
        \draw (img1.south west) rectangle (img1.north east);
        \draw (B1) -- (img1);
      \end{scope}
  \end{tikzpicture} \end{adjustbox} &
  \begin{adjustbox}{valign=m}
    \begin{tikzpicture}[boximg]
      \node[anchor=south] (img) {\scalebox{1}[-1]{\includegraphics[width=0.25\textwidth]{files/acceleration_shift/2x_sup_on_2x.pdf}}};
      \begin{scope}[x=10,y=10]
        \node[draw,minimum height=0.55cm,minimum width=0.65cm] (B1) at (-2.4,6.9) {};
        \node (img1) at (-4.2,1.4) {\scalebox{1}[-1]{\includegraphics[width=0.07\textwidth,trim={1.5cm 2.6cm 4.8cm 3.9cm}, clip]{files/acceleration_shift/2x_sup_on_2x.pdf}}};
        \draw (img1.south west) rectangle (img1.north east);
        \draw (B1) -- (img1);
      \end{scope}
  \end{tikzpicture} \end{adjustbox} &
  \begin{adjustbox}{valign=m}
    \begin{tikzpicture}[boximg]
      \node[anchor=south] (img) {\scalebox{1}[-1]{\includegraphics[width=0.25\textwidth]{files/acceleration_shift/orig.pdf}}};
      \begin{scope}[x=10,y=10]
        \node[draw,minimum height=0.55cm,minimum width=0.65cm] (B1) at (-2.4,6.9) {};
        \node (img1) at (-4.2,1.4) {\scalebox{1}[-1]{\includegraphics[width=0.07\textwidth,trim={1.5cm 2.6cm 4.8cm 3.9cm}, clip]{files/acceleration_shift/orig.pdf}}};
        \draw (img1.south west) rectangle (img1.north east);
        \draw (B1) -- (img1);
      \end{scope}
  \end{tikzpicture} \end{adjustbox} \\
  \rule{0pt}{3ex}
  & \footnotesize SSIM: 0.8804 & \footnotesize SSIM: 0.9038 & \footnotesize SSIM: 0.9173 & \\
  \rule{0pt}{3ex}
  VarNet &
  \begin{adjustbox}{valign=m}
    \begin{tikzpicture}[boximg]
      \node[anchor=south] (img) {\scalebox{1}[-1]{\includegraphics[width=0.25\textwidth]{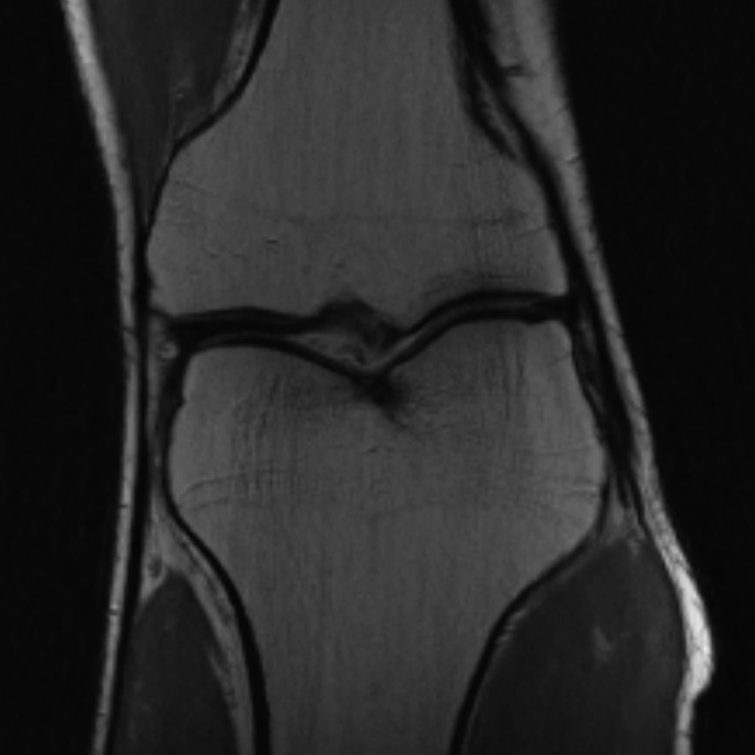}}};
      \begin{scope}[x=10,y=10]
        \node[draw,minimum height=0.55cm,minimum width=0.65cm] (B1) at (-2.4,6.9) {}; 
        \node (img1) at (-4.2,1.4) {\scalebox{1}[-1]{\includegraphics[width=0.07\textwidth,trim={1.5cm 2.6cm 4.8cm 3.9cm}, clip]{files/acceleration_shift/v4x_sup_on_2x.pdf}}};
        \draw (img1.south west) rectangle (img1.north east);
        \draw (B1) -- (img1);
      \end{scope}
  \end{tikzpicture} \end{adjustbox} &
  \begin{adjustbox}{valign=m}
    \begin{tikzpicture}[boximg]
      \node[anchor=south] (img) {\scalebox{1}[-1]{\includegraphics[width=0.25\textwidth]{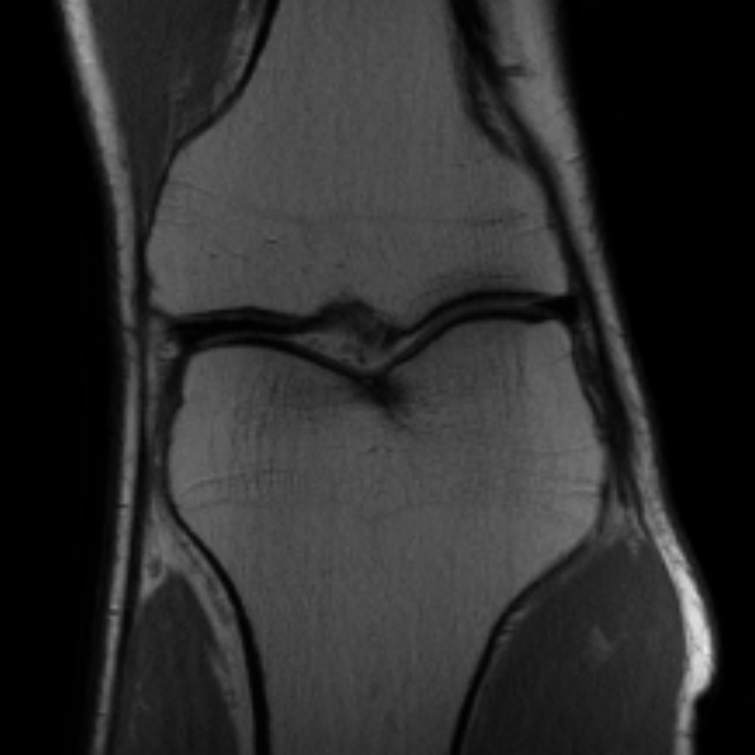}}};
      \begin{scope}[x=10,y=10]
        \node[draw,minimum height=0.55cm,minimum width=0.65cm] (B1) at (-2.4,6.9) {};
        \node (img1) at (-4.2,1.4) {\scalebox{1}[-1]{\includegraphics[width=0.07\textwidth,trim={1.5cm 2.6cm 4.8cm 3.9cm}, clip]{files/acceleration_shift/v4x_self_on_2x_ttt.pdf}}};
        \draw (img1.south west) rectangle (img1.north east);
        \draw (B1) -- (img1);
      \end{scope}
  \end{tikzpicture} \end{adjustbox} &
  \begin{adjustbox}{valign=m}
    \begin{tikzpicture}[boximg]
      \node[anchor=south] (img) {\scalebox{1}[-1]{\includegraphics[width=0.25\textwidth]{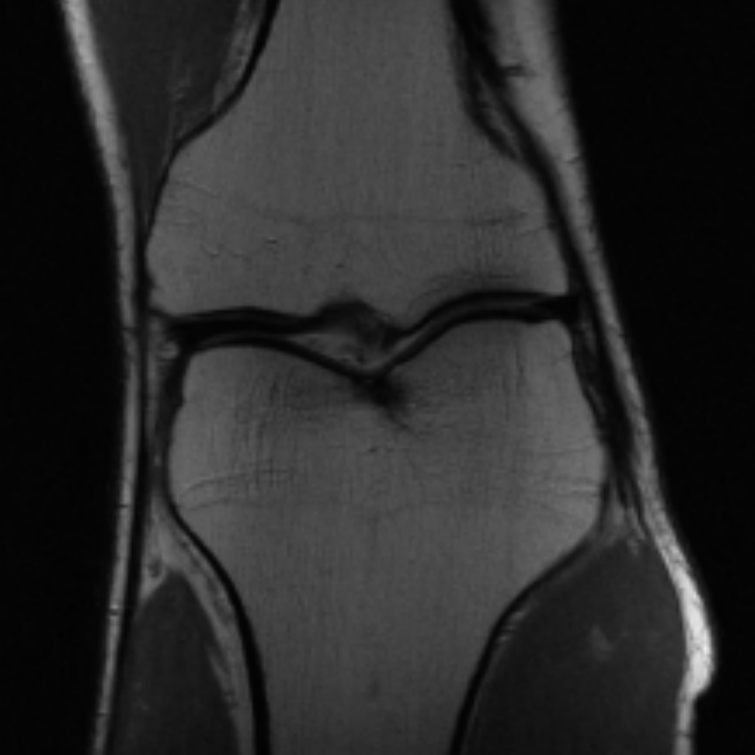}}};
      \begin{scope}[x=10,y=10]
        \node[draw,minimum height=0.55cm,minimum width=0.65cm] (B1) at (-2.4,6.9) {};
        \node (img1) at (-4.2,1.4) {\scalebox{1}[-1]{\includegraphics[width=0.07\textwidth,trim={1.5cm 2.6cm 4.8cm 3.9cm}, clip]{files/acceleration_shift/v2x_sup_on_2x.pdf}}};
        \draw (img1.south west) rectangle (img1.north east);
        \draw (B1) -- (img1);
      \end{scope}
  \end{tikzpicture} \end{adjustbox} &
  \begin{adjustbox}{valign=m}
    \begin{tikzpicture}[boximg]
      \node[anchor=south] (img) {\scalebox{1}[-1]{\includegraphics[width=0.25\textwidth]{files/acceleration_shift/orig.pdf}}};
      \begin{scope}[x=10,y=10]
        \node[draw,minimum height=0.55cm,minimum width=0.65cm] (B1) at (-2.4,6.9) {};
        \node (img1) at (-4.2,1.4) {\scalebox{1}[-1]{\includegraphics[width=0.07\textwidth,trim={1.5cm 2.6cm 4.8cm 3.9cm}, clip]{files/acceleration_shift/orig.pdf}}};
        \draw (img1.south west) rectangle (img1.north east);
        \draw (B1) -- (img1);
      \end{scope}
  \end{tikzpicture} \end{adjustbox}
\end{tabular}
\end{adjustbox}
\captionsetup{skip=10pt}
\captionof{figure}{Including self-supervision while training DL models combined with TTT improves model robustness to natural acceleration shifts. The sample belongs to the fastMRI knee validation dataset and the pointed region reveals how each setup shines or fails at reconstruction. VarNet, unlike U-Net, does not include a region of artifact and the overall contrast of the image has changed under the shift.}
\label{fig:acc-shift}
\end{table*}

\end{document}